
\documentclass[onecolumn, draftcls, 12pt]{IEEEtran}


\usepackage{cite}

\usepackage{graphics}

%
\usepackage[cmex10]{amsmath}
\usepackage{amsfonts}
\usepackage{amsthm}

%
\usepackage{algorithm}
\usepackage{algpseudocode}


\usepackage[svgnames]{xcolor} 

\usepackage{graphicx}
\usepackage{color, psfrag}
\usepackage{subfigure}

\usepackage{tkz-euclide,subfigure}
\usepackage{pstricks,pst-plot,pst-node}
\usepackage{tikz}
\usetikzlibrary{arrows,matrix,positioning}
\usepackage[section]{placeins}
\usepackage{pgfplots}
\usepgfplotslibrary{groupplots}
\usepgfplotslibrary{external}

\usepackage{hyperref}
\hypersetup{colorlinks=false, pdfborder={0 0 0}}

\usepackage[noabbrev]{cleveref}
\crefformat{equation}{#2(#1)#3}
\Crefformat{equation}{#2(#1)#3}

\definecolor{rosso}{RGB}{220,57,18}
\definecolor{giallo}{RGB}{255,153,0}
\definecolor{blu}{RGB}{102,140,217}
\definecolor{verde}{RGB}{16,150,24}
\definecolor{viola}{RGB}{153,0,153}

\pgfplotsset
    {
every axis plot/.append style={line width=1pt},
myStandard/.style=
          {
          		compat=newest,
    					trim axis left,
    					plot coordinates/math parser=false,
    					every axis/.append style={
        					xlabel shift=0.5em,
        					ylabel shift=-0.5em
    					},
          			   scale only axis,
                       title={},
                       grid=major,
                       xlabel near ticks,
                       ylabel near ticks,
                       cycle list={%
											              {blu,mark=pentagon,mark options=solid},
                                    {rosso,mark=square,mark options=solid},
																		{verde,mark=triangle,mark options=solid},																														
                                    {viola,mark=diamond,mark options=solid},
                                    {giallo,mark=o,mark options=solid},																		
                                    {red,mark=x,mark options=solid},
                                    {green,mark=+,mark options=solid},
                                    {black,mark=oplus,mark options=solid},											
					    },
                       scaled y ticks = true, tick label style={/pgf/number format/fixed},
                       }
       }
   
\pgfplotsset
{
	every axis plot/.append style={line width=1pt},
	myStandard2/.style=
	{
		compat=newest,
		trim axis left,
		plot coordinates/math parser=false,
		every axis/.append style={
			xlabel shift=0.5em,
			ylabel shift=-0.5em
		},
		scale only axis,
		title={},
		grid=major,
		xlabel near ticks,
		ylabel near ticks,
		cycle list={%
			{blu,mark=pentagon,mark options=solid},
			{rosso,mark=square,mark options=solid},
			{blu,mark=triangle,mark options=solid},
			{rosso,mark=diamond,mark options=solid},										
		},
		scaled y ticks = true, tick label style={/pgf/number format/fixed},
	}
}
       
       

\pgfplotsset
    {
every axis plot/.append style={line width=1pt},
myStandardCDF/.style=
          {
          			    compat=newest,
    					trim axis left,
    					plot coordinates/math parser=false,
    					every axis/.append style={
        					xlabel shift=0.5em,
        					ylabel shift=-0.5em
    					},
          			   scale only axis,
                       title={},
                       grid=major,
                       xlabel near ticks,
                       ylabel near ticks,
                       cycle list={
											              {blu},
                                    {blu,dashed},
                                    {rosso},
                                    {rosso,dashed},
                                    {verde},
                                    {verde,dashed},																		
                                    {viola},
                                    {viola,dashed},
                                    {red},
                                    {red,dashed},
                                    {green},
                                    {green,dashed},
                                    {giallo},
                                    {giallo,dashed},	
																		{black},
                                    {black,dashed}
					    },
                       scaled y ticks = true, tick label style={/pgf/number format/fixed},
                       }
       }



\usepackage[utf8]{inputenc}
\usepackage[english]{babel}

\newtheorem{theorem}{Theorem}[section]

\newtheorem{lemma}[theorem]{Lemma}
\newtheorem{remark}{Remark}[section]

\usepackage{bm}


\newcommand{\ma}[1]{\mathbf{ #1 }}         


\newcommand{\compl}{\mathbb{C}}        
\newcommand{\real}{\mathbb{R}}         

\newcommand{\overbar}[1]{\mkern 0.5mu\overline{\mkern-0.5mu#1\mkern-0.5mu}\mkern 0.5mu}

\newcounter{mytempeqncnt1}

\newcommand{\MinorRR}[1]{{\color[rgb]{0.0,0.0,0}#1}}

\newcommand{\MinorRevOmid}[1]{{\color[rgb]{0.0,0.0,0}#1}}

\newcommand{\revOmid}[1]{{\color[rgb]{0.0,0.0,0}#1}}
\newcommand{\revPeter}[1]{{\color[rgb]{0.0,0.0,0.0}#1}}
\newcommand{\revPeterNew}[1]{{\color[rgb]{0.0,0.0,0}#1}}
\newcommand{\revPeterNewNew}[1]{{\color[rgb]{0.0,0,0}#1}}
\usepackage{enumitem}

\begin{document}
\title{Secrecy Energy Efficiency of MIMOME Wiretap Channels with Full-Duplex Jamming}
		\author{Omid Taghizadeh,~Peter~Neuhaus,~Rudolf~Mathar,~\IEEEmembership{Senior~Member,~IEEE}~and~Gerhard~Fettweis,~\IEEEmembership{Fellow,~IEEE}
			\IEEEcompsocitemizethanks{
							\IEEEcompsocthanksitem{O.~Taghizadeh and R.~Mathar are with the Institute for Theoretical Information Technology, RWTH Aachen University, Aachen, 52074, Germany (e-mail: \{taghizadeh, mathar\}@ti.rwth-aachen.de).}
							\IEEEcompsocthanksitem{\revOmid{P.~Neuhaus and G.~Fettweis are with the Vodafone Chair Mobile Communications Systems, Technische Universit\"at Dresden, Dresden 01062, Germany (e-mail: \{peter.neuhaus, gerhard.fettweis\}@ifn.et.tu-dresden.de).}}
							\IEEEcompsocthanksitem{\revOmid{Part of this work has been presented at the 2018 IEEE Wireless Communications and Networking Conference (WCNC'18), Barcelona, Spain~\cite{secrecy_WCNC_taghizadeh}.}}} } 
              
\maketitle
 \vspace{-20mm}
\begin{abstract} 
Full-duplex (FD) jamming transceivers are recently shown to enhance the information security of wireless communication systems by simultaneously transmitting artificial noise (AN) while receiving information. In this work, we investigate if FD jamming can also improve the \revPeterNew{system's} secrecy energy efficiency (SEE) in terms of securely communicated bits-per-Joule, when considering the additional power used for jamming and self-interference (SI) cancellation. \revOmid{Moreover, the degrading effect of the residual SI is also taken into account.} In this regard, \revOmid{we formulate} a set of SEE maximization problems for \revPeterNew{a} FD multiple-input-multiple-output multiple-antenna eavesdropper (MIMOME) wiretap channel, \revOmid{considering both} cases where exact or statistical channel state information (CSI) is available. Due to the intractable problem structure, \revOmid{we propose iterative solutions in each case with a proven convergence to a stationary point.} Numerical simulations indicate only a marginal SEE gain, \revPeterNew{through} the utilization of FD jamming, for a wide range of system conditions. However, \revOmid{when SI can efficiently be mitigated, the observed gain is considerable} for scenarios with a small distance between the FD node and the eavesdropper, a high signal-to-noise ratio (SNR), or for a bidirectional FD communication setup.
\end{abstract}

\begin{keywords}
Full-duplex, friendly jamming, secrecy capacity, wiretap channel, energy efficiency, MIMO.
\end{keywords}

\IEEEpeerreviewmaketitle

\section{Introduction} \label{sec:into}


The information security of wireless communication systems is currently addressed via cryptographic approaches, at the upper layers of the protocol stack. However, these approaches are prone to attack due to the ever-increasing computational capability of the digital processors, and suffer \revOmid{from the} issues regarding management and distribution of secret keys~\cite{6739367, GKJPO13}. Alternatively, physical layer security takes advantage of physical characteristics of the communication medium in order to provide a secure data exchange between the information transmitter (Alice) and the legitimate receiver (Bob). In the seminal work by Wyner \cite{6772207}, the concept of secrecy capacity is introduced for a three-node degraded wiretap channel, as the maximum information rate that can be exchanged \revPeterNew{under the condition of perfect secrecy}. It is shown that a positive secrecy capacity is achievable when the physical channel to the eavesdropper is \revOmid{weaker than} the channel to the legitimate receiver. The arguments of \cite{6772207} {have since been extended} in the directions of secrecy rate region analysis for various wiretap channel models \cite{4529282,5730586,5961840}, construction of capacity achieving channel codes \cite{6283924,6034749, 5545658}, as well as signal processing techniques for enhancing the \revOmid{secrecy capacity}, see \cite{4960114, 4543070, 7775096, 6872551, 6266769, 5940246} and references therein. \revOmid{As a result, physical layer security is considered as a \revPeterNew{possible solution} for providing information security in different use cases \revPeterNew{for} future wireless communication systems, e.g.,~\cite{phy_5G1, phy_5G2}.}\par

As a promising method to enhance the secrecy capacity of wireless \revOmid{communication} systems, Goel and Negi \cite{4543070} have introduced the idea of friendly jamming, i.e., the \revOmid{transmission of an artificially-generated noise} (AN) with the intention of degrading the decoding capability \revPeter{of} the eavesdropper. This can be implemented via \emph{i)} the joint transmission of information and AN from Alice, however, requiring an effective beamforming capability at Alice and sharing the communication and jamming resources~\cite{4960114, 4543070}, \emph{ii)} utilization of external (cooperative) jammers, nonetheless, resulting in the issues of jammer mobility, synchronization, and trustworthiness \cite{5730586, 6266769, 5940246,  7775096, 6872551}, and \emph{iii)} via the application of full-duplex (FD) nodes~\cite{GKJPO13}. \revOmid{An FD transceiver is capable of transmission and reception at the same time and frequency \revPeterNew{band}, however, suffering from the \revOmid{strong SI} \revPeterNew{which is caused by its own transmitter}}. \revOmid{The recently developed methods for self-interference cancellation (SIC) \cite{FD_SIC_Passive,FD_SIC_khandani,BMK:13}, have demonstrated \revPeterNew{practical implementations} of FD transceivers in the last few years, employing various techniques including passive isolation, antenna design and placement, and analog and digital signal processing methods. The implementation in \cite{BMK:13} \revPeterNew{achieves a SIC level of $110$,} complying with the IEEE $802.11$ac (WiFi) standard. The extension of the aforementioned works for multiple antenna transceivers have been studied in \cite{FD_SIC_softnull, Bharadia:14, FD_SIC_princeton} for different setups, and motivated a wide range of related applications, see \cite{HBCJMKL:14} and the references therein.} In this regard, \revPeterNew{a} FD Bob can act as a friendly jammer, while simultaneously receiving information from Alice. Note that \revPeterNew{a} FD jammer does not occupy the communication resources from Alice, nor does it rely on the external helpers, resolving the related drawbacks. The problems regarding secrecy rate region analysis and resource optimization \revPeter{have hence been} addressed in \cite{GKJPO13, 7945480, 7339654, ZGYZ:14, 6787008, 7792199, secrecy_ICC_taghizadeh}. It is observed that a significant gain is achievable, in terms of the resulting secrecy capacity via the utilization of \revPeterNew{\revOmid{a}} FD jamming strategy under the condition that the \revOmid{SI} signal can be \revPeter{attenuated effectively}.\par             
\revOmid{Although the available literature introduces a gainful utilization of FD transceivers for enhancing the \revPeterNew{system's} secrecy capacity, the aforementioned gain comes at the expense of a higher power consumption due to \emph{i)} the implementation of \revOmid{a} SIC scheme at the FD transceiver, incurring additional digital processing and analog circuitry, \emph{ii)} the power consumed for the transmission of AN, as well as \emph{iii)} the degrading impact of residual self-interference on the desired communication link.} As a result, it is not clear how the FD jamming-enabled systems perform in terms of secrecy energy efficiency (SEE). Note that the issue of energy efficiency is recently raised as a key criteria in the design of wireless communication systems. This is justified \revPeterNew{by the exponential increase of information and communication technology (ICT) services}, currently generating around $5\%$ of global $\text{CO}_2$ emissions and \revPeterNew{with an expected share of $75\%$ for wireless systems in $2020$}, which calls for novel energy efficient ICT solutions \cite{7446253, 5978416}. Hence, it is the main purpose of this paper to investigate if and how the application of FD jammers can enhance the system's SEE, in terms of the securely communicated bits per Joule (SBPJ).\vspace{-3mm} \vspace{-1mm}
\vspace{-1mm}
\subsection{Contribution and paper organization}
\begin{itemize}[leftmargin=*]
\item In this work, \revOmid{we study \revPeterNew{an}} SEE [SBPJ] maximization problem for a general MIMOME setup, where the legitimate receiver is capable of FD jamming. \revPeterNew{This stays in contrast to} available designs which utilize FD transceivers for improving the secrecy capacity~\cite{7339654, ZGYZ:14, 6787008, 7792199, secrecy_ICC_taghizadeh, 7945480} or the studies on the SEE of half-duplex (HD) networks~\cite{zappone2016energy,7094604, 6199997, 6476945}. Due to the intractable problem structure, \revOmid{an iterative algorithm} is proposed, with a guaranteed convergence to a point satisfying \revPeterNew{the Karush-Kuhn-Tucker (KKT) conditions of the original problem}. 

\item \revPeterNew{The joint utilization of FD capabilities, both on Alice and Bob, for jamming and bi-directional information exchange, shows additional potential for SEE improvement}. This is grounded on the fact that, firstly, the FD jamming power is reused for both communication directions, resulting in power-efficient jamming, and secondly, the coexistence of two communication directions on the same channel may degrade Eve's decoding capability. Motivated by this, the \revOmid{proposed iterative algorithm is extended for} \revPeterNew{a bidirectional FD} setup. 

\item In order to account for \revOmid{channel state information (CSI)} uncertainties, the consideration of statistical CSI regarding the channels to Eve has been introduced in \cite{lin2013secrecy, zappone2016energy}, considering HD nodes. However, the aforementioned works limit the studied setups to a single antenna Eve, where CSI statistics follow a specific fast-fading nature. In this work \revPeterNew{\revOmid{an}} SEE maximization problem is studied for \revPeterNew{\revOmid{a}} FD-enabled MIMOME setup, where the channels to Eve follow an arbitrary statistical distribution. Note that unlike the fast-fading \revPeterNew{condition}, which assumes the CSI is not available due to mobility, the consideration of an arbitrary statistical distribution also accounts for the scenarios \revPeterNew{where Eve is stationary, but Eve's CSI cannot be obtained due to the lack of collaboration from Eve}. Hence, we propose a successive selection and statistical lower bound maximization (SSSLM) algorithm, utilizing a combination of \revOmid{sample average} approximation~\cite{kim2015guide}, \revPeterNew{\revOmid{and}} successive lower bound approximation method \cite{razaviyayn2013unified}, with the goal of maximizing the \revPeterNew{statistical average SEE}. The algorithm is proven to converge to a point satisfying the KKT conditions \revPeterNew{of the original problem}. 
\end{itemize}
The numerical results indicate only a marginal SEE gain, \revPeterNew{through} the utilization of FD jamming, for a wide range of system \revPeterNew{parameters}. However, the observed SEE gain is notable for \revPeterNew{systems with a small distance between the FD node and the eavesdropper, a high \revPeter{signal-to-noise ratio (SNR)}, or for a bidirectional FD communication setup, if SI can efficiently be mitigated.}

\revOmid{The studied system model is defined in Section~\ref{sec:model}. When \revPeterNew{only the legitimate receiver (Bob)} is capable of FD jamming, an SEE maximization framework is introduced in Section~\ref{sec_SEE_max}, and then extended \revPeterNew{to a bidirectional} FD communication setup in Section~\ref{sec_SEE_max_BD}. \revPeterNew{In} Section~\ref{sec_SSUM}, \revPeterNew{an} SEE maximization framework is introduced \revPeterNew{for the case} when the channels to the eavesdropper are not accurately known. The behavior of the proposed algorithms, as well as the impact of the FD jamming on the SEE performance are numerically studied in Section~\ref{sec:simulations}. This paper is concluded in Section~\ref{sec:conclusion} by summarizing the main results.} 
\subsection{Mathematical notation}
Throughout this paper, column vectors and matrices are denoted as lower-case and {upper-case} bold letters, respectively. Mathematical expectation, trace, determinant, {and} Hermitian transpose are denoted by $ \mathbb{E}(\cdot), \; {\text{ tr}}(\cdot), \;  |\cdot|,$ {and} $(\cdot)^{ H},$ respectively. The Kronecker product is denoted by $\otimes$. The identity matrix with dimension $K$ is denoted as ${\ma I}_K$ and ${\rm vec}(\cdot)$ operator stacks the elements of a matrix into a vector. {Moreover,} $(\cdot)^{-1}$ represents the inverse of a matrix and $||\cdot||_{2},||\cdot||_{\text{F}}$ {respectively represent the Euclidean and Frobenius norms}. {\MinorRR{$\ma{0}_{m \times n}$ represents an all-zero matrix with size $m \times n$.}} $\text{diag}(\cdot)$ returns a diagonal matrix by putting the off-diagonal elements to zero. \revPeter{$\bot$ denotes statistical independence.} The set $\mathbb{F}_K$ is defined as $\{1,2,\ldots,K\}$, and $|\mathbb{X}|$ denotes the size of the set $\mathbb{X}$. \revPeter{The set of positive real numbers,} the set of complex numbers, and the set of all positive semi-definite matrices with Hermitian symmetry are denoted by $\mathbb{R}^+$, $\compl$ and $\mathcal{H}$, respectively. $a^\star$ indicates the value of $a$ for which optimality holds. The value of $\{x\}^+$ is equal to $x$, if positive, and zero otherwise. \revPeter{Furthermore, $\mathcal{CN} \left( \ma{x}, \ma{X} \right)$ denotes the complex normal distribution with mean $\ma{x}$ and covariance $\ma{X}$.}

\vspace{-6mm}
\section{System Model}\label{sec:model}
\begin{figure}[!t] 
    \begin{center}
        \includegraphics[angle=0,height=4cm, width=0.5\columnwidth]{./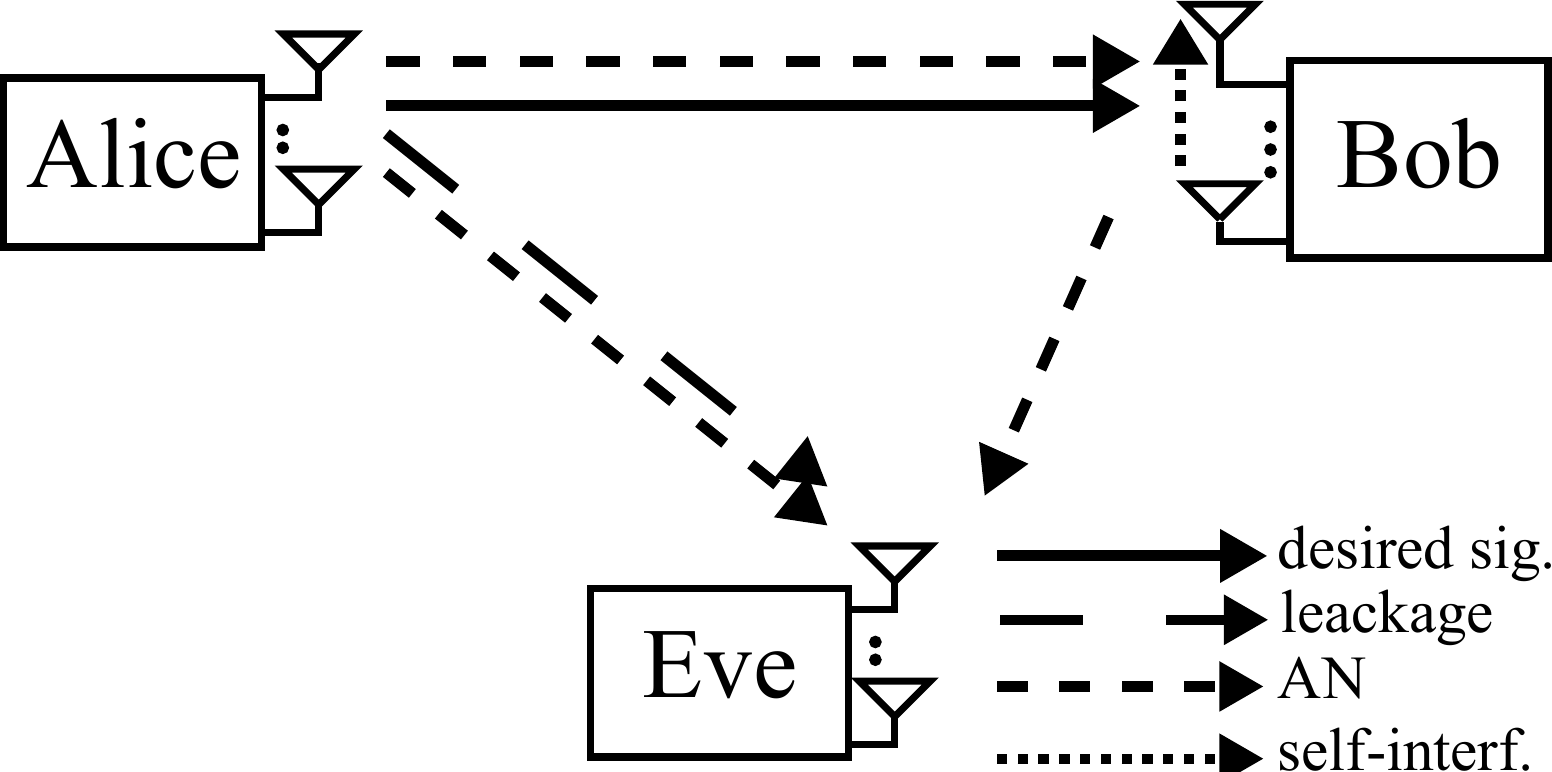} \vspace{-5mm} 
    \caption{The studied wiretap channel. Alice and the FD-Bob are jointly enabled with jamming capability.} \label{fig_model}
    \end{center}  
\end{figure} 
We consider a \revPeter{MIMOME} wiretap channel \revPeter{that consists of} a legitimate transmitter, i.e., Alice, a legitimate receiver, i.e., Bob, and an eavesdropper, i.e., Eve, see Fig.~\ref{fig_model}. Alice and Eve are equipped with $N_{A}$ transmit and $M_{E}$ receive antennas, respectively. Bob is respectively equipped with $N_{B}$ and $M_{B}$ transmit and receive antennas, and is capable of FD operation. \MinorRevOmid{Note that the FD operation at Bob does not indicate an FD communication, as Alice is assumed to be an HD node in this section. However, the FD capability at Bob is used to send a jamming signal while receiving information, and thereby to degrade the decoding capability at the undesired receiver. The joint FD operation at Alice and Bob, facilitating an FD communication, is later introduced in Section~\ref{sec_SEE_max_BD}.} Channels are assumed to follow a quasi-stationary\footnote{It means that the channel remains constant within a frame, but may change from one frame to another.} and flat-fading model. In this regard, channel from Alice to Bob, Alice to Eve, and Bob to Eve (jamming channel) are respectively denoted as $\ma{H}_{ab}\in \compl^{M_{B} \times  N_{A}}$, $\ma{H}_{ae} \in \compl^{ M_{E} \times N_{A} }$, $\ma{H}_{be} \in \compl^{M_{E}\times N_{B}}$. The channel from Bob to Bob, i.e., self-interference channel, is denoted as $\ma{H}_{bb}\in \compl^{M_{B} \times N_{B}}$. 

\subsection{Signal model} \label{sec_model_signalmodel}
The transmission from Alice includes the information-containing signal, intended for Bob, and \revPeter{an} AN signal\footnote{\revOmid{Unlike the data symbols, which follow a known constellation, the AN is generated from a pseudo-random sequence which is not known to the receivers, see \cite[Section~III]{4543070}.  This ensures that Eve cannot decode the AN.}}, intended to degrade the reception by Eve. This is expressed as 
\begin{align} \label{eq_model_tx_alice}
\ma{x}_a =  \underbrace{\ma{q}_a + \ma{w}_a}_{\ma{u}_a} + \ma{e}_{\text{tx},a}, 
\end{align}  
where $\ma{u}_a \in \compl^{{N_{A}}}$ is the intended transmit signal, $\ma{q}_a \sim \mathcal{CN} \left( \ma{0}_{N_{A}}, \ma{Q}_a \right)$ and $\ma{w}_a \sim \mathcal{CN} \left( \ma{0}_{N_{A}}, \ma{W}_a \right)$ respectively represent the information-containing and AN signal, and $\ma{x}_a \in \compl^{N_{A}}$ is the combined transmitted signal from Alice. The transmit distortion, denoted as $\ma{e}_{\text{tx},a} \in \compl^{N_{A}}$ models collective \revPeter{impact} of transmit chain inaccuracies, e.g., \revPeterNew{\revOmid{digital-to-analog \revPeter{converter} noise}}, power amplifier (PA) noise, oscillator phase noise, see Subsection~\ref{sec_model_diststatistics} for more details. Note that the role of hardware inaccuracies becomes important in a system with FD transceivers, due to the impact of \revPeter{a} strong self-interference channel. Similar to the transmission from Alice, the transmission of AN by Bob is expressed as 
\begin{align} \label{eq_model_tx_bob}
\ma{x}_b =  \ma{w}_b + \ma{e}_{\text{tx},b}, 
\end{align}    
where $\ma{w}_b \sim \mathcal{CN} \left( \ma{0}_{N_{B}}, \ma{W}_b \right)$ is the transmitted artificial noise and $\ma{e}_{\text{tx},b} \in \compl^{N_{B}}$ represents the transmit distortions from Bob. Via the application of (\ref{eq_model_tx_alice}) and (\ref{eq_model_tx_bob}) the received signal at Eve is expressed as 
\begin{align} \label{eq_model_rx_eve}
\ma{y}_e &=  \ma{H}_{ae}\ma{x}_a +  \ma{H}_{be}\ma{x}_b + \ma{n}_e = \ma{H}_{ae}\ma{q}_a + \ma{c}_e ,
\end{align}
where $\ma{n}_e \sim \mathcal{CN} \left( \ma{0}_{M_E}, \sigma_{\text{n},e}^2 \ma{I}_{M_E} \right)$ is the additive thermal noise and 
\begin{align} \label{eq_model_collective_interf}
\ma{c}_e := \ma{H}_{ae}\ma{w}_a + \ma{H}_{be}\ma{w}_b + \ma{H}_{ae}\ma{e}_{\text{tx},a} + \ma{H}_{be}\ma{e}_{\text{tx},b} + \ma{n}_e
\end{align}
is the collective interference-plus-noise at Eve. Similarly, the received signal at Bob is formulated as 
\begin{align} \label{eq_model_rx_bob}
\ma{y}_b =  \underbrace{\ma{H}_{ab}\ma{x}_a +  \ma{H}_{bb}\ma{x}_b  + \ma{n}_b}_{=: \ma{u}_b} + \ma{e}_{\text{rx},b},  
\end{align} 
where $\ma{n}_b \sim \mathcal{CN} \left( \ma{0}_{M_{B}}, \sigma_{\text{n},b}^2\ma{I}_{M_{B}} \right)$ is the additive thermal noise, and $\ma{u}_b$ is the received signal, assuming perfect hardware operation. Similar to the transmit side, the receiver side distortion, denoted as $\ma{e}_{\text{rx},b} \in \compl^{M_{B}}$, models the collective impact of receiver chain inaccuracies, e.g., \revPeterNew{\revOmid{analog-to-digital \revPeter{converter} noise}}, oscillator phase noise, and automatic gain control error, see Subsection~\ref{sec_model_diststatistics}. Note that $\ma{y}_b$ includes the received self-interference signal at Bob, originating from the same transceiver. Hence, the \emph{known}, i.e., distortion-free, part of the self-interference can be subtracted applying \revPeter{a} SIC method~\cite{Bharadia:14,BMK:13}. The received signal at Bob, after the application of SIC is hence written as 
\begin{align} \label{eq_model_rx_bob_afterSIC}
\tilde{\ma{y}}_b & =  {\ma{y}}_b - \ma{H}_{bb}\ma{w}_b  \nonumber \\ &= \ma{H}_{ab} \ma{x}_a  +  \ma{H}_{bb} \ma{e}_{\text{tx},b} + \ma{e}_{\text{rx},b} + \ma{n}_b = \ma{H}_{ab}\ma{q}_a + \ma{c}_b ,	
\end{align} 
where 
\begin{align} \label{eq_model_collective_interf}
\ma{c}_b := \ma{H}_{ab}\ma{w}_a + \ma{H}_{ab}\ma{e}_{\text{tx},a} + \ma{H}_{bb}\ma{e}_{\text{tx},b} + \ma{e}_{\text{rx},b} +  \ma{n}_b,
\end{align}          
is the collective interference-plus-noise at Bob.                         
\vspace{-2mm}\subsection{Distortion signal statistics}\label{sec_model_diststatistics}
Similar to \cite{DMBS:12}, we model the impact of transmit (receive) chain inaccuracies by injecting Gaussian-distributed and independent distortion terms at each antenna\footnote{Eve is assumed to operate with a zero-distortion hardware, considering a worst-case scenario.}. Moreover, the variance of the distortion signals are proportional to the power of the intended transmit (receive) signal at the corresponding chain. This model is elaborated in \cite[Subsections~C]{DMBS:12}, \cite{MITTX:98, MITTX:08} regarding the characterization of hardware impairments in transmit chains, and in \cite[Subsections~D]{DMBS:12}, \cite{MITRX:05} for the receiver chains. This is expressed in our system as
\begin{align} 
\ma{e}_{\text{tx},a}   \sim  \mathcal{CN} & \Big( \ma{0}_{N_{A}},  \kappa_a \text{diag} \Big( \mathbb{E} \left\{\ma{u}_a \ma{u}_a^H \right\} \Big) \Big),   \;\; \ma{e}_{\text{tx},a}  \bot  \ma{u}_a , \label{eq_model_e_tx_a}   \\
\ma{e}_{\text{tx},b}   \sim  \mathcal{CN} & \Big( \ma{0}_{N_{B}},  \kappa_b \text{diag} \Big(\mathbb{E} \left\{\ma{w}_b \ma{w}_b^H \right\} \Big) \Big),  \;\; \ma{e}_{\text{tx},b}  \bot  \ma{w}_b, \label{eq_model_e_tx_b}   \\ 
\ma{e}_{\text{rx},b}   \sim  \mathcal{CN} & \Big( \ma{0}_{M_{B}},  \beta_b \text{diag} \Big( \mathbb{E} \left\{\ma{u}_b \ma{u}_b^H \right\} \Big) \Big), \;\; \ma{e}_{\text{rx},b}  \bot  \ma{u}_{b},    \label{eq_model_e_rx_b} 
\end{align} 
where $\kappa_a, \kappa_b, \beta_b \in \mathbb{R}^+$ are distortion coefficients, relating the variance of the distortion terms to the intended signal power, and $\ma{u}_a$ and $\ma{u}_b$ are defined in (\ref{eq_model_tx_alice}) and (\ref{eq_model_rx_bob}), respectively. For further elaborations on the used distortion model please see \cite{DMBS:12, DMBSR:12, ALRWW:14, XaZXMaXu:15}, and the references therein.  
%
\subsection{Power consumption model}\label{sec_model_powerconsumption}
The consumed power of a wireless transceiver can be segmented into three parts. First, the power consumed at the PA, which is related to the effective transmit power via \revPeterNew{the} PA efficiency, see \cite[Eq.~(2)]{SGB:04}. Secondly, the zero-state power, i.e., the power consumed by other circuit blocks, independent from \revPeterNew{the} transmission status\footnote{This includes, e.g., the power consumed at receiver chain, and for base band processing.}, see \cite[Eq.~(3)]{SGB:04}. And finally, the power consumed for the implementation of \revPeter{a} SIC scheme, \revPeterNew{to enable} FD operation. The aforementioned power varies for different SIC methods, and by definition, is not relevant for HD transceivers. The consumed power for Alice and Bob \revPeter{can hence be} expressed as
\begin{figure*}[!t]
\normalsize
\setcounter{mytempeqncnt1}{\value{equation}}
\setcounter{equation}{14}
{\small{ \begin{align}
\ma{\Sigma}_b =  \mathbb{E}\{\ma{c}_b\ma{c}_b^H\} & = \ma{H}_{ab}\ma{W}_a\ma{H}_{ab}^H + \kappa_a \ma{H}_{ab} \text{diag} \left(\ma{Q}_a+\ma{W}_a\right) \ma{H}_{ab}^H + \kappa_b \ma{H}_{bb} \text{diag} \left( \ma{W}_b \right) \ma{H}_{bb}^H \nonumber\\ 
& \;\;\;\;\;\;\;\;\;\;\;\;\;\;\;\;\;\;\;\;\; + \beta_b \text{diag} \Big( \ma{H}_{ab} \left( \ma{Q}_a + \ma{W}_a \right) \ma{H}_{ab}^H + \ma{H}_{bb} \ma{W}_b \ma{H}_{bb}^H + \sigma_{\text{n},b}^2 \ma{I}_{M_{B}} \Big) + \sigma_{\text{n},b}^2 \ma{I}_{M_{B}},   \label{eq_model_Sigma_b}\\
\ma{\Sigma}_e =  \mathbb{E}\{\ma{c}_e\ma{c}_e^H\} & =  \ma{H}_{ae}\ma{W}_a\ma{H}_{ae}^H + \ma{H}_{be}\ma{W}_b \ma{H}_{be}^H + \kappa_a \ma{H}_{ae} \text{diag} \left(\ma{Q}_a+\ma{W}_a\right) \ma{H}_{ae}^H + \kappa_b \ma{H}_{be} \text{diag} \left( \ma{W}_b \right) \ma{H}_{be}^H + \sigma_{\text{n},e}^2 \ma{I}_{M_{E}}.  \label{eq_model_Sigma_e}
\end{align} }}  
\setcounter{equation}{\value{mytempeqncnt1}}
\hrulefill
\end{figure*}
\setcounter{equation}{10}
\revOmid{\begin{align} \label{eq_model_power_Alice}
P_{A} \left(\ma{Q}_a, \ma{W}_a\right) =   \frac{1 + \kappa_a}{\mu_A} \text{tr}\left(\ma{Q}_a + \ma{W}_a\right) + P_{A,0}, \;\; P_{A} \leq P_{A,\text{max}}
\end{align} 
and
\begin{align} \label{eq_model_power_Bob}
P_{B} \left( \ma{W}_b \right) =   \frac{1 + \kappa_b}{\mu_B} \text{tr}\left(\ma{W}_b\right) + P_{B,0} + P_{\text{FD}} , \;\; P_{B} \leq P_{B,\text{max}}.
\end{align}} 
In the above arguments, $P_{\mathcal{X}}, P_{\mathcal{X},0}, \mu_{\mathcal{X}}$, and $P_{\mathcal{X},\text{max}}$, where $\mathcal{X} \in \{A,B\}$, respectively represent the consumed power, the zero-state power, PA efficiency, and the maximum allowed power consumption for each node. The additional required power for the implementation of \revPeterNew{a} SIC scheme is denoted by $P_{\text{FD}}$. From (\ref{eq_model_power_Alice}) \revPeterNew{and} (\ref{eq_model_power_Bob}), the total system power consumption is obtained as \vspace{-3mm}
\revOmid{ \begin{align} \label{eq_model_power_total}
P_{\text{tot}} \left(\ma{Q}_a, \ma{W}_a, \ma{W}_b\right) = P_{A}\left(\ma{Q}_a, \ma{W}_a\right) +  P_{B}\left( \ma{W}_b \right).
\end{align} } 
\subsection{Secrecy energy efficiency}\label{sec_model_secrecy_EE}
\revOmid{Following \cite{4543070, 5961840, GKJPO13}, the achievable secrecy rate\footnote{\revOmid{The system\revPeterNew{'s} secrecy capacity is lower bounded by all achievable secrecy rates, resulting from different choices of transmit covariance matrices, see \cite[Theorem~1]{5961840}, \cite[Equation~(6)]{4543070}.}} for Alice-Bob communication} is expressed as $C_{{ab}} =  \left\{ \tilde{C}_{{ab}} \right\}^{+}, 
$ such that
\begin{align}   \label{eq_model_ab_cap}
\tilde{C}_{{ab}}=  \text{log} \left|  \ma{I} +  \ma{H}_{ab}\ma{Q}_a\ma{H}_{ab}^H \ma{\Sigma}_b^{-1} \right| - \text{log} \left|  \ma{I} +  \ma{H}_{ae}\ma{Q}_a\ma{H}_{ae}^H \ma{\Sigma}_e^{-1} \right|, 
\end{align} \setcounter{equation}{16}
where $\ma{\Sigma}_b$, $\ma{\Sigma}_e$ are given in (\ref{eq_model_Sigma_b}), (\ref{eq_model_Sigma_e}), and represent the covariance of the interference-plus-noise terms at Bob and Eve, respectively. \revOmid{The SEE}, as a measure of securely communicated information per energy unit, is consequently expressed as
\begin{align} \label{eq_model_Sec_cap}
\text{SEE} = \frac{C_{{ab}}}{P_{\text{tot}}}. 
\end{align}
\revPeter{It} is the intention of the remaining \revPeterNew{sections} of this paper to improve the efficiency of the defined wiretap channel, in terms of the SEE, and provide comparison \revPeterNew{between FD and} usual HD strategies.  
\MinorRevOmid{\begin{remark}
In this part, we have introduced an MIMOME wiretap channel, where Bob is capable of FD operation and sends a jamming signal in order to improve the information security. However, this does not facilitate an FD communication, as Alice remains an HD node. This setup is relevant in the practical asymmetric scenarios, e.g., the uplink of an FD cellular communication system~\cite{7463025}, where users are usually not capable of FD operation. The setup with the joint FD operation at Alice and Bob, facilitating a joint jamming and an FD bidirectional communication, is later discussed in Section~\ref{sec_SEE_max_BD}.
\end{remark}
\begin{remark}
\revOmid{In this part we assume the availability of exact CSI for all channels, relevant to the scenarios with a collaborative eavesdropper, e.g.,~\cite{UntRel:1, UntRel:2}. The scenario with the availability of partial CSI is discussed in Section~\ref{sec_SSUM}. }
\end{remark} }
\begin{remark}
\revOmid{Unlike the data symbols, which follow a known constellation, the AN is generated from a pseudo-random sequence which is not known to the receivers, see \cite[Section~III]{4543070}. This ensures that Eve may not decode the AN \revPeterNew{and hence, cannot cancel the interference caused by the AN transmissions}}
\end{remark}
\vspace{-6mm}
\section{Secrecy Energy Efficiency Maximization} \label{sec_SEE_max}
In this part we intend to enhance the system SEE, assuming the availability of CSI for all channels. The corresponding optimization problem is defined as 
{\begin{subequations}  \label{problem_SEE_max_1}
{ \begin{align}
\underset{ \ma{Q}_a, \ma{W}_a, \ma{W}_b }{\text{max}} \;\; & \text{SEE}\left( \ma{Q}_a, \ma{W}_a, \ma{W}_b \right)  \label{problem_SEE_max_1_a} \\             
\text{s.t.} \;\;\;\;\;\; & {P_{A} \left(\ma{Q}_a, \ma{W}_a\right) \leq P_{A,\text{max}}, \;\;   P_{B} \left( \ma{W}_b \right) \leq P_{B,\text{max}}, \;\; \ma{Q}_a, \ma{W}_a, \ma{W}_b \in \mathcal{H}, }\label{problem_SEE_max_1_c}
\end{align}}      
\end{subequations}} \hspace{-3mm}
where (\ref{problem_SEE_max_1_c}) represent the power constraints at Alice and Bob, see~(\ref{eq_model_power_Alice}),~(\ref{eq_model_power_Bob}). The defined problem in (\ref{problem_SEE_max_1}) is not tractable in the current form, due to the non-convex and non-smooth objective. In order to obtain a tractable structure, without loss of optimality, we remove the non-linear operator $\{\}^+$ from the definition of SEE\footnote{Note that at the optimality of (\ref{problem_SEE_max_1}), the resulting $C_{\text{s}}$, and consequently the SEE is non-negative. This is since a non-negative SEE is immediately obtained by setting $\ma{Q}_a = \ma{0}$, \MinorRevOmid{see Lemma~\ref{lemma_BD_Positive_C} for a generalized proof.}}. The modified SEE, named $\text{SEE}_p$ hereinafter, can \revPeter{hence be} formulated as 
\begin{align}  \label{eq_SEE_max_SEE_p}                                 
&\text{SEE}_p\left( \ma{Q}_a, \ma{W}_a, \ma{W}_b \right) = \frac{ \Sigma_{\mathcal{X} \in \{ b,e\}} \alpha_\mathcal{X}\left( \text{log} \left|  \ma{\Sigma}_\mathcal{X} +  \ma{H}_{a\mathcal{X}}\ma{Q}_a\ma{H}_{a\mathcal{X}}^H  \right| - \text{log} \left|  \ma{\Sigma}_\mathcal{X}  \right|  \right)}{ P_{\text{tot}} \left( \ma{Q}_a, \ma{W}_a, \ma{W}_b \right) },
\end{align}        
where $\alpha_b = 1$ and $\alpha_e = -1$. It is observed that $\text{SEE}_p$ is a difference of concave (DC) over affine fractional function which is intractable in the current form. In the following we propose a successive general inner approximation algorithm (SUIAP) to obtain a {\MinorRR{KKT solution to (\ref{problem_SEE_max_1}), i.e., a solution satisfying the \revOmid{KKT optimality conditions}}}. 
\revOmid{\subsection{SUIAP} \label{subsec_SUIAP}
The proposed SUIAP algorithm consists of two nested loops. The detailed procedure is explained in the following. 
\subsubsection{Initialization} \label{sec_SUIAP_init}
{In this section we briefly discuss the initialization of Algorithm~\ref{SUIAP}.} We separate the choice of spatial beams and power allocation for different transmissions, in order to obtain a fast solution.  
\subsubsection{Spatial adjustment}
The role of the transmit spatial adjustment is to direct the transmit signal to the desired receiver, while preventing leakage to the undesired directions. This is written as the following maximization
\begin{align}   \label{eq_op_SEE_max_ratio_initialization}                                 
\underset{\ma{Q} }{ \text{max}} \;\;  \frac{\text{tr} \left( \ma{F} \ma{Q} \ma{F}^H  \right) + \nu_f}{\text{tr} \left( \ma{G} \ma{Q} \ma{G}^H \right) +  \nu_g}, \;\; \text{s.t.} \;\; \text{tr} \left(\ma{Q}\right) = 1,
\end{align}  
where $\ma{Q}$ represents the normalized covariance matrix, $\ma{F}$ and $\ma{G}$ are the desired and undesired channels, and $\nu_f, \nu_g$ are the noise variances at the desired and undesired receivers, respectively. An optimal solution to (\ref{eq_op_SEE_max_ratio_initialization}) can be obtained as 
\begin{align}  \label{eq_SEE_max_ratio_initialization}                                 
\text{vec}\left( {\ma{Q}^\star}^{\frac{1}{2}} \right) = \mathcal{P}_{\text{max}} \left( \left( \ma{I}\otimes \ma{G}^H\ma{G}    + \nu_g \ma{I} \right)^{-1}  \left( \ma{I}\otimes \ma{F}^H\ma{F} + \nu_f \ma{I}  \right) \right).
\end{align}
where $\mathcal{P}_{\text{max}} \left( \cdot \right)$ calculates the dominant eigenvector. Note that the above approach is applied separately for the spatial adjustment of $\ma{Q}_a,\ma{W}_a$ and $\ma{W}_b$. The corresponding desired and undesired channels are defined in Appendix~\ref{appendix_init_spatialadjustent}.
  
\subsubsection{Power allocation}
The transmit power adjustment for $\ma{Q}_a,\ma{W}_a$ and $\ma{W}_b$ is obtained by applying the normalized covariance in the previous part as the basis. Afterwards, the power for each transmission is optimized to maximize $\text{SEE}_p$, see Appendix~\ref{appendix_init_powAdj}.
 
\subsubsection{Outer loop}                       
In each outer iteration, the optimization problem (\ref{problem_SEE_max_1}) is approximated by replacing the objective with an effective lower bound to $\text{SEE}_p$, following the successive inner approximation (SIA) framework \cite{marks1978technical}. This is implemented by applying the inequality 
\begin{align}  \label{eq_SEE_max_SEE_Taylor}                                 
- \text{log} \left|  \ma{X}\right| \geq - \text{log} \left|  \ma{Y} \right| + \text{tr}\left( \ma{Y}^{-1} \left(  \ma{Y} - \ma{X} \right) \right)
\end{align}  
obtained from the first-order Taylor approximation of the convex terms $- \text{log} \left| \ma{X} \right|$ at the point $\ma{X} = \ma{Y}$. The approximated optimization problem at the $l$-th outer iteration is consequently expressed as 
\begin{align}  \label{eq_SEE_lower_bound_SIA}   
\underset{ \mathbb{Q}^{[l]} }{\text{max}} \;\; & \underbrace{C_{LB,ab} \left(\mathbb{Q}^{[l]}, {\mathbb{Q}^{[l-1]}}^\star \right)/P_{\text{tot}} \left(\mathbb{Q}^{[l]}\right) }_{\leq  \text{SEE}_p\left( \mathbb{Q}^{[l]} \right)}  \;\;  \text{s.t.} \;\; \text{(\ref{problem_SEE_max_1_c})} ,                  
\end{align} 
\begin{figure*}[!t]
\normalsize
{{\revOmid{{{\begin{align}
&  C_{LB,ab}  \left( \mathbb{Q}^{[i]} , {\mathbb{Q}}^{[j]} \right) :=   \text{log} \left|  \ma{\Sigma}_b \left( {\mathbb{Q}}^{[i]} \right) +  \ma{H}_{ab}\ma{Q}_a^{[i]}\ma{H}_{ab}^H  \right|  - \text{log} \left|  \ma{\Sigma}_b\left( {\mathbb{Q}}^{[j]} \right) \right|  + \text{log} \left| \ma{\Sigma}_e\left( {\mathbb{Q}}^{[i]} \right) \right| \nonumber \\
&   +\hspace{-0mm} \text{tr}\Big(\hspace{-0mm} \Big( \ma{\Sigma}_e\left( {\mathbb{Q}}^{[j]} \right) +  \ma{H}_{ae}\ma{Q}_a^{[j]}\ma{H}_{ae}^H \Big)^{-1} \Big( \hspace{-0mm} \ma{\Sigma}_e\left( {\mathbb{Q}}^{[j]} \right) -   \ma{\Sigma}_e\left( {\mathbb{Q}}^{[i]} \right) + \ma{H}_{ae}\Big( \hspace{-0mm} \ma{Q}_a^{[j]} - \ma{Q}_a^{[i]} \hspace{-0mm} \Big) \ma{H}_{ae}^H\Big)\hspace{-0mm} \Big) \nonumber \\
&  + \text{tr}\left( \left( \ma{\Sigma}_b\left( {\mathbb{Q}}^{[j]} \right) \right)^{-1}  \left(  \ma{\Sigma}_b\left( {\mathbb{Q}}^{[j]} \right) -  \ma{\Sigma}_b\left( {\mathbb{Q}}^{[i]} \right) \right) \right)   - \text{log} \left|  \ma{\Sigma}_e\left(\hspace{-0mm} {\mathbb{Q}}^{[j]}\hspace{-0mm} \right) \hspace{-0mm}+ \hspace{-0mm} \ma{H}_{ae}\ma{Q}_a^{[j]}\ma{H}_{ae}^H  \right|  \hspace{-0mm}   \label{eq_SEE_max_SEE_Approx_Dincklebach} 
\end{align} }}} }}
\hrulefill
\vspace*{-5mm}
\end{figure*}
where $\mathbb{Q}^{[X]}:= \left\{ \ma{Q}_a^{[X]}, \ma{W}_a^{[X]}, \ma{W}_b^{[X]} \right\}$, with $X$ specifying an iteration instance. Moreover, $C_{LB,ab}$ is given in (\ref{eq_SEE_max_SEE_Approx_Dincklebach}) and ${\mathbb{Q}^{[l-1]}}^\star$ represents the obtained solution at the previous outer iteration. 
 
\subsubsection{Inner loop}
The inner loop is dedicated to optimally solve the approximated problem at each outer iteration via the well-known Dinkelbach's algorithm \cite{dinkelbach1967nonlinear}, as (\ref{eq_SEE_lower_bound_SIA}) belongs to the class of concave-over-affine fractional programs \cite{beckenbach1937generalized}. In particular, the optimum solution is obtained via a sequence of parametric variable updates, see Appendix~\ref{appendix_dinkelbach} for elaboration on the related class of fractional programs and the detailed procedure. The main steps associated with the Dinkelbach's algorithm, i.e., Steps~$3$ and $5$ from Algorithm~\ref{alg_Dinkelbach}, can be expressed as the following updates:
\begin{align}                                 
{ \mathbb{Q}^{[l,k]} }^\star  & \leftarrow  \underset{ \mathbb{Q}^{[l,k]} }{\text{argmax}} \;\; C_{LB} \left(\mathbb{Q}^{[l,k]}, {\mathbb{Q}^{[l-1]}}^\star \right)  -   {\lambda^{[l,k-1]}}^\star P_{\text{tot}}\left(\mathbb{Q}^{[l,k]}\right) \;\;  \text{s.t.} \;\; \text{(\ref{problem_SEE_max_1_c})} ,  \label{Dinkelbach_SEE_1} \\
{\lambda^{[l,k]}}^\star & \leftarrow C_{LB,ab} \left({\mathbb{Q}^{[l,k]}}^\star, {\mathbb{Q}^{[l-1]}}^\star \right) / P_{\text{tot}}\left({\mathbb{Q}^{[l,k]}}^\star\right), \label{Dinkelbach_SEE_2}
\end{align}
associated with the $l$-th outer iteration and $k$-th inner iteration.   

It is observed that (\ref{Dinkelbach_SEE_1}) is a jointly convex problem over the optimization variables $\mathbb{Q}^{[k,l]}$ and \revPeterNew{can efficiently be implemented} via {the} MAX-DET algorithm \cite{vandenberghe1998determinant}, whereas (\ref{Dinkelbach_SEE_2}) can be obtained via direct evaluation. 
The defined algorithm steps, both outer and inner loop iterations, are continued until a jointly stable point is obtained, see Algorithm~\ref{SUIAP} for more details.    
\vspace{-2mm}
\subsubsection{Convergence}
Via the application of the \revPeterNew{Dinkelbach's algorithm} on the class of concave-over-affine fractional programs, the iterations of \revPeterNew{the} inner loop converge to a globally optimum solution for  (\ref{eq_SEE_lower_bound_SIA}). This follows from the strictly monotonic nature of the auxiliary function (\ref{eq_SEE_lower_bound_SIA}), and the convexity of the update (\ref{Dinkelbach_SEE_1}). For more information please see Appendix~\ref{appendix_dinkelbach} and the references therein. The following lemmas reveal the nature of the convergence at the outer loop. 
\begin{lemma} \label{lemma_SIA_Convergence}
(SIA sequence: \cite[Theorem~1]{marks1978technical}) Consider the optimization problem
\begin{align}  \label{lemma_SIA_convergence}   
\underset{ \ma{x} }{\text{min}} \;\; & g_0 (\ma{x})  \;\;  \text{s.t.} \;\; g_i(\ma{x}) \leq 0, \; \forall i \in \{1,\ldots,{I}\},                  
\end{align}     
where $g_i:\mathbb{R}^n \rightarrow \mathbb{R}^+$ and $g_i:\mathbb{R}^n \rightarrow \mathbb{R}$ are differentiable but potentially non-convex functions. Furthermore, consider the differentiable functions ($\forall i$) $\bar{g}_i$, approximating ${g}_i$ at $\ma{x}_0$, such that $(i) \; g_i(\ma{x}) \leq \bar{g}_i(\ma{x}, \ma{x}_0)$, $(ii) \; g_i(\ma{x}_0) = \bar{g}_i(\ma{x}_0, \ma{x}_0)$ and $(iii) \; \partial  g_i(\ma{x}_0) / \partial \ma{x} = \partial  \bar{g}_i(\ma{x}_0, \ma{x}_0) / \partial \ma{x}$. 
Then, upon the feasibility of an initial value $\ma{x}^{[0]}$, the sequence of approximate convex optimization problems 
\begin{align}  \label{lemma_SIA_convergence_approx}   
{\ma{x}^{[k]}}^{\star} \leftarrow \underset{\ma{x}^{[k]} }{\text{argmin}} \;\; & \bar{g}_0 \left(\ma{x}^{[k]}, {\ma{x}^{[k-1]}}^{\star}\right)  \;\;  \text{s.t.} \;\; \bar{g}_i \left(\ma{x}^{[k]}, {\ma{x}^{[k-1]}}^{\star}\right) \leq 0, \;\; i \in \{1,\ldots,{I}\},                  
\end{align}
\revPeterNew{converges} to a point satisfying the KKT conditions of the original problem (\ref{lemma_SIA_convergence}).  
\end{lemma}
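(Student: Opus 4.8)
The plan is to follow the argument of \cite[Theorem~1]{marks1978technical}, which rests on three ingredients: (a) feasibility for the original problem is preserved along the iterates; (b) the objective value $g_0(\ma{x}^{[k]})$ is non-increasing and hence convergent; and (c) every accumulation point of $\{\ma{x}^{[k]}\}$ is a global minimizer of the convex subproblem anchored \emph{at itself}, whose KKT system collapses — through properties (ii) and (iii) — onto the KKT system of the original problem~(\ref{lemma_SIA_convergence}).

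First I would establish (a) and (b) jointly by induction, starting from the assumed feasibility of $\ma{x}^{[0]}$. Suppose $\ma{x}^{[k-1]}$ is feasible, i.e. $g_i(\ma{x}^{[k-1]})\le 0$ for all $i$. By property (ii), $\bar g_i(\ma{x}^{[k-1]},\ma{x}^{[k-1]}) = g_i(\ma{x}^{[k-1]}) \le 0$, so $\ma{x}^{[k-1]}$ is feasible for the $k$-th subproblem~(\ref{lemma_SIA_convergence_approx}); in particular that subproblem has a nonempty feasible set (and attains its minimum, by compactness of the feasible set in the applications here). By property (i) its minimizer obeys $g_i(\ma{x}^{[k]}) \le \bar g_i(\ma{x}^{[k]},\ma{x}^{[k-1]}) \le 0$, so $\ma{x}^{[k]}$ is again feasible for~(\ref{lemma_SIA_convergence}). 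Moreover, since $\ma{x}^{[k-1]}$ lies in the feasible set of the $k$-th subproblem and $\ma{x}^{[k]}$ minimizes $\bar g_0(\cdot,\ma{x}^{[k-1]})$ over it,
\begin{align}
g_0(\ma{x}^{[k]}) \;\le\; \bar g_0(\ma{x}^{[k]},\ma{x}^{[k-1]}) \;\le\; \bar g_0(\ma{x}^{[k-1]},\ma{x}^{[k-1]}) \;=\; g_0(\ma{x}^{[k-1]}),
\end{align}
where the first inequality uses (i) and the last equality uses (ii). Hence $\{g_0(\ma{x}^{[k]})\}$ is non-increasing; being bounded below (nonnegative in the stated setting, and in general bounded below on the compact feasible set) it converges to some $g_0^\star$.

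Next I would characterize the accumulation points. Assuming the iterates remain in a compact set — which holds in this paper because the power constraints in~(\ref{problem_SEE_max_1_c}) are affine and bound $\text{tr}(\ma{Q}_a+\ma{W}_a)$ and $\text{tr}(\ma{W}_b)$ — pick a subsequence $\ma{x}^{[k_j]} \to \ma{x}^\star$. I claim $\ma{x}^\star$ solves the convex program $\min_{\ma{x}} \bar g_0(\ma{x},\ma{x}^\star)$ s.t. $\bar g_i(\ma{x},\ma{x}^\star)\le 0$. By (ii), $\ma{x}^\star$ is feasible for it with value $g_0^\star$; if some $\ma{z}$ were feasible with $\bar g_0(\ma{z},\ma{x}^\star) < g_0^\star$, then, invoking a Slater-type condition so that a small perturbation $\ma{z}_j$ of $\ma{z}$ stays strictly feasible for the subproblems anchored at the nearby points $\ma{x}^{[k_j-1]}$ and using joint continuity of $\bar g_0,\bar g_i$, we would get $g_0(\ma{x}^{[k_j]}) \le \bar g_0(\ma{x}^{[k_j]},\ma{x}^{[k_j-1]}) \le \bar g_0(\ma{z}_j,\ma{x}^{[k_j-1]}) < g_0^\star$ for $j$ large, contradicting the monotone convergence $g_0(\ma{x}^{[k]})\to g_0^\star$. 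Since this subproblem is convex and satisfies the same constraint qualification, $\ma{x}^\star$ admits multipliers $\bm{\lambda}^\star \ge \ma{0}$ with
\begin{align}
\nabla_{\ma{x}}\bar g_0(\ma{x}^\star,\ma{x}^\star) + \sum_{i=1}^{I}\lambda_i^\star\,\nabla_{\ma{x}}\bar g_i(\ma{x}^\star,\ma{x}^\star) = \ma{0}, \quad \bar g_i(\ma{x}^\star,\ma{x}^\star)\le 0, \quad \lambda_i^\star\,\bar g_i(\ma{x}^\star,\ma{x}^\star) = 0.
\end{align}
Substituting the tangency relations $\bar g_i(\ma{x}^\star,\ma{x}^\star)=g_i(\ma{x}^\star)$ from (ii) and $\nabla_{\ma{x}}\bar g_i(\ma{x}^\star,\ma{x}^\star)=\nabla g_i(\ma{x}^\star)$ from (iii) turns these into exactly the KKT conditions of~(\ref{lemma_SIA_convergence}), which is the assertion.

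The hard part — and the only place where more than bookkeeping is required — is the constraint-qualification step used twice above: once to transport a strictly better point from the anchor $\ma{x}^\star$ to the nearby anchors $\ma{x}^{[k_j-1]}$, and once to guarantee that the convex subproblem anchored at $\ma{x}^\star$ actually possesses KKT multipliers. A uniform Slater condition on a neighborhood of $\ma{x}^\star$ suffices for both, and it is automatic in the present setting since the region defined by~(\ref{problem_SEE_max_1_c}) is a bounded polyhedron with nonempty interior. With that in place the remaining passages to the limit are routine, and I would cite \cite[Theorem~1]{marks1978technical} (and \cite{razaviyayn2013unified} for the majorization viewpoint) for the complete details rather than reproduce them.
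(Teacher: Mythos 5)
Your proposal is correct and follows essentially the same route as the paper's proof: monotone non-increase of $g_0$ along feasible iterates (via properties (i)--(ii)) to get convergence of the objective, followed by the observation that at a limit point the convex subproblem's KKT system coincides with that of the original problem through the tangency properties (ii)--(iii), with the remaining details deferred to \cite{marks1978technical}. You simply fill in more of the bookkeeping (feasibility induction, accumulation-point extraction, and the constraint-qualification step) that the paper leaves to the cited reference.
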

\begin{proof}
The proof follows from two observations. Firstly, the sequence of $g_0 ({\ma{x}^{[k]}}^\star)$ leads to a necessary convergence. This is observed from the chain of inequalities $0 \leq g_0 ({\ma{x}^{[k]}}^\star) \leq g_0 ({\ma{x}^{[k-1]}}^\star) \leq \cdots \leq g_0 ({\ma{x}^{[0]}})$ as the optimal objective value is upperbounded by any feasible value. Secondly, the approximate problem (\ref{lemma_SIA_convergence_approx}) shares the same \revPeterNew{set of KKT conditions as (\ref{lemma_SIA_convergence}) at the point of  convergence, due to the stated properties $(i)-(iii)$}. The detailed proof of the latter case is articulated in~\cite{marks1978technical}, also see \cite{rev_1_2} for a similar discussion.
\end{proof}

\begin{lemma} \label{lemma_OuterLoopConvergence}
The properties $(i)-(iii)$ stated in Lemma~\ref{lemma_SIA_Convergence} hold for the approximation introduced in (\ref{eq_SEE_lower_bound_SIA}). 
\end{lemma}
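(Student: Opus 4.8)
Lemma~\ref{lemma_OuterLoopConvergence} asks us to verify that the lower‑bound approximation used in the outer loop, namely replacing $\text{SEE}_p$ by $C_{LB,ab}(\cdot,\mathbb{Q}^{[l-1]\star})/P_{\text{tot}}$ (equivalently, replacing the numerator $C_{ab}$ by $C_{LB,ab}$ in \eqref{eq_SEE_lower_bound_SIA}), satisfies the three conditions $(i)$–$(iii)$ of Lemma~\ref{lemma_SIA_Convergence}: global lower bound, tightness at the expansion point, and matching gradient at the expansion point. Since the constraints \eqref{problem_SEE_max_1_c} are untouched (and already affine, hence trivially their own approximations), everything reduces to checking $(i)$–$(iii)$ for the objective; and because $P_{\text{tot}}$ is exact and common to both sides, it suffices to check $(i)$–$(iii)$ for the numerator, i.e. to show $C_{LB,ab}(\mathbb{Q},\tilde{\mathbb{Q}}) \le \tilde C_{ab}(\mathbb{Q})$ for all $\mathbb{Q}$, with equality and equal gradient at $\mathbb{Q}=\tilde{\mathbb{Q}}$. (A minor bookkeeping remark: after dropping $\{\cdot\}^+$ we work with $\text{SEE}_p$, so "numerator" means $\tilde C_{ab}$ as in \eqref{eq_model_ab_cap}; the gradient/positivity argument is unaffected.)

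First I would isolate where the approximation enters. Inspecting \eqref{eq_SEE_max_SEE_Approx_Dincklebach}, the Bob terms $\log|\ma{\Sigma}_b(\mathbb{Q})+\ma{H}_{ab}\ma{Q}_a\ma{H}_{ab}^H|$ and $-\log|\ma{\Sigma}_b(\mathbb{Q})|$ appear exactly as in $\tilde C_{ab}$ but with the concave term $-\log|\ma{\Sigma}_b|$ linearized; here $\ma{\Sigma}_b(\mathbb{Q})$ is affine in $\mathbb{Q}$, so $-\log|\ma{\Sigma}_b(\mathbb{Q})|$ is convex in $\mathbb{Q}$ and \eqref{eq_SEE_max_SEE_Taylor} gives the required tangent lower bound. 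Symmetrically, the Eve terms $-\log|\ma{\Sigma}_e(\mathbb{Q})+\ma{H}_{ae}\ma{Q}_a\ma{H}_{ae}^H|$ appear in $\tilde C_{ab}$ with a minus sign, so this is a concave function of $\mathbb{Q}$ (composition of $\log|\cdot|$ concave with the affine map $\mathbb{Q}\mapsto\ma{\Sigma}_e(\mathbb{Q})+\ma{H}_{ae}\ma{Q}_a\ma{H}_{ae}^H$), and again \eqref{eq_SEE_max_SEE_Taylor} — applied to $-\log|\ma{X}|$ with $\ma{X}=\ma{\Sigma}_e(\mathbb{Q})+\ma{H}_{ae}\ma{Q}_a\ma{H}_{ae}^H$ and $\ma{Y}$ its value at $\tilde{\mathbb{Q}}$ — produces the first‑order overestimator, i.e. $-\log|\ma{X}| \ge -\log|\ma{Y}| + \text{tr}(\ma{Y}^{-1}(\ma{Y}-\ma{X}))$. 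The $+\log|\ma{\Sigma}_e(\mathbb{Q})|$ term is kept exact (it is a concave term whose sign is already favorable). Collecting, $C_{LB,ab}$ is exactly $\tilde C_{ab}$ with the two concave pieces $-\log|\ma{\Sigma}_b|$ and $-\log|\ma{\Sigma}_e+\ma{H}_{ae}\ma{Q}_a\ma{H}_{ae}^H|$ replaced by their tangent planes at $\tilde{\mathbb{Q}}$. Then $(i)$ follows termwise from \eqref{eq_SEE_max_SEE_Taylor}; $(ii)$ follows because a first‑order Taylor expansion agrees with the function at the expansion point; $(iii)$ follows because the gradient of an affine tangent equals the gradient of the function at the point of tangency, and the unchanged terms contribute identically on both sides. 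Chaining these termwise (in)equalities over $\mathcal{X}\in\{b,e\}$ and using that sums/affine combinations preserve $(i)$–$(iii)$ closes the argument; I would note that the chain rule through the affine maps $\mathbb{Q}\mapsto\ma{\Sigma}_{b}(\mathbb{Q})$, $\mathbb{Q}\mapsto\ma{\Sigma}_{e}(\mathbb{Q})+\ma{H}_{ae}\ma{Q}_a\ma{H}_{ae}^H$ is what makes $(iii)$ go through, and that these maps are indeed affine in $(\ma{Q}_a,\ma{W}_a,\ma{W}_b)$ by \eqref{eq_model_Sigma_b}–\eqref{eq_model_Sigma_e} (the $\text{diag}(\cdot)$ and $\text{tr}(\cdot)$ operations are linear).

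The one genuine subtlety — and the step I expect to require the most care — is the convexity/concavity classification itself, since $\ma{\Sigma}_b$ and $\ma{\Sigma}_e$ depend on $(\ma{Q}_a,\ma{W}_a,\ma{W}_b)$ through the distortion and AN terms, not just the "signal" covariance. One must check that $\ma{\Sigma}_b(\cdot)$ and $\ma{\Sigma}_e(\cdot)$ are affine (not merely convex) in the decision variables, which holds because every occurrence in \eqref{eq_model_Sigma_b}–\eqref{eq_model_Sigma_e} is linear in $\ma{Q}_a,\ma{W}_a,\ma{W}_b$ (the coefficients $\kappa_a,\kappa_b,\beta_b$, the channel matrices, and $\text{diag}(\cdot)$ are all linear maps, and $\sigma_{\text{n}}^2\ma{I}$ is constant); affineness is exactly what is needed so that $-\log|\ma{\Sigma}_b(\mathbb{Q})|$ is convex and $\log|\ma{\Sigma}_e(\mathbb{Q})|$, $-\log|\ma{\Sigma}_e(\mathbb{Q})+\ma{H}_{ae}\ma{Q}_a\ma{H}_{ae}^H|$ have the stated curvature via composition rules. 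Also worth a line: positive definiteness of $\ma{\Sigma}_b$, $\ma{\Sigma}_e$ (guaranteed by the $+\sigma_{\text{n}}^2\ma{I}$ terms) so all the $\log|\cdot|$ and inverses in \eqref{eq_SEE_max_SEE_Taylor} and \eqref{eq_SEE_max_SEE_Approx_Dincklebach} are well defined on the feasible set. Once these structural facts are in place, the verification of $(i)$–$(iii)$ is the routine bookkeeping described above, and the lemma follows.
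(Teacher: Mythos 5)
Your proposal is correct and follows essentially the same route as the paper's proof: both arguments reduce the claim to the fact that \cref{eq_SEE_max_SEE_Taylor} is the first-order Taylor expansion of the convex terms $-\log\left|\cdot\right|$ evaluated at affine arguments, which is simultaneously a global lower bound (property $(i)$, via convexity), tight at the expansion point (property $(ii)$), and gradient-matching there (property $(iii)$); your version merely spells out the termwise bookkeeping and the affineness of $\ma{\Sigma}_b,\ma{\Sigma}_e$ that the paper leaves implicit. The only blemish is a pair of wording slips where you label the linearized pieces $-\log\left|\ma{\Sigma}_b\right|$ and $-\log\left|\ma{\Sigma}_e+\ma{H}_{ae}\ma{Q}_a\ma{H}_{ae}^H\right|$ as ``concave'' (they are convex, as you yourself state elsewhere), which does not affect the validity of the argument.
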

\begin{proof}
The tightness $(ii)$, and the shared slope properties $(iii)$ are obtained directly by observing the nature of the inequality (\ref{eq_SEE_max_SEE_Taylor}), as the first-order Taylor approximation of the differentiable convex term $-\text{log}(x)$. The globally lower bound property $(iii)$ is observed since any first-order Taylor approximation of a convex function is also a global lower bound, see \cite[Subsection~3.1.3]{BV:04}. 
\end{proof}
The combination of Lemma~\ref{lemma_SIA_convergence} and Lemma~\ref{lemma_OuterLoopConvergence} conclude the convergence of the SUIAP algorithm to a point satisfying the KKT optimality conditions\footnote{\MinorRevOmid{However, due to the non-convex nature of the underlying problem, the global optimality of the converging point may not be theoretically guaranteed and the obtained solution depends on the used initialization. In Subsection~\ref{sec_AlgorithmAnalysis}, the optimal performance is numerically evaluated by repeating the SUIAP algorithm with several initializations. Although the optimality gap of the obtained solution \revOmid{may not} be theoretically guaranteed, it is observed that the proposed initialization leads to a negligible gap with the numerically obtained performance benchmark. \label{remarl_SUIAP_optimalitagap}}}.   
 \vspace{-2mm} \vspace{-0mm}
}

\subsubsection{Computational complexity}
The computational complexity of the algorithm is dominated by the steps of the determinant maximization in the inner loop. A general form of a MAX-DET problem is defined as 
\begin{align}
\underset{\ma{z}}{\text{min}} \;\; \ma{p}^T \ma{z} + \text{log}\left|{\ma{Y}(\ma{z})}^{-1}\right|, \;\; {\text{s.t.}} \;\; {\ma{Y}(\ma{z})} \succ 0,\; {\ma{F}(\ma{z})} \succeq 0, 
\end{align}  
where $\ma{z}\in \real^n$, and ${\ma{Y}(\ma{z})} \in \real^{n_Y \times n_Y} := \ma{Y}_0 + \sum_{i=1}^{n} {z_i \ma{Y}_i}$ and $ {\ma{F}(\ma{z})} \in \real^{n_F \times n_F} := \ma{F}_0 + \sum_{i=1}^{n} {z_i \ma{F}_i} $.
An upper bound to the computational complexity of the above problem is given as
\begin{align} \label{SUAIP_complexity}
\mathcal{O}\Big( \gamma_{\text{in}} \sqrt{n} \big(n^2 + n_Y^2 \big) n_F^2 \Big),
\end{align}
see \cite[Section~10]{vandenberghe1998determinant}. In our problem $n = 2N_{A}^2  + N_{B}^2 $ representing the dimension of real valued scalar variable space, and $n_Y = 2M_{B} + 2M_{E}$ and $n_F = 2N_{B} + 4N_{A}+ 2$, representing the dimension of the determinant operation and the constraints space, respectively.   \vspace{-1mm}
\begin{remark}
The above analysis intends to show how the bounds on computational complexity are related to different problem dimensions. Nevertheless, the computational load may vary in practice, depending on the implementation, the used numerical solver, and the number of optimization iterations required to obtain convergence. Please see Subsection~\ref{sec_AlgorithmAnalysis} for a numerical analysis on the algorithm computational complexity.
\end{remark}

\begin{algorithm}[H] 
{{ \begin{algorithmic}[1] 
\State{$l,k \leftarrow  {0}; \; \lambda^{[0,0]} \leftarrow {0}; \; \mathbb{Q}^{[0]}  \leftarrow  \text{Subsection~\ref{sec_SUIAP_init}}$} ;   \Comment{initialization}
\Repeat  \Comment{outer loop}
\State{$l \leftarrow  l + 1; \; {\lambda^{[0,l]}}^\star \leftarrow  {\lambda^{[k,l-1]}}^\star ; \; \mathbb{Q}^{[0,l]} \leftarrow  {\mathbb{Q}^{[k,l-1]}}^\star ; \;  k \leftarrow  0; $}
\Repeat \Comment{inner loop (Dinkelbach alg.)}
\State{$k \leftarrow  k + 1;$}
\State{$ \left\{ {\mathbb{Q}^{[l,k]} }^\star , {\lambda^{[l,k]}}^\star \right\} \leftarrow \text{(\ref{Dinkelbach_SEE_1}), (\ref{Dinkelbach_SEE_2})} ; $}
\State{$C \leftarrow  C_{LB,ab} \left({\mathbb{Q}^{[l,k]}}^\star , {\mathbb{Q}^{[l-1]}}^\star \right)  -   {\lambda^{[l,k-1]}}^\star P_{\text{tot}}\left({\mathbb{Q}^{[l,k]}}^\star \right);$}
\Until{$C \leq C_{\text{min}}$}
\Until{${\lambda^{[k,l]}}^\star  - {\lambda^{[0,l]}}^\star  \leq \lambda_{\text{min}} $}
\State{\Return$\left\{{\mathbb{Q}^{[k,l]}}^\star , {\lambda^{[k,l]}}^\star \right\}$}
  \end{algorithmic} }}
 \caption{\scriptsize{SUIAP algorithm for SEE maximization. $C_{\text{min}}$ ($\lambda_{\text{min}}$) represents the convergence threshold for outer (inner) iterations.} } \label{SUIAP}
\end{algorithm}   	 
 
%
%
  %
%

  \vspace{-6mm}
\section{Secure Bidirectional Communication: Joint Full-Duplex Operation at Alice and Bob} \label{sec_SEE_max_BD}
In this part we study the case \revPeterNew{where bidirectional communication is established between Alice and Bob, such that both Alice and Bob are enabled with FD capabilities}. An FD bidirectional setup is interesting as it enables the usage of the same channel for both communication directions, and leads to a higher spectral efficiency \cite{DMBS:12}. Moreover, the jamming power at both, Alice and Bob, can be reused to improve security \revPeterNew{in} both directions\footnote{\revPeterNew{The reason for this is that} the jamming sent to Eve from each single node degrades Eves reception quality from both communication directions.}, and potentially improve the resulting SEE. However, the coexistence of all signal transmissions on a single channel results in a higher number of interference paths, which calls for a smart design regarding the signal and jamming transmit strategies at Alice and Bob. \par
In order to update the defined setup to a bidirectional one, we denote the number of receive antennas, and the self-interference channel at Alice as $M_{A}$ \revPeterNew{and} $\ma{H}_{aa}$, respectively. Moreover, we \MinorRevOmid{denote the} data transmission from Bob as $\ma{q}_b \sim \mathcal{CN} \left( \ma{0}_{N_{B}}, \ma{Q}_b \right)$. Following the same signal model for the transmission of data and jamming signals as in (\ref{eq_model_tx_alice})-(\ref{eq_model_e_rx_b}), \MinorRevOmid{the received interference-plus-noise covariance matrix at Eve, respectively associated with the Alice-Bob and Bob-Alice communications are expressed as 
\begin{align}                                   
\ma{\Sigma}_{e-a}^{\text{BD}} & = \ma{\Sigma}_e + \kappa_b \ma{H}_{be} \text{diag}\left( \ma{Q}_b\right) \ma{H}_{be}^H + \rho \ma{H}_{be} \ma{Q}_b \ma{H}_{be}^H,   \label{eq_BD_Sigma_update_Eve_a}  \\
\ma{\Sigma}_{e-b}^{\text{BD}} & = \ma{\Sigma}_e + \kappa_b \ma{H}_{be} \text{diag}\left( \ma{Q}_b\right) \ma{H}_{be}^H + \rho \ma{H}_{ae} \ma{Q}_a \ma{H}_{ae}^H,   \label{eq_BD_Sigma_update_Eve_b} 
\end{align}
where $\ma{\Sigma}_e$ is given \revPeterNew{in} (\ref{eq_model_Sigma_e}), and $\rho\in\{0,1\}$ is dependent on the decoding strategy at Eve\footnote{\MinorRevOmid{In particular, $\rho=1$ indicates the system where Eve is restricted to a linear reception strategy, similar to the other communication nodes, which represents a favorable (optimistic) scenario. On the other hand, $\rho=0$ indicates the system where Eve enjoys a successive decoding and interference cancellation capability, representing a worst-case (conservative) scenario, see \cite{6781609, 7463025}.}}. }
Similarly, the received interference-plus-noise signal covariance at Alice and Bob are respectively expressed as \MinorRevOmid{
\begin{align}                                   
 \ma{\Sigma}_a^{\text{BD}} &= \ma{H}_{ba}\ma{W}_b\ma{H}_{ba}^H + \sigma_{\text{n},a}^2 \ma{I}_{M_{A}} + \kappa_b \ma{H}_{ba} \text{diag} \left(\ma{Q}_b + \ma{W}_b\right) \ma{H}_{ba}^H + \kappa_a \ma{H}_{aa} \text{diag} \left( \ma{W}_a + \ma{Q}_a \right) \ma{H}_{aa}^H \nonumber \\ &   + \beta_a \text{diag} \Big( \ma{H}_{ba} \left( \ma{Q}_b + \ma{W}_b \right) \ma{H}_{ba}^H   + \ma{H}_{aa} \left( \ma{W}_a + \ma{Q}_a \right) \ma{H}_{aa}^H   + \sigma_{\text{n},a}^2 \ma{I}_{M_{A}} \Big) , \label{eq_BD_Sigma_Alice} \\
 \ma{\Sigma}_b^{\text{BD}} & = \ma{\Sigma}_b + \kappa_b \ma{H}_{bb} \text{diag}\left( \ma{Q}_b\right) \ma{H}_{bb}^H   +  \beta_b \text{diag}\left( \ma{H}_{bb}  \ma{Q}_b \ma{H}_{bb}^H \right),    \label{eq_BD_Sigma_update_Bob}  
\end{align}  }
where $\ma{\Sigma}_b$ is given in (\ref{eq_model_Sigma_b}), $\beta_a \in \real^+$ is the distortion coefficient for the reception at Alice, and $\sigma_{\text{n},a}^2$ represents the thermal noise variance at Alice. \MinorRevOmid{ The SEE for the defined BD system is then obtained as
\begin{align} \label{eq_BD_Sec_cap}                                  
\text{SEE}^{\text{BD}} = \frac{ \left\{\tilde{C}^{\text{BD}}_{ab} \right\}^{+} + \left\{\tilde{C}^{\text{BD}}_{ba} \right\}^{+}}{ P^{\text{BD}}_{A} + P^{\text{BD}}_{B} } ,
\end{align}
where 
\begin{align}
\tilde{C}^{\text{BD}}_{ab} =  \text{log} \Big|  &  \ma{I} +  \ma{H}_{ab}\ma{Q}_a\ma{H}_{ab}^H \left(\ma{\Sigma}_b^{\text{BD}}\right)^{-1} \Big| - \text{log} \Big|  \ma{I} 
 +  \ma{H}_{ae}\ma{Q}_a\ma{H}_{ae}^H \left(\ma{\Sigma}_{e-a}^{\text{BD}}\right)^{-1} \Big|, \label{C_BD_ab}\\
\tilde{C}^{\text{BD}}_{ba} =  \text{log} \Big|  &  \ma{I} +  \ma{H}_{ba}\ma{Q}_b\ma{H}_{ba}^H \left(\ma{\Sigma}_a^{\text{BD}}\right)^{-1} \Big| - \text{log} \Big|  \ma{I} 
 +  \ma{H}_{be}\ma{Q}_b\ma{H}_{be}^H \left(\ma{\Sigma}_{e-b}^{\text{BD}}\right)^{-1} \Big| \label{C_BD_ba}
\end{align}
are obtained following the same concept as in (\ref{eq_model_ab_cap}). Moreover, $P^{\text{BD}}_{B} := P_{B} +  \frac{1 + \kappa_b}{\mu_B} \text{tr}\left(\ma{Q}_b\right)$ and $P^{\text{BD}}_{A} := P_{A} + P_{\text{FD}}$, respectively represent the power consumption in the bidirectional system, where $P_A,P_B,P_{\text{FD}}$ are given in (\ref{eq_model_power_Alice}) and (\ref{eq_model_power_Bob}). In order to obtain a mathematically tractable framework, similar to (\ref{problem_SEE_max_1}), we resort to a relaxed version of the SEE by removing the non-linear operators $\left\{ \right\}^{+}$ from the formulation of (\ref{eq_BD_Sec_cap}). The relaxed optimization problem is hence formulated as 
\begin{align} \label{opt_SEEMax_BD_Relaxed}
\underset{ \ma{Q}_a,\ma{Q}_b,\ma{W}_a, \ma{W}_b \in \mathcal{H}}{\text{max}} \;\; & \text{SEE}_p^{\text{BD}} := \frac{ \tilde{C}^{\text{BD}}_{ab}  + \tilde{C}^{\text{BD}}_{ba} }{ P^{\text{BD}}_{A} + P^{\text{BD}}_{B} } \;\;\; \text{s.t.} \;\; P^{\text{BD}}_{A}  \leq P_{A,\text{max}}, \;\;   P^{\text{BD}}_{B} \leq P_{B,\text{max}}.
\end{align}
The following lemma explains the effectiveness of the employed relaxation.  
\begin{lemma} \label{lemma_BD_Positive_C}
Let $\left( \ma{Q}_a^{\star},\ma{Q}_b^{\star},\ma{W}_a^{\star}, \ma{W}_b^{\star} \right)$ {\MinorRR{be a KKT solution to (\ref{opt_SEEMax_BD_Relaxed}) for any chosen value of $\rho \in \{0,1\}$. Then, it holds 
\begin{align} \label{opt_SEEMax_BD_ZeroRelaxationGap}
\text{SEE}_p^{\text{BD}} \left( \ma{Q}_a^{\star},\ma{Q}_b^{\star},\ma{W}_a^{\star}, \ma{W}_b^{\star} \right) = \text{SEE}^{\text{BD}} \left( \ma{Q}_a^{\star},\ma{Q}_b^{\star},\ma{W}_a^{\star}, \ma{W}_b^{\star} \right),
\end{align}
which means that $\tilde{C}^{\text{BD}}_{ab}$ and $\tilde{C}^{\text{BD}}_{ba}$ will be non-negative for any globally or locally optimum solution to (\ref{opt_SEEMax_BD_Relaxed}), resulting in zero relaxation gap.  }}
%
%
\end{lemma}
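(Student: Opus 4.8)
The plan is to reduce the whole statement to a single claim: \emph{at every KKT point of (\ref{opt_SEEMax_BD_Relaxed}) one has $\tilde{C}^{\text{BD}}_{ab}\ge 0$ and $\tilde{C}^{\text{BD}}_{ba}\ge 0$}. Granting this, $\{\tilde{C}^{\text{BD}}_{ab}\}^{+}=\tilde{C}^{\text{BD}}_{ab}$ and $\{\tilde{C}^{\text{BD}}_{ba}\}^{+}=\tilde{C}^{\text{BD}}_{ba}$, so the numerator of $\text{SEE}^{\text{BD}}$ in (\ref{eq_BD_Sec_cap}) coincides with that of $\text{SEE}_p^{\text{BD}}$, which is precisely (\ref{opt_SEEMax_BD_ZeroRelaxationGap}); and since the constraints in (\ref{opt_SEEMax_BD_Relaxed}) are affine/PSD a constraint qualification holds (Slater's condition with small positive-definite covariances), so every local or global maximizer is a KKT point and inherits the property. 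I would prove the claim by contradiction: suppose, without loss of generality, $\tilde{C}^{\text{BD}}_{ab}(\ma{Q}_a^{\star},\ma{Q}_b^{\star},\ma{W}_a^{\star},\ma{W}_b^{\star})<0$ at a KKT point (the Bob$\to$Alice case follows by interchanging the two links), and exhibit a feasible perturbation that strictly increases $\text{SEE}_p^{\text{BD}}$, contradicting stationarity. As a warm-up that also settles the purely global case, note that replacing $(\ma{Q}_a^{\star},\ma{Q}_b^{\star})$ by $(\ma{0},\ma{0})$ while keeping $\ma{W}_a^{\star},\ma{W}_b^{\star}$ is feasible (the power functions only decrease, PSD is preserved), makes both determinant differences in (\ref{C_BD_ab})--(\ref{C_BD_ba}) collapse to $\log|\ma{I}|=0$, and leaves the denominator strictly positive; hence $\text{SEE}_p^{\text{BD}}=0$ there, which already rules out any globally optimal point with negative objective, in particular any KKT point with both rates negative.

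For the sharper statement (any KKT point, both rates separately) the perturbation I would use is a small, \emph{power-neutral} transfer of covariance from the offending link's data stream to its own artificial noise: $\ma{Q}_a(\epsilon)=(1-\epsilon)\ma{Q}_a^{\star}$, $\ma{W}_a(\epsilon)=\ma{W}_a^{\star}+\epsilon\ma{Q}_a^{\star}$, with $\ma{Q}_b^{\star},\ma{W}_b^{\star}$ frozen and $\epsilon\in[0,\epsilon_0]$. This keeps $\text{tr}(\ma{Q}_a+\ma{W}_a)$, hence $P_A^{\text{BD}}$ and the entire denominator, unchanged, and preserves PSD (note $\ma{Q}_a^{\star}\neq\ma{0}$, since otherwise $\tilde{C}^{\text{BD}}_{ab}=0$ already). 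Its virtue is structural: inspecting (\ref{eq_BD_Sigma_Alice})--(\ref{eq_BD_Sigma_update_Eve_b}) and (\ref{eq_model_Sigma_e}), the combination $\ma{Q}_a+\ma{W}_a$ that enters $\ma{\Sigma}_a^{\text{BD}}$, the part of $\ma{\Sigma}_{e-b}^{\text{BD}}$ coming from $\ma{\Sigma}_e$, and the distortion parts of $\ma{\Sigma}_b^{\text{BD}}$ is invariant along this line, so the Bob$\to$Alice secrecy rate $\tilde{C}^{\text{BD}}_{ba}$ is non-increasing along it (and, when $\rho=1$, exactly unchanged, because then $\ma{\Sigma}_{e-b}^{\text{BD}}$ is invariant too), while $\tilde{C}^{\text{BD}}_{ab}$ relaxes back toward its $\ma{Q}_a=\ma{0}$ value of $0$. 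The steps of the argument are then: (i) feasibility of the line; (ii) invariance of $D$ and monotonicity of $\tilde{C}^{\text{BD}}_{ba}$ as above; (iii) showing that moving along the line strictly increases $\text{SEE}_p^{\text{BD}}$, which contradicts the KKT condition.

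I expect step (iii) to be the main obstacle: turning ``$\tilde{C}^{\text{BD}}_{ab}$ relaxes toward $0$'' into a \emph{strict first-order / local} improvement, not merely a global one. Along the transfer line the first-order change of $\tilde{C}^{\text{BD}}_{ab}$ is proportional to $\text{tr}\!\big(\ma{H}_{ae}\ma{Q}_a^{\star}\ma{H}_{ae}^H(\ma{\Sigma}_{e-a}^{\text{BD}})^{-1}\big)-\text{tr}\!\big(\ma{H}_{ab}\ma{Q}_a^{\star}\ma{H}_{ab}^H(\ma{\Sigma}_b^{\text{BD}})^{-1}\big)$, whose sign is not fixed, and the mutual-information-type terms are not jointly concave in the covariances once the self-distortion feedback and the $\rho$-term are present, so one cannot simply invoke concavity along the ray. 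My plan to close this is: (a) if \emph{both} rates are negative, drop the transfer trick and scale both data covariances, $\ma{Q}_a(\theta)=\theta\ma{Q}_a^{\star}$, $\ma{Q}_b(\theta)=\theta\ma{Q}_b^{\star}$; then $N(\theta):=\tilde{C}^{\text{BD}}_{ab}+\tilde{C}^{\text{BD}}_{ba}$ is continuous with $N(0)=0>N(1)$ while $D(\theta)$ is affine with strictly positive $\theta$-slope, so $N(\theta)D(1)-N(1)D(\theta)>0$ for small $\theta$ and $\text{SEE}_p^{\text{BD}}(\theta)>\text{SEE}_p^{\text{BD}}(1)$; (b) if only $\tilde{C}^{\text{BD}}_{ab}<0$, run the power-neutral transfer in whichever direction — shrink or grow $\ma{Q}_a$, with the opposite bump on $\ma{W}_a$ — makes the first-order change of $\tilde{C}^{\text{BD}}_{ab}$ non-negative; generically it is strictly positive in one of the two directions, and since $\tilde{C}^{\text{BD}}_{ba}$ and $D$ are unchanged (exactly so for $\rho=1$), $\text{SEE}_p^{\text{BD}}$ strictly increases, a contradiction. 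The only delicate residue is the measure-zero degenerate case in which that first-order term vanishes; there I would pass to a higher-order expansion of $\tilde{C}^{\text{BD}}_{ab}$ along the transfer line, using that along that line $\tilde{C}^{\text{BD}}_{ab}$ must eventually rise to $0$ at $\ma{Q}_a=\ma{0}$ while $\tilde{C}^{\text{BD}}_{ba}$ and $D$ stay put. Combining (a)--(b) and symmetrizing over the two links forces $\tilde{C}^{\text{BD}}_{ab},\tilde{C}^{\text{BD}}_{ba}\ge 0$ at every KKT point, which is (\ref{opt_SEEMax_BD_ZeroRelaxationGap}).
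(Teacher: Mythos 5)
You have found the paper's perturbation: the power-neutral transfer of covariance from $\ma{Q}_a$ to $\ma{W}_a$ inside the range of $\ma{Q}_a^{\star}$ is exactly the direction family $d(\ma{X})=(-\ma{U}_q\ma{X}\ma{U}_q^H,\ma{0},\ma{U}_q\ma{X}\ma{U}_q^H,\ma{0})$ used in Appendix~\ref{appendix_lemma_BD_Positive_C}, and your observations about the invariance of $P_{\text{tot}}$, of $\ma{\Sigma}_a^{\text{BD}}$, and of the combinations $\ma{H}_{ab}\ma{Q}_a\ma{H}_{ab}^H+\ma{\Sigma}_b^{\text{BD}}$ and $\ma{H}_{ae}\ma{Q}_a\ma{H}_{ae}^H+\ma{\Sigma}_{e-a}^{\text{BD}}$, as well as the sign of $\nabla_d\tilde{C}^{\text{BD}}_{ba}$ for $\rho\in\{0,1\}$, all match the paper. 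However, the step you flag as the main obstacle is a genuine gap, and your proposed fixes (a) and (b) do not close it. In (a), exhibiting a better feasible point at small $\theta$ only shows the KKT point is not a global maximizer along that segment; a KKT point need not be a local (let alone global) maximizer, so a non-infinitesimal improvement elsewhere does not contradict stationarity. In (b), the ``grow $\ma{Q}_a$, shrink $\ma{W}_a$'' direction is generally infeasible (it requires $\ma{W}_a^{\star}-\epsilon\ma{Q}_a^{\star}\succeq 0$, which fails e.g.\ when $\ma{W}_a^{\star}=\ma{0}$), and even when feasible it is blocked in the KKT system by the dual variable of the constraint $\ma{W}_a\succeq 0$: along that direction the term $\nabla_d\,\text{tr}(\overbar{\ma{W}_a}\ma{W}_a)$ has the wrong sign, so a positive first-order change of $\tilde{C}^{\text{BD}}_{ab}$ no longer contradicts stationarity of the Lagrangian. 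You also leave the degenerate zero-derivative case to an uncarried-out higher-order expansion.

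The missing idea is that you should not look for a single strictly improving direction at all. Stationarity must hold simultaneously for the \emph{entire cone} of transfer directions $\ma{X}\succeq 0$, and since $P_{\text{tot}}$, $\tilde{C}^{\text{BD}}_{ba}$ (up to a favorable sign), and the two ``outer'' log-det terms are all invariant or one-sided along every such direction, the first-order condition collapses to
\begin{align}
\text{tr}\Big(\ma{U}_q^H\ma{H}_{ab}^H\big(\ma{\Sigma}_b^{\text{BD}}\big)^{-1}\ma{H}_{ab}\ma{U}_q\,\ma{X}\Big)\;\geq\;\text{tr}\Big(\ma{U}_q^H\ma{H}_{ae}^H\big(\ma{\Sigma}_{e-a}^{\text{BD}}\big)^{-1}\ma{H}_{ae}\ma{U}_q\,\ma{X}\Big),\qquad\forall\,\ma{X}\succeq 0,
\end{align}
which is equivalent to the semidefinite ordering \cref{Lemma_SemiDefiniteInequality} of the two effective channel-gain matrices on the range of $\ma{Q}_a^{\star}$. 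From there, $|\ma{I}+\ma{A}\ma{\Lambda}_q|\geq|\ma{I}+\ma{B}\ma{\Lambda}_q|$ for $\ma{A}\succeq\ma{B}$ together with $|\ma{I}+\ma{A}\ma{B}|=|\ma{I}+\ma{B}\ma{A}|$ gives $\tilde{C}^{\text{BD}}_{ab}\geq 0$ directly, as in \cref{lemma_last_contradiction} --- no strict-improvement direction, no sign dichotomy, and no higher-order expansion is needed. In short: your scalar first-order information along one ray is insufficient, but the same first-order information quantified over all PSD rays upgrades to a matrix inequality that is already incompatible with $\tilde{C}^{\text{BD}}_{ab}<0$.
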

\begin{proof}
See Appendix~\ref{appendix_lemma_BD_Positive_C}. 
\end{proof}  \vspace{0mm}  }
\subsection{Extended SUIAP for bidirectional-SEE maximization}
{\MinorRevOmid{By employing the results of Lemma~\ref{lemma_BD_Positive_C},}} it is observed that the SEE maximization problem (\ref{opt_SEEMax_BD_Relaxed}) shares a similar mathematical structure in relation to the transmit covariance matrices, i.e., $\ma{Q}_{\mathcal{X}}, \ma{W}_{\mathcal{X}}$, ${\mathcal{X}} \in \{a,b\}$ as addressed for (\ref{problem_SEE_max_1}). \revPeterNew{Hence, a procedure similar to \revPeter{the} SUIAP algorithm can be} {\MinorRR{employed to obtain an iterative solution, with a guaranteed convergence to a point satisfying KKT conditions.}} The computational complexity of each Dinkelbach step is obtained similar to (\ref{SUAIP_complexity}), where $n = 2N_{A}^2 + 2N_{B}^2 $, $n_Y = 2M_{B} + 2M_{A} + 2M_{E}$ and $n_F = 4N_{B} + 4N_{A}+2$.                 
  
 \vspace{-6mm}

\section{Secrecy Energy Efficiency Maximization with Statistical CSI} \label{sec_SSUM}
It is usually challenging to obtain an accurate estimate of $\ma{H}_{ae}$ and $\ma{H}_{be}$, due to the lack of collaboration from Eve and mobility. In this part, we consider the case where the channel matrices are known only partially, i.e., \revPeterNew{only statistical knowledge of $\ma{H}_{ae}, \ma{H}_{be}$ is available}\footnote{\revOmid{\revPeterNew{For a non-collaborative eavesdropper, statistical CSI can be obtained via blind channel estimation~\cite{CSI_BlindEstimation}, by channel estimation based on previously overheard transmissions of the eavesdropper, or via location-based estimation methods~\cite{Piropa_CSI} in case information on the environment or potential eavesdropper locations is known. For the scenarios where no CSI-related information can be extracted, the distribution will hold a uniform probability over all feasible channel values.}}}, considering a similar setup as defined in Section~\ref{sec:model}. \revPeterNew{Remark: In this section Alice is considered to be a HD node.} An optimization problem for maximizing the statistical expectation of the SEE is written as 
\begin{subequations} \label{problem_SSUM_1}
\begin{align}
\underset{\mathbb{Q}}{ \text{max}} \;\;   & \mathbb{E}_{\ma{H}_{ae}, \ma{H}_{be}} \left\{ {\text{SEE} \left( \mathbb{Q} \right) } \right\}   \label{problem_SSUM_1_a} \\
{\text{s.t.}} \;\;\; & \text{(\ref{problem_SEE_max_1_c})}.  
\end{align}  
\end{subequations}
It is worth mentioning that the consideration of statistical CSI on secrecy capacity with single antenna receivers is studied in \cite{lin2013secrecy}, considering a fast-fading channel case and extended for SEE maximization in \cite{zappone2016energy}, assuming HD operation of the nodes. In this work, we consider a more general case, \revPeter{where} the channel to Eve may be stationary, however not known accurately due to the lack of collaboration from Eve. In order to turn (\ref{problem_SSUM_1}) into a tractable form we write
\begin{align} \label{Eq_SSUM_SAA}
 \mathbb{E}_{\ma{H}_{ae}, \ma{H}_{be}} \left\{ {\text{SEE}} \right\}  = \frac{ \mathbb{E}_{\ma{H}_{ae}, \ma{H}_{be}} \left\{ \left\{ \tilde{C}_{ab} \left(\mathbb{Q}\right) \right\}^{+} \right\} }{ P_{\text{tot}} \left(\mathbb{Q} \right) } 
{\approx} \frac{ \frac{1}{|\mathbb{G}_C|} \sum_{i \in \mathbb{G}_C} \left\{ \tilde{C}_{\text{s},i} \left(\mathbb{Q} \right) \right\}^{+}  } {P_{\text{tot}} \left(\mathbb{Q} \right)} =: \text{SAA} \left( \mathbb{Q} \right),  
\end{align} 
where the latter is obtained via \revOmid{sample average} approximation (SAA) \cite{kim2015guide}, such that the equality holds \revPeter{for} $|\mathbb{G}_C| \rightarrow \infty$, $\mathbb{G}_C$ being the index set of the sampled channel realizations. Moreover, $\tilde{C}_{\text{s},i} \left(\mathbb{Q} \right) := \tilde{C}_{ab} \left(\mathbb{Q} , \ma{H}_{ae,i}, \ma{H}_{be,i} \right)$ where $\ma{H}_{ae,i}, \ma{H}_{be,i}$ represent the $i$-th realization of the channel matrices drawn from the given distribution\revPeter{s}. The approximated problem is hence expressed as 
\begin{align} \label{problem_SSUM_2}
\underset{\mathbb{Q}}{ \text{max}} \;\;\; \text{SAA}\left( \mathbb{Q} \right)\;\;\; {\text{s.t.}} \;\;\; \text{(\ref{problem_SEE_max_1_c})}.  
\end{align}  
\revOmid{Note that the above formulation is still challenging due to three reasons:} Firstly, while the application of SAA turns the statistical expectation into a linear sum for any arbitrary channel distribution, it results in a high computational complexity as $|\mathbb{G}_C|$ increases. This calls for a smart choice of $|\mathbb{G}_C|$, compromising accuracy with algorithm complexity. Secondly, unlike the scenario with perfect CSI and also the case presented in \cite{lin2013secrecy, zappone2016energy} considering a fast fading channel situation, the $\{\}^{+}$ operation may not be ignored. The reason for this is that some of the channel realizations may result in a negative $\tilde{C}_{\text{s}}$, while the statistical expectation remains effectively positive. And \revPeterNew{thirdly}, similar to the studied problem in (\ref{problem_SEE_max_1}), the above objective presents a non-concave over affine fractional program which is not tractable in general.

\subsection{Successive selection and statistical lower bound maximization (SSSLM)}
In order to address the aforementioned challenges, we propose a successive selection and statistical lower bound maximization (SSSLM) algorithm, which converges to a stationary point of (\ref{problem_SSUM_2})\footnote{\revOmid{Please note that in contrast to Subsection~\ref{subsec_SUIAP}, the operating objective in this part is not a differentiable one, \revPeterNew{hence it violates the conditions given by SIA \cite{marks1978technical}.} In this regard we follow a variation of SIA, i.e., the successive upper-bound minimization (SUM) method \cite{razaviyayn2013unified}, generalizing the convergence arguments in SIA-based methods for non-smooth problems. The proposed SSSLM algorithm is composed of three nested loops: Separation of the SAA into smooth and non-smooth parts at the outer loop, construction of an effective lower bound to SAA as the intermediate loop, and maximization of the constructed bound in the inner loop.}}.
A detailed description of the algorithm steps is given in the following.

\subsubsection{Initialization}  \label{subsec_SSUM_init}
The algorithm starts by generating the channel instances $\ma{H}_{ae,i}, \ma{H}_{be,i} , \forall i \in \mathbb{G}_C$, drawn from the known statistical distribution of the channels. \revOmid{The number of channel realizations, i.e., $\left| \mathbb{G}_C \right|$, should be chosen large enough to capture the channel statistics in SAA with adequate accuracy, however, should be kept small to reduce computational complexity. The analytical expression for the choice of $\left| \mathbb{G}_C \right|$ is given in \cite[Theorem 5.18]{ruszczynski2003stochastic}, depending on the required statistical accuracy and the given probability distribution.} For the initialization of $\mathbb{Q}$, we follow the approximation  
\begin{align}
\mathbb{E}_{\ma{H}_{ae}, \ma{H}_{be}} \left\{ {\text{SEE}} \left(\mathbb{Q},  \ma{H}_{ae}, \ma{H}_{be} \right)  \right\} \approx {\text{SEE}} \left(\mathbb{Q},  \mathbb{E} \left\{ \ma{H}_{ae} \right\}, \mathbb{E}\left\{ \ma{H}_{be}\right\} \right), 
\end{align}
where the expectations $\mathbb{E} \left\{ \ma{H}_{ae} \right\}, \mathbb{E} \left\{ \ma{H}_{be} \right\}$ are obtained from the statistical distribution of the channels. Note that the right side of the approximation corresponds to the objective addressed in Subsection~\ref{subsec_SUIAP}, where the \revOmid{SUIAP algorithm} is applied. The obtained solution from \revOmid{SUIAP} is then used as an initialization to the SSSLM algorithm. \vspace{-2mm}
\subsubsection{Outer loop}
In each outer iteration, the objective is decomposed as 
\begin{align}  \label{SSUM_decomposition}
\text{SAA} \left( \mathbb{Q} \right) =  \frac{\sum_{i \in \mathbb{G}_{{C_1}}} \left\{\tilde{C}_{\text{s},i} \left(\mathbb{Q} \right) \right\}^+  + \sum_{i \in \mathbb{G}_{{C_2}}} \left\{ \tilde{C}_{\text{s},i} \left(\mathbb{Q} \right) \right\}^+  }{|\mathbb{G}_C| P_{\text{tot}} \left(\mathbb{Q} \right) }
\end{align}
by separating the set of channel realizations into the disjoint sets $\mathbb{G}_{{C_1}}$ and $\mathbb{G}_{{C_2}}$, such that $\mathbb{G}_{{C}} = \mathbb{G}_{{C_1}} \cup \mathbb{G}_{{C_2}}$. In particular, the set $\mathbb{G}_{{C_1}}$ is updated in each outer iteration as 
\begin{align}  \label{SSUM_def_F_C_1}
\mathbb{G}_{{C_1}}^{(\text{new})} \leftarrow \left\{  \forall i \;\; \vert \;\; i \in {\mathbb{G}_{{C_1}}} \;\;  \text{or}  \;\;  \tilde{C}_{\text{s},i} \left( \mathbb{Q} \right) = 0  \right\}, 
\end{align} 
\revOmid{where $\mathbb{Q}$ is given from the last intermediate loop, and results in the separation of smooth and non-smooth parts of the objective in (\ref{SSUM_decomposition})}. The algorithm converges when the constructed set ${\mathbb{G}_{{C_1}}}$ does not change. As it will be elaborated, the set members in $\mathbb{G}_{{C_1}}$ incur a high computational complexity, but are capable of resolving the non-smooth points by maintaining the same directional derivative as the SAA. On the other hand, the set members in $\mathbb{G}_{{C_2}}$ are resolved with lower computational complexity, however, they are not capable of handling non-smooth situations.  \vspace{-2mm}
%
\subsubsection{Intermediate loop}
In each intermediate iteration a lower bound is constructed from the original objective SAA, namely $\text{SAA}_{{LB}}$, using the value of $\mathbb{Q}$ from the last inner loop, i.e., $\mathbb{Q}_0$. In order to construct $\text{SAA}_{{LB}}$ we undertake three steps. Firstly, the operator $\{\}^{+}$ is removed from SAA for $i \in \mathbb{G}_{{C_2}}$, which results in a global lower bound. Secondly, concave and tight lower bounds of the functions $\tilde{C}_{\text{s},i}$ are constructed at the point $\mathbb{Q}_0$, denoted as $\hat{C}_{\text{s},i} \left(\mathbb{Q} , \mathbb{Q}_0 \right)$, by applying the inequality (\ref{eq_SEE_max_SEE_Taylor}) on the convex parts. Please note that the value of $\tilde{C}_{\text{s},i}$ functions may be negative at $\mathbb{Q}_0$ for some $i\in \mathbb{G}_{C_2}$, resulting in a bias to the original objective. In order to obtain a tight lower bound, we define the set
\begin{align}  \label{SSUM_set_def_C_2_+}
{\mathbb{G}_{{C}_2^+}} := \left\{  \forall i \;\; \vert \;\; i \in {\mathbb{G}_{{C_2}}} , \; \tilde{C}_{\text{s},i} \left( \mathbb{Q}_0 \right) \geq 0  \right\} ,
\end{align}  
representing the subset of channel realizations resulting in a non-negative $\tilde{C}_{\text{s},i}$ at $\mathbb{Q}_0$. The corresponding lower bound function is then obtained as
\begin{align} \label{SSUM_SAA_LB} 
& \text{SAA}_{{LB}} \left(\mathbb{Q} , \mathbb{Q}_0 \right) := \frac{ \sum_{i \in \mathbb{G}_{{C_1}}}  \left\{ \hat{C}_{\text{s},i} \left(\mathbb{Q} , \mathbb{Q}_0 \right) \right\}^+  + \sum_{i \in \mathbb{G}_{{C_2^+}}} \hat{C}_{\text{s},i} \left(\mathbb{Q} , \mathbb{Q}_0 \right) }{ |\mathbb{G}_{{C}}| P_{\text{tot}} \left(\mathbb{Q} \right) }.
\end{align}  
It can be verified that the constructed lower bound is tight at the point of approximation, i.e., $\text{SAA} \left(\mathbb{Q}_0\right) = \text{SAA}_{LB} \left(\mathbb{Q}_0, \mathbb{Q}_0\right)$, see Appendix~\ref{appendix_Lemma_SSUM_convergence}. The obtained lower bound is then optimally maximized in the inner loop. The iterations of the intermediate loop converge when $\mathbb{Q}_0$, and hence $\text{SAA}_{{LB}}$, (almost) does not change in subsequent intermediate iterations. 
\vspace{-2mm}
\subsubsection{Inner loop}
The inner loop is dedicated to optimally maximize $\text{SAA}_{{LB}}$, under the original problem constrains (\ref{problem_SSUM_2}). Note that the $\text{SAA}_{{LB}}$ is not tractable in the current form, due to the $\{\}^+$ operation. In order to obtain the optimum solution we equivalently write the maximization problem in the inner loop as 
\begin{align} \label{SSUM_inner_SAA_bar}
\underset{ a_i \in\{0,1\}, i \in \mathbb{G}_{C_1}}{\text{max}} \;\; \underset{ \mathbb{Q}}{\text{max}} \;\;  & \overbar{\text{SAA}_{{LB}}}, \;\;\; {\text{s.t.}} \;\;\; \text{(\ref{problem_SEE_max_1_c})},
\end{align}
where $\overbar{\text{SAA}_{{LB}}}$ is obtained by replacing the terms $\left\{ \hat{C}_{\text{s},i}\right\}^+$ in (\ref{SSUM_SAA_LB}) by $a_i \hat{C}_{\text{s},i}$. Please note that for fixed values of $a_i$, $i\in \mathbb{G}_{C_1}$, the function $\overbar{\text{SAA}_{{LB}}}$ is a concave over affine fraction, and can be maximized to optimality via the application of the Dinkelbach algorithm. Hence (\ref{SSUM_inner_SAA_bar}) can be solved by repeating the Dinkelbach algorithm for all $2^{|\mathbb{G}_{C_1}|}$ possible combinations of $a_i$, $i\in \mathbb{G}_{C_1}$, however, requiring a large number of \revPeter{Dinkelbach} iterations. The optimization problem corresponding to the $k$-th inner iteration is expressed as     

\begin{subequations} \label{problem_SSUM_max_Dink}
\begin{align}
\underset{\ma{a}^{[k]} \in \mathbb{A}^{[k]} }{ \text{max}} \underset{\mathbb{Q}^{[k]}}{ \text{max}} \;\;   & \overbar{{\text{SAA}}_{\overline{LB}}} \left(\mathbb{Q}^{[k]} , \mathbb{Q}^{[0]}, \ma{a}^{[k]} \right) -  \lambda^{[k-1]}   P_{\text{tot}} \left(\mathbb{Q}^{[k]} \right) \label{problem_SSUM_max_Dink_a} \\
{\text{s.t.}} \;\;\; & \text{(\ref{problem_SEE_max_1_c})}. 
\end{align}  
\end{subequations}
where $\overbar{{\text{SAA}}_{\overline{LB}}}$ is \revOmid{the numerator} in $\overbar{\text{SAA}_{{LB}}}$, and $\mathbb{Q}^{[0]}$ is the point for the construction of ${\text{SAA}}_{{LB}}$, given from the intermediate loop. Moreover, the vector $\ma{a} \in \{0,1\}^{|\mathbb{G}_{C_1}|}$ stacks the values of $a_i, \forall i \in \mathbb{G}_{C_1}$, and $\mathbb{A}^{[k]} \subset \{0,1\}^{|\mathbb{G}_{C_1}|}$. It is observed that for a given $\ma{a}^{[k]}, \lambda^{[k-1]}$, (\ref{problem_SSUM_max_Dink}) is a jointly convex optimization problem, and is solved to optimality via MAX-DET algorithm \cite{vandenberghe1998determinant}. Hence, the optimum $\ma{a}^{[k]}, \mathbb{Q}^{[k]}$ are obtained by repeating the MAX-DET algorithm for all combinations $\ma{a}^{[k]} \in \mathbb{A}^{[k]}$. The value of $\lambda$ is then updated by applying the obtained $\mathbb{Q}^{[k]},  \ma{a}^{[k]}$ as 
\begin{align}\label{problem_SSUM_max_Dink_lambda}
\lambda^{[k]} = \overbar{\text{SAA}_{\overline{LB}}} \left(\mathbb{Q}^{[k]} , \mathbb{Q}^{[0]}, \ma{a}^{[k]} \right)/ P_{\text{tot}} \left(\mathbb{Q}^{[k]} \right). 
\end{align}
Please note that the set $\mathbb{A}^{[k]}$, is initialized as $\{0,1\}^{|\mathbb{G}_{C_1}|}$ but is reduced in each iteration. The following lemma clarifies this reduction. \vspace{-3mm}
\begin{lemma} \label{lemma_SSUM_ModifiedDinkelbach}
Let $g_k (\ma{a}_0)$ be the optimal value of the objective (\ref{problem_SSUM_max_Dink}) at inner iteration $k$, for the given combination $\ma{a}^{[k]} = \ma{a}_0$. 
Then, if $g_k(\ma{a}_0)$ is negative, then the combination ${\ma{a}}_0$ will not be an optimum combination.  
\end{lemma}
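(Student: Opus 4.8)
The plan is to read the statement through the standard Dinkelbach lens: problem~(\ref{SSUM_inner_SAA_bar}) is a maximum over the finitely many combinations $\ma{a}\in\{0,1\}^{|\mathbb{G}_{C_1}|}$ of concave-over-affine fractions, and the inner iterations~(\ref{problem_SSUM_max_Dink})--(\ref{problem_SSUM_max_Dink_lambda}) are the usual parametric updates. Let $\lambda^{\star}$ denote the optimal value of~(\ref{SSUM_inner_SAA_bar}), i.e.\ the largest ratio $\overbar{{\text{SAA}}_{\overline{LB}}}(\mathbb{Q},\mathbb{Q}^{[0]},\ma{a})/P_{\text{tot}}(\mathbb{Q})$ over all feasible $\mathbb{Q}$ (satisfying~(\ref{problem_SEE_max_1_c})) and all $\ma{a}$. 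First I would note that this maximum is attained: the feasible set for $\mathbb{Q}$ is compact, the numerator is continuous (a finite sum of concave $\hat{C}_{\text{s},i}$ terms), and $P_{\text{tot}}\ge P_{A,0}+P_{B,0}+P_{\text{FD}}>0$ by~(\ref{eq_model_power_Alice})--(\ref{eq_model_power_Bob}); write $(\ma{a}^{\star},\mathbb{Q}^{\star})$ for a maximizer and recall that, for fixed $\mathbb{Q}$, choosing $a_i=1$ exactly when $\hat{C}_{\text{s},i}(\mathbb{Q},\mathbb{Q}^{[0]})\ge 0$ turns $\overbar{{\text{SAA}}_{\overline{LB}}}$ into the numerator of $\text{SAA}_{{LB}}$ in~(\ref{SSUM_SAA_LB}).

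The first technical step is to show that $\lambda^{[k-1]}\le\lambda^{\star}$ at every inner iteration. By~(\ref{problem_SSUM_max_Dink_lambda}), $\lambda^{[k-1]}=\overbar{{\text{SAA}}_{\overline{LB}}}(\mathbb{Q}^{[k-1]},\mathbb{Q}^{[0]},\ma{a}^{[k-1]})/P_{\text{tot}}(\mathbb{Q}^{[k-1]})$, which is the ratio evaluated at a feasible pair; since $a_i\hat{C}_{\text{s},i}\le\{\hat{C}_{\text{s},i}\}^{+}$ for every $a_i\in\{0,1\}$, this ratio is bounded above by $\text{SAA}_{{LB}}$ at $\mathbb{Q}^{[k-1]}$ (up to the constant $|\mathbb{G}_C|$), hence by $\lambda^{\star}$. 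For the very first inner iteration this also holds because the parameter is initialized from $\text{SAA}_{{LB}}(\mathbb{Q}^{[0]},\mathbb{Q}^{[0]})=\text{SAA}(\mathbb{Q}^{[0]})\ge 0\le\lambda^{\star}$, using the tightness property proved in Appendix~\ref{appendix_Lemma_SSUM_convergence} together with the fact that $\text{SAA}$ is a sum of nonnegative $\{\cdot\}^{+}$ terms.

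The core of the argument is then a contrapositive. Suppose $\ma{a}_0$ were an optimal combination of~(\ref{SSUM_inner_SAA_bar}), i.e.\ $\max_{\mathbb{Q}}\overbar{{\text{SAA}}_{\overline{LB}}}(\mathbb{Q},\mathbb{Q}^{[0]},\ma{a}_0)/P_{\text{tot}}(\mathbb{Q})=\lambda^{\star}$, attained at a feasible $\tilde{\mathbb{Q}}$, so that $\overbar{{\text{SAA}}_{\overline{LB}}}(\tilde{\mathbb{Q}},\mathbb{Q}^{[0]},\ma{a}_0)=\lambda^{\star}P_{\text{tot}}(\tilde{\mathbb{Q}})$. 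Feeding $\tilde{\mathbb{Q}}$ into the objective of~(\ref{problem_SSUM_max_Dink}) for the combination $\ma{a}_0$ gives
\[
g_k(\ma{a}_0)\ \ge\ \overbar{{\text{SAA}}_{\overline{LB}}}(\tilde{\mathbb{Q}},\mathbb{Q}^{[0]},\ma{a}_0)-\lambda^{[k-1]}P_{\text{tot}}(\tilde{\mathbb{Q}})\ =\ \big(\lambda^{\star}-\lambda^{[k-1]}\big)\,P_{\text{tot}}(\tilde{\mathbb{Q}})\ \ge\ 0,
\]
by the previous step and $P_{\text{tot}}>0$. Thus $g_k(\ma{a}_0)<0$ is incompatible with $\ma{a}_0$ being an optimal combination, which is exactly the claim; the same bound also shows $\ma{a}_0$ cannot be the combination selected by the inner iteration, since by induction the unpruned optimal combination $\ma{a}^{\star}$ still lies in $\mathbb{A}^{[k]}$ and satisfies $g_k(\ma{a}^{\star})\ge 0>g_k(\ma{a}_0)$, justifying its removal from $\mathbb{A}^{[k]}$.

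I expect the genuinely delicate points to be almost entirely notational rather than mathematical: pinning down what "optimum combination'' refers to (an optimizer of the fractional inner problem~(\ref{SSUM_inner_SAA_bar}) rather than of the $\lambda$-parametrized surrogate~(\ref{problem_SSUM_max_Dink})), carrying the harmless $|\mathbb{G}_C|$ normalization consistently between~(\ref{SSUM_SAA_LB}) and~(\ref{problem_SSUM_max_Dink}), and treating the first inner iteration so that $\lambda^{[k-1]}\le\lambda^{\star}$ holds uniformly. Everything else is the textbook observation that the parametric optimum is nonnegative whenever the parameter underestimates the optimal ratio.
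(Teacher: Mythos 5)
Your proof is correct and rests on the same Dinkelbach machinery as the paper's: positivity of $P_{\text{tot}}$, the fact that the running parameter $\lambda^{[k-1]}$ never exceeds the optimal ratio $\lambda^{\star}$, and the characterization of the optimum via the sign of the parametric objective. The paper argues forward (a negative $g_k(\ma{a}_0)$ stays negative as $\lambda$ increases, while the optimal combination reaches zero at $\lambda^{\star}$), whereas you take the contrapositive at the current iteration and, usefully, make explicit the underestimation property $\lambda^{[k-1]}\le\lambda^{\star}$ that the paper leaves implicit.
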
 \vspace{-3mm}
\begin{proof}
Due to the monotonic improvement of $\lambda$ in every iteration, and the fact that $P_{\text{tot}} \geq 0$, the value of $g_k (\ma{a}_0)$ will never improve after further iterations. This also results in a negative value of $g_k (\ma{a}_0)$ at the optimality. Since at least one of the combinations $\ma{a} \in \{0,1\}^{|\mathbb{G}_{C_1}|}$ equalizes the objective to zero at the optimality, the combination ${\ma{a}_0}$ will never be optimal.  
\end{proof} 
As a result of Lemma~\ref{lemma_SSUM_ModifiedDinkelbach}, once a combination $\ma{a}_0$ results in a negative value of the objective, then it is safely removed from $\mathbb{A}$ for the next iteration, see Algorithm~\ref{alg_SSSLM}. Note that the above process reduces the required computational complexity, compared to the separately applying the Dinkelbach method on all combinations, in two ways. Firstly, the parameter $\lambda$ is only updated jointly, for all combinations $\ma{a} \in \mathbb{A}$. Secondly, the monotonic reduction in $|\mathbb{A}|$ in each iteration, results in a smaller computational demand in finding the solution to (\ref{problem_SSUM_max_Dink}). 
  \vspace{-1mm}
\subsubsection{Convergence}
The proposed SSSLM algorithm converges to a stationary point of the original optimization problem (\ref{problem_SSUM_2}). In order to observe this, we first verify the convergence of the algorithm. Afterwards, we show that the converging point is a stationary point of (\ref{problem_SSUM_2}). \par
It is observed that the constructed lower bound in each step of the intermediate loop is maximized to the optimality via the application of the modified Dinkelbach algorithm. On the other hand, the value of ${\text{SAA}}_{{LB}} ( \mathbb{Q})$ after the construction of the new lower bound in each intermediate iteration experiences an improvement. This is grounded on the re-calculation of $\hat{C}_{\text{s},i}$ at the point of approximation, and elimination of the channel instances from $\mathbb{G}_{C_2}$ which result in a negative $\tilde{C}_{\text{s},i}$. Since the both the aforementioned updates result in a monotonic improvement of ${\text{SAA}}_{{LB}} ( \mathbb{Q})$ and as the $\text{SAA}$ is bounded from above, the iterations of inner and intermediate loop will result in a necessary convergence. The convergence of the intermediate loop subsequently results in the necessary convergence of the outer loop, due to the monotonic increase of $|\mathbb{G}_{C_1}|$ after each outer iteration, and the fact that $|\mathbb{G}_{C_1}| \leq |\mathbb{G}_{C}|$. \par
In order to argue the properties of the converging point on the original objective, we observe that neither the SAA nor ${\text{SAA}}_{{LB}} ( \mathbb{Q})$ are necessarily differentiable at the point of convergence. This invalidates the convergence arguments used for \revOmid{SUIAP algorithm} from \cite{marks1978technical}. In this regard, we follow the guidelines given by the SUM method \cite{razaviyayn2013unified}, generalizing the convergence arguments in SIA-based methods for non-smooth problems. \vspace{-2mm}
\begin{lemma} \label{lemma_SSUM_Convergence}
Let $\mathbb{Q}^\star$ be a solution of SSSLM. Then the function SAA, i.e., original problem objective, and ${\text{SAA}}_{{LB}}$, i.e., the constructed lower bound at the last intermediate iteration, are tight and share the same directional derivatives at $\mathbb{Q}^\star$.
\end{lemma}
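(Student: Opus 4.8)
The plan is to exploit two facts that hold exactly at a point $\mathbb{Q}^\star$ returned by SSSLM. First, since the intermediate loop has converged, $\mathbb{Q}^\star$ equals the base point $\mathbb{Q}_0$ at which the last lower bound $\text{SAA}_{LB}(\cdot,\mathbb{Q}_0)$ was constructed. Second, since the outer loop has converged, the update rule (\ref{SSUM_def_F_C_1}) forces $\tilde{C}_{\text{s},i}(\mathbb{Q}^\star)\neq 0$ for every $i\in\mathbb{G}_{C_2}$, i.e.\ each such term is strictly positive or strictly negative at $\mathbb{Q}^\star$, so the only surviving kinks of the $\{\cdot\}^+$ operators are at indices in $\mathbb{G}_{C_1}$, where \emph{both} SAA and $\text{SAA}_{LB}$ still carry the $\{\cdot\}^+$. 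I would also record the per-index properties of the surrogates: by their construction through the Taylor inequality (\ref{eq_SEE_max_SEE_Taylor}), each $\hat{C}_{\text{s},i}(\cdot,\mathbb{Q}^\star)$ is a global concave lower bound of $\tilde{C}_{\text{s},i}$ that is tight and gradient-matching at $\mathbb{Q}^\star$, i.e.\ $\hat{C}_{\text{s},i}(\mathbb{Q}^\star,\mathbb{Q}^\star)=\tilde{C}_{\text{s},i}(\mathbb{Q}^\star)$ and $\nabla\hat{C}_{\text{s},i}(\mathbb{Q}^\star,\mathbb{Q}^\star)=\nabla\tilde{C}_{\text{s},i}(\mathbb{Q}^\star)$.

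For the tightness claim I would compare numerators term by term, the common denominator $|\mathbb{G}_C|P_{\text{tot}}$ being identical in both functions. For $i\in\mathbb{G}_{C_1}$, tightness of $\hat{C}_{\text{s},i}$ gives $\{\hat{C}_{\text{s},i}(\mathbb{Q}^\star,\mathbb{Q}^\star)\}^+=\{\tilde{C}_{\text{s},i}(\mathbb{Q}^\star)\}^+$. For $i\in\mathbb{G}_{C_2}$, I split on the sign of $\tilde{C}_{\text{s},i}(\mathbb{Q}^\star)$: if it is positive then $i\in\mathbb{G}_{C_2^+}$ and the $\text{SAA}_{LB}$ term equals $\hat{C}_{\text{s},i}(\mathbb{Q}^\star,\mathbb{Q}^\star)=\tilde{C}_{\text{s},i}(\mathbb{Q}^\star)=\{\tilde{C}_{\text{s},i}(\mathbb{Q}^\star)\}^+$; if it is negative then $i\notin\mathbb{G}_{C_2^+}$, so that term is absent from $\text{SAA}_{LB}$ while $\{\tilde{C}_{\text{s},i}(\mathbb{Q}^\star)\}^+=0$ in SAA. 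In every case the numerators agree, hence $\text{SAA}(\mathbb{Q}^\star)=\text{SAA}_{LB}(\mathbb{Q}^\star,\mathbb{Q}^\star)$.

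For the directional derivatives I would use the elementary rule that, for a directionally differentiable $f$, the directional derivative of $\{f\}^+$ at $\mathbb{Q}^\star$ along $\ma{d}$ equals $f'(\mathbb{Q}^\star;\ma{d})$ if $f(\mathbb{Q}^\star)>0$, equals $0$ if $f(\mathbb{Q}^\star)<0$, and equals $\{f'(\mathbb{Q}^\star;\ma{d})\}^+$ if $f(\mathbb{Q}^\star)=0$; in each branch the result depends on $f$ only through $f(\mathbb{Q}^\star)$ and $\nabla f(\mathbb{Q}^\star)$. Applying this to each numerator summand with $f=\tilde{C}_{\text{s},i}$ for SAA and $f=\hat{C}_{\text{s},i}(\cdot,\mathbb{Q}^\star)$ for $\text{SAA}_{LB}$, and invoking the two facts above (equal function values, equal gradients at $\mathbb{Q}^\star$, and the genuinely non-smooth case confined to $i\in\mathbb{G}_{C_1}$ where both objectives still carry $\{\cdot\}^+$), the directional derivative of each numerator term coincides between the two functions. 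Since the common denominator is affine and strictly positive — hence smooth and nonvanishing at $\mathbb{Q}^\star$ — the quotient rule for directional derivatives transfers this equality to $\text{SAA}$ and $\text{SAA}_{LB}$ themselves, which is exactly the hypothesis needed to invoke the SUM convergence result \cite{razaviyayn2013unified}.

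The main obstacle I anticipate is the bookkeeping around the $\max(0,\cdot)$ non-smoothness: one has to argue that once the outer loop stabilizes, the two functions are non-differentiable at exactly the same indices ($i\in\mathbb{G}_{C_1}$) and in a compatible way, and then justify carefully that the directional-derivative calculus (directional differentiability, together with the chain and quotient rules) indeed goes through for the composition $\{\cdot\}^+$ of the surrogates and for the resulting fraction. Once these are settled, the remaining algebra is routine; the detailed argument is given in Appendix~\ref{appendix_Lemma_SSUM_convergence}.
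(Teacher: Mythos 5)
Your proposal is correct and follows essentially the same route as the paper's Appendix proof: term-by-term tightness via the tightness of each $\hat{C}_{\text{s},i}$ at $\mathbb{Q}^\star$ together with the sign split on $\mathbb{G}_{C_2}$, and then the same four-case analysis of the directional derivative of $\{\cdot\}^+$ (your general rule for $\nabla_d\{f\}^+$ is exactly the case distinction the paper writes out), combined with the quotient rule for the positive affine denominator. The one point you make explicit that the paper leaves implicit — that outer-loop convergence of $\mathbb{G}_{C_1}$ guarantees $\tilde{C}_{\text{s},i}(\mathbb{Q}^\star)\neq 0$ for all $i\in\mathbb{G}_{C_2}$, so the removed $\{\cdot\}^+$ on $\mathbb{G}_{C_2^+}$ cannot hide a kink — is a worthwhile clarification, but it does not change the argument.
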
 \vspace{-2mm}
\begin{proof}
See Appendix~\ref{appendix_Lemma_SSUM_convergence}.
\end{proof} 
The results of Lemma~\ref{lemma_SSUM_Convergence}, together with the fact that ${\text{SAA}}_{{LB}} (\mathbb{Q}) \leq {\text{SAA}}(\mathbb{Q})$ for any feasible $\mathbb{Q}$, jointly satisfy the required assumption set \cite[Assumption~1]{razaviyayn2013unified}, and guarantee that the obtained converging point is indeed a stationary point of the original problem.  \par
\vspace{-1mm}
\begin{remark}
\MinorRevOmid{Similar to the SUIAP algorithm, the \revPeterNew{global} optimality of the obtained stationary point via the SSSLM algorithm may not be guaranteed, and the obtained solution depends on the used initialization, see Remark~\ref{remarl_SUIAP_optimalitagap}. In Subsection~\ref{sec_AlgorithmAnalysis}, the optimal performance is numerically evaluated by repeating the SSSLM algorithm with several initializations, where the average $1-3\%$ relative gap is observed for the proposed initialization in Subsection~\ref{subsec_SSUM_init}. } 
\end{remark}\vspace{0mm}
\subsubsection{Computational complexity}
The computational complexity of the algorithm is dominated by the maximization defined in (\ref{problem_SSUM_max_Dink}), solved via the MAX-DET algorithm in each inner iteration. The associated arithmetic complexity\footnote{The resulting computational complexity is dominated by the SAA sample size, due to the construction of (\ref{Eq_SSUM_SAA}), as well as the iterations of \revPeterNew{the} outer loop. In order to reduce the resulting computational efforts, the algorithm can be customized to a specific CSI {statistic}, e.g., by substituting the SAA a more efficient structure. \MinorRR{In this case, the achievable SEE must be approximated for a specific statistics in a tractable form, which is then used for the purpose of performance optimization in the design algorithm, e.g., see \cite{Ali_FD_FastFading} for a similar approach with Gaussian channels but for a different system objective. Another approach is to eliminate the operations in the outer loop when occurrence of the non-smooth points is not frequent. Moreover, the obtained initialization point in Subsection~\ref{subsec_SSUM_init} can serve as an intuitive low complexity solution. }  } is hence upper-bounded similar to (\ref{SUAIP_complexity}), where $\gamma_{\text{in}} \propto 2^{|\mathbb{G}_{C_1}|}$, $n = 2N_{A}^2  + N_{B}^2 $, $n_Y = |\mathbb{G}_{C}|(2M_{B} + 2M_{E})$, $n_F = 4N_{A}+2N_{B}+2$ \vspace{-5mm}
\begin{algorithm} 
 \caption{{SEE maximization using statistical CSI, via successive selection and statistical lower bound maximization (SSSLM). $C_{\text{min}}$ ($\lambda_{\text{min}}$) represents the convergence threshold for the intermediate (inner) iterations. } } \label{alg_SSSLM}  
 {\scriptsize{	\begin{algorithmic}[1] 
\State{$k,l,m,\lambda^{[0,0,0]} \gets  {0} ; \mathbb{G}_{C_1}^{[0]} \gets  {\emptyset};\; \mathbb{G}_{C}, \mathbb{Q}^{[0,0,0]}  \gets  \text{Subsection~\ref{subsec_SSUM_init}} ;$}  \Comment{initialize} 
\Repeat  \Comment{outer loop} 
\State{$m \leftarrow  m + 1 ; \; \lambda^{[0,0,m]} \leftarrow  \lambda^{[k,l,m-1]}, \mathbb{Q}^{[0,0,m]} \leftarrow  \mathbb{Q}^{[k,l,m-1]}; \; \mathbb{G}_{C_1}^{[m]} \leftarrow  \text{(\ref{SSUM_def_F_C_1})} ; \; l \leftarrow  0;$} 
\Repeat \Comment{intermediate loop}
\State{$l \leftarrow  l + 1; \; {\mathbb{G}_{{C}_2^+}} \gets \text{(\ref{SSUM_set_def_C_2_+})} ;  \text{SAA}_{{LB}} \gets  \text{(\ref{SSUM_SAA_LB})} ;$}
\Repeat \Comment{inner loop }
\State{$k \leftarrow  k + 1;\; \left\{\mathbb{Q}^{[k,l,m]}, \lambda^{[k,l,m]} \right\}  \gets \text{Dinkelbach's alg. (\ref{problem_SSUM_max_Dink})-(\ref{problem_SSUM_max_Dink_lambda})};$}
\Until{$\text{(\ref{problem_SSUM_max_Dink_a})} \leq C_{\text{min}}$}
\Until{$\lambda^{[k,l,m]} - \lambda^{[0,l,m]} \leq \lambda_{\text{min}} $}
\Until{$\mathbb{G}_{C_{1}}^{[m]} = \mathbb{G}_{C_{1}}^{[m-1]} $}
\State{\Return$\left\{\mathbb{Q}^{[k,l,m]}, \lambda^{[k,l,m]}\right\}$}
  \end{algorithmic} }} \vspace{-1mm} 
\end{algorithm} \vspace{-0mm}

    			 \vspace{-6mm}
\section {Simulation Results}\label{sec:simulations} 

 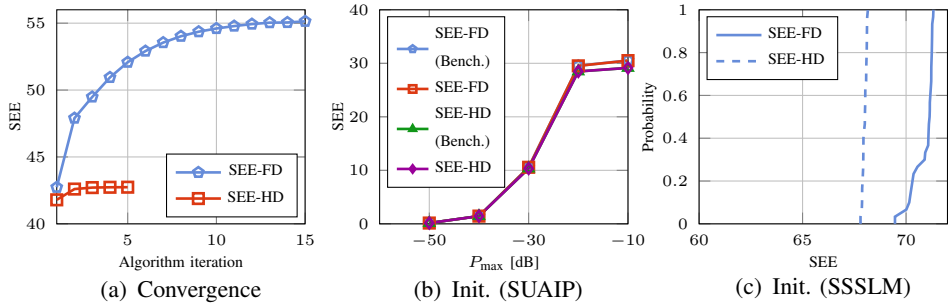
\begin{figure*}[t]  
\centering
\hspace{0.0cm}  \subfigure[Convergence]{		\begin{tikzpicture}[baseline,trim axis left,trim axis right]
	\begin{axis}[myStandard,
	width=.2\textwidth,
	legend style={font=\tiny},
	legend style={cells={align=left}},
	legend entries ={SEE-FD, SEE-HD},
	xmin=1, 
	xmax=15,
	ymin=40,
	ymax=56,
	xlabel={Algorithm iteration},
	ylabel={SEE },
	xlabel near ticks,
	ylabel near ticks,
	xlabel style={at={(0.5,-0.1)}},
	/pgf/number format/1000 sep={},
	clip=false,
	legend image post style={xscale=1},
	/tikz/plot label/.style={black, anchor=west},
	legend pos=south east,
	font=\tiny]
	\addplot table[x index=0,y index=1] {data/convergence_fd_sfp.dat};
	\addplot table[x index=0,y index=1] {data/convergence_hd_sfp.dat};
	\end{axis}
	\end{tikzpicture} } \hspace{0.4cm}
\subfigure[Init. (SUAIP)]{ 			\begin{tikzpicture}[baseline,trim axis left,trim axis right]
		\begin{axis}[myStandard,
			width=.2\textwidth,
			legend style={font=\tiny},
			legend style={cells={align=left}},   
			legend cell align=left,			
			legend entries ={SEE-FD\\(Bench.), SEE-FD, SEE-HD\\(Bench.), SEE-HD},
			xmin=-60, 
			xmax=-10,
			ymin=0,
			ymax=40,
			xtick={-50,-30,-10},			
			xlabel={$P_{\text{max}}$ [dB]},
			ylabel={SEE },
			xlabel near ticks,
			ylabel near ticks,
			xlabel style={at={(0.5,-0.1)}},
			/pgf/number format/1000 sep={},
			clip=false,
			legend image post style={xscale=0.7, yscale=0.7},
			legend pos=north west,
			font=\tiny]
			\addplot table[x index=0,y index=2] {data/bench_data.dat};
			\addplot table[x index=0,y index=1] {data/bench_data.dat};
			\addplot table[x index=0,y index=4] {data/bench_data.dat};
			\addplot table[x index=0,y index=3] {data/bench_data.dat};
	\end{axis}
	\end{tikzpicture}}  \hspace{0.65cm} \subfigure[Init. (SSSLM)]{\begin{tikzpicture}[baseline,trim axis left,trim axis right]
	\begin{axis}[myStandardCDF,
	width=.2\textwidth,
	legend style={font=\tiny},
	legend style={cells={align=left}},
	legend cell align=left,			
	legend entries ={SEE-FD, SEE-HD},
	xmin=60, 
	xmax=72,
	ymin=0,
	ymax=1,
	ylabel={Probability},
	xlabel={SEE},
	xlabel near ticks,
	ylabel near ticks,
	/pgf/number format/1000 sep={},
	xlabel style={at={(0.5,-0.1)}},
	clip=false,
	legend image post style={xscale=1},
	/tikz/plot label/.style={black, anchor=west},
	legend pos=north west,
	font=\tiny]
	\addplot table[x index=0,y index=1] {data/eval_ssum.dat};
	\addplot table[x index=2,y index=3] {data/eval_ssum.dat};
	\end{axis}
	\end{tikzpicture}}   
\caption{Numerical algorithm analysis in terms of the average convergence behavior, initialization, and computational complexity.}  \vspace{0mm} \label{fig_alg_analysis}
\end{figure*}

In this section the performance of the studied MIMOME system is evaluated in terms of the resulting SEE, via numerical simulations. In particular, we are interested in a comparison between the performance of \revPeterNew{a} FD-enabled setup, compared to the case where all nodes operate in HD mode. Moreover, the evaluation of the proposed SEE-specific designs is of interest, in comparison to the available designs which target the maximization of the system's secrecy capacity. We assume that all communication channels follow an uncorrelated Rayleigh flat-fading model with variance $\rho_{\mathcal{X}} = \bar{\rho} / d_{\mathcal{X}}^{2}$, where $d_{\mathcal{X}}$ is the link distance and depends on the simulated geometry, $\mathcal{X} \in \{ab, ba, ae, be\}$. Moreover, in case that only \revPeterNew{statistical CSI} is available for the channels to Eve, we assume that $\ma{H}_{\mathcal{X}} = \sqrt{\rho_{\mathcal{X}}} \left( \ma{D}_{\mathcal{X}} \bar{\ma{H}}_{\mathcal{X}} +  \tilde{\ma{H}}_{\mathcal{X}} \right)$, ${\mathcal{X}} \in\{ae,be\}$, where $\sqrt{\rho_{\mathcal{X}}}\tilde{\ma{H}}_{\mathcal{X}}$ is the known channel estimate following similar statistics as for the exact CSI case, and $\sqrt{\rho_{\mathcal{X}}} \ma{D}_{\mathcal{X}} \bar{\ma{H}}_{\mathcal{X}}$ is the estimation error where $\ma{D}$ enforces the receive-side spatial correlation, and $\bar{\ma{H}}_{\mathcal{X}}$ includes i.i.d Gaussian elements with unit variance. For the self-interference channels we follow the characterization reported in \cite{FD_ExperimentDrivenCharact}. In this respect we have $\ma{H}_{bb} \sim \mathcal{CN}\left( \sqrt{\frac{\rho_{\text{si}} K_R}{1+K_R}} \ma{H}_0 , \frac{\rho_{\text{si}}}{1+K_R} \ma{I}_{M_{B}} \otimes \ma{I}_{N_{B}} \right)$, where $\rho_{\text{si}}$ represents the self-interference channel strength, $\ma{H}_0$ is a deterministic term\footnote{For simplicity, we choose $\ma{H}_0$ as a matrix of all-$1$ elements.}, and $K_R$ is the Rician coefficient. The statistics of the self-interference channel on Alice, i.e., $\ma{H}_{aa}$, is defined similarly. The resulting \revPeterNew{system's} SEE is evaluated by employing different design strategies, and averaged over $100$ channel realizations. Unless stated otherwise, the default simulated setup is defined as follows: $P_{\text{max}}:= P_{\mathcal{X},\text{max}} = 0\text{dB}$,  $P_0:= P_{\mathcal{X},0}= -20\text{dB}$, $\mu:= \mu_{\mathcal{X}} = 0.9$, $\kappa:= \kappa_{\mathcal{X}} = \beta_{\mathcal{X}} = -40\text{dB}$, $N:= N_{\mathcal{X}} = M_\mathcal{X} = 4$, $\mathcal{X},\in \{A,B\}$. Moreover we set $P_{\text{FD}}= 0$\footnote{This corresponds to the cancellation methods using a joint passive and baseband techniques, which do not implement auxiliary transmissions or additional analog circuitry. The impact of $P_{\text{FD}}$ is later studied in Fig.~\ref{fig_SUAIP}, addressing a general cancellation method.}, $\rho_{\text{si}} = 0\text{dB}$,  $K_R = 10$, $\bar{\rho}= -20\text{dB}$, and $\sigma_{\text{n}}^2 := \sigma_{\text{na}}^2 =$ $\sigma_{\text{nb}}^2 = \sigma_{\text{ne}}^2 = -40\text{dB}$. Three nodes are equidistantly positioned, with the distance equal to one\footnote{\revOmid{We consider unit-less parameters to preserve a general framework. However, the obtained SEE values can be interpreted as the number of securely communicated bits per-Hz per-Joule, assuming the power values are in Watt.}}. \par 
\vspace{-0mm}
\subsection{Algorithm analysis}  \label{sec_AlgorithmAnalysis}
Due to the iterative structure of the proposed algorithms and the possibility of local optimum points, the convergence behavior of the algorithms are of high interest, both as a verification for algorithm operation as well as an indication of the required computational effort. In this part, the performance of the SUAIP and SSSLM algorithms are studied in terms of the average convergence behavior and computational complexity. Moreover, the impact of the choice of the algorithm initialization is evaluated.


In Fig.~\ref{fig_alg_analysis} (a), the average convergence behavior of the SUAIP algorithm is depicted. As expected, a monotonic objective improvement is observed, with convergence in $5$-$20$ total outer iterations. 

In Figs.~\ref{fig_alg_analysis} \revPeterNewNew{(b)-(c), the impact of the proposed initializations for the SUAIP and SSSLM algorithms are depicted. In Fig.~\ref{fig_alg_analysis} (b) it is observed that for the SUIAP algorithm, the proposed initialization in Subsection~\ref{sec_SUIAP_init} reaches close to the benchmark performance\footnote{The benchmark performance is obtained by repeating the algorithm with several random initializations, and choosing the highest obtained SEE.}. For the SSSLM algorithm, the situation is prone to more randomness. This is since, in addition to the choice of the algorithm initialization, the solution is dependent on the used channel realizations used in the construction of SAA, see (\ref{Eq_SSUM_SAA}). In this regard, the resulting cumulative distribution function (CDF) of the resulting SEE values is depicted in Fig.~\ref{fig_alg_analysis}~(c), by examining $100$ instances of the SSSLM algorithm. It is observed that the resulting average SEE differs for different solution instances, however, within $2-3\%$ of the relative difference. This value is smaller for a system with HD nodes, due to the absence of FD jamming and the impact of residual self-interference which result in a simpler problem structure.}
\begin{table}[!t] 
\centering
	\renewcommand{\arraystretch}{1.1}
  	\caption{Average CPU Time}\label{tab_net_params}
  	\centering  \vspace{-3mm}
  	\begin{tabular}[t]{||c|c|c|c|c||}
	 	\hline
   		Algorithm & SUIAP-HD & SUIAP-FD & SSSLM-HD & SSSLM-FD   \\
   		\hline
   		\begin{tabular}{c} CPU Time [s] \\ (Initialization) \end{tabular}  & \begin{tabular}{c}  17.4  \\  (1.2)  \end{tabular} &  \begin{tabular}{c}  31.1  \\  (1.5)  \end{tabular} & \begin{tabular}{c}  $1.39\times 10^4$  \\  (17.7)   \end{tabular} &  \begin{tabular}{c}  $6.8\times 10^4$  \\  (32.2)  \end{tabular}  \\ \hline   
		\end{tabular} \vspace{-4mm}		 
\end{table} 
{The required average CPU time\footnote{The reported CPU time is obtained using an \revOmid{Intel Core i$5$ $3320$M processor with the clock rate of $2.6$ GHz and $8$ GB of random-access memory (RAM). As software platform we have used CVX~\cite{YGB:08}, together with MATLAB $2013$a on a $64$-bit operating system.}}, for SSSLM as well as the SUIAP algorithms are depicted in Table~\ref{tab_net_params}, applied on HD and FD scenarios. \revPeterNewNew{Moreover, the CPU time associated with the proposed initialization methods, which can be considered as an intuitive but sub-optimal but practical algorithm in each case, are given in parenthesis in the second row. It is observed that a design with FD-enabled jamming results in a higher CPU time, due to the additional problem complexity associated with the choice of jamming strategy and residual self-interference.}}

\vspace{0mm}
\subsection{Performance comparison} \label{Sim_benchmarks}
In this part the SEE performance of the FD-enabled system is evaluated via the application of the proposed SUAIP and SSSLM algorithms, and under different system conditions. In particular, we are interested in a comparison between the performance of an FD-enabled setup, compared to the case where all nodes operate in HD mode. Moreover, the evaluation of the proposed SEE-specific designs is of interest, in comparison to the available designs which target the maximization of the system's secrecy capacity. The following benchmarks are hence implemented to provide a meaningful comparison.   
\MinorRevOmid{\begin{itemize}[leftmargin=*]
\item \textit{SEE-FD}:~The proposed SUAIP (SSSLM) algorithm is implemented using the exact (statistical) CSI, where Bob is capable of FD operation. 
\item \textit{SEE-HD}:~Similar to \textit{SEE-FD}, but with restricting the operation of the nodes to the HD mode.   
\item \textit{CS-FD}:~The design with the intention of maximizing secrecy capacity. Bob is capable of FD operation.  
\item \textit{CS-HD}:~Similar to \textit{CS-FD}, but with restricting the operation of the nodes to the HD mode.
\end{itemize} }

\vspace{-0mm}
\subsubsection{FD-enabled jamming with exact CSI}  

\begin{figure*}[t]  
\centering
\hspace{0.0cm} \subfigure[\tiny{Impact of thermal noise}]{ 			\begin{tikzpicture}[baseline,trim axis left,trim axis right]
			\begin{axis}[myStandard,
								legend cell align=left,			
				width=.2\textwidth,
				legend style={font=\tiny},
				legend style={cells={align=left}},
				legend entries ={SEE-FD, SEE-HD, CS-FD, CS-HD},
				xmin=-50, 
				xmax=-10,
				ymin=0,
				ymax=250,
				xlabel={$\sigma_{\text{n}}^2$ [dB]},
				ylabel={SEE},
				xlabel near ticks,
				ylabel near ticks,
				xlabel style={at={(0.5,-0.15)}},
				/pgf/number format/1000 sep={},
				clip=false,
				legend image post style={xscale=1},
				/tikz/plot label/.style={black, anchor=west}, 
				legend pos=north east,font=\footnotesize]
				\addplot table[x index=0,y index=1] {data/nvar_data.dat};
				\addplot table[x index=0,y index=2] {data/nvar_data.dat};
				\addplot table[x index=0,y index=3] {data/nvar_data.dat};
				\addplot table[x index=0,y index=4] {data/nvar_data.dat};
			\end{axis}
	\end{tikzpicture}}  \hspace{0.3cm} 
	\subfigure[\tiny{Max. Tx. power}]{		\begin{tikzpicture}[baseline,trim axis left,trim axis right]
			\begin{axis}[myStandard,
								legend cell align=left,			
				width=.2\textwidth,
				legend style={font=\tiny},
				legend style={cells={align=left}},
				legend entries ={SEE-FD, SEE-HD, CS-FD, CS-HD},
				xmin=-50, 
				xmax=60,
				ymin=0,
				ymax=100,
				xtick={-50,-25,0,25,50},			
				xlabel={$P_{\text{max}}$ [dB]},
				xlabel near ticks,
				ylabel near ticks,
				xlabel style={at={(0.5,-0.15)}},
				/pgf/number format/1000 sep={},
				clip=false,
				legend image post style={xscale=1},
				/tikz/plot label/.style={black, anchor=west}, 
				legend style={at={(0.95,0.225)},anchor=south east},
				font=\footnotesize]
				\addplot table[x index=0,y index=1] {data/pmax_data.dat};
				\addplot table[x index=0,y index=2] {data/pmax_data.dat};
				\addplot table[x index=0,y index=3] {data/pmax_data.dat};
				\addplot table[x index=0,y index=4] {data/pmax_data.dat};
		\end{axis}
	\end{tikzpicture} } \hspace{0.3cm} \subfigure[\tiny{Impact of hardware accuracy}]{	\begin{tikzpicture}[baseline,trim axis left,trim axis right]
			\begin{axis}[myStandard,
				legend cell align=left,			
				width=.2\textwidth,
				legend style={font=\tiny},
				legend style={cells={align=left}},
				legend entries ={SEE-FD, SEE-HD, CS-FD, CS-HD},
				xmin=-60, 
				xmax=0,
				ymin=0,
				ymax=90,
				xlabel={$\kappa$~[dB]},
				xlabel near ticks,
				ylabel near ticks,
				xlabel style={at={(0.5,-0.15)}},
				/pgf/number format/1000 sep={},
				clip=false,
				legend image post style={xscale=1},
				/tikz/plot label/.style={black, anchor=west}, 
				legend style={at={(0.04,0.25)},anchor=south west},
				font=\footnotesize]
				\addplot table[x index=0,y index=1] {data/kappa_beta_data.dat};
				\addplot table[x index=0,y index=2] {data/kappa_beta_data.dat};
				\addplot table[x index=0,y index=3] {data/kappa_beta_data.dat};
				\addplot table[x index=0,y index=4] {data/kappa_beta_data.dat};
			\end{axis}
	\end{tikzpicture}}  \hspace{0.3cm} 
	\subfigure[\tiny{Impact of Eve's position}]{\begin{tikzpicture}[baseline,trim axis left,trim axis right]
			\begin{axis}[myStandard,
								legend cell align=left,			
				width=.2\textwidth,
				legend style={font=\tiny},
				legend style={cells={align=left}},
				legend entries ={SEE-FD, SEE-HD, CS-FD, CS-HD},
				xmin=-20, 
				xmax=100,
				ymin=0,
				ymax=5,
				xlabel={$d_E$},
				xlabel near ticks,
				ylabel near ticks,
				xlabel style={at={(0.5,-0.15)}},
				/pgf/number format/1000 sep={},
				clip=false,
				legend image post style={xscale=1},
				/tikz/plot label/.style={black, anchor=west}, 
				legend pos=north west,font=\footnotesize]
				\addplot table[x index=0,y index=1] {data/pos_data.dat};
				\addplot table[x index=0,y index=2] {data/pos_data.dat};
				\addplot table[x index=0,y index=3] {data/pos_data.dat};
				\addplot table[x index=0,y index=4] {data/pos_data.dat};
			\end{axis}
	\end{tikzpicture}}  \hspace{0.3cm} 
	\subfigure[\tiny{No. of antennas at Eve}]{	\begin{tikzpicture}[baseline,trim axis left,trim axis right]
			\begin{axis}[myStandard,
								legend cell align=left,			
				width=.2\textwidth,
				legend style={font=\tiny},
				legend style={cells={align=left}},
				legend entries ={SEE-FD, SEE-HD, CS-FD, CS-HD},
				xmin=2, 
				xmax=8,
				ymin=0,
				ymax=200,
				xlabel={$M_E$},
				ylabel={SEE},
				xlabel near ticks,
				ylabel near ticks,
				xlabel style={at={(0.5,-0.15)}},
				/pgf/number format/1000 sep={},
				clip=false,
				legend image post style={xscale=1},
				/tikz/plot label/.style={black, anchor=west}, 
				legend pos=north east,font=\footnotesize]
				\addplot table[x index=0,y index=1] {data/Ne_data.dat};
				\addplot table[x index=0,y index=2] {data/Ne_data.dat};
				\addplot table[x index=0,y index=3] {data/Ne_data.dat};
				\addplot table[x index=0,y index=4] {data/Ne_data.dat};
			\end{axis}
	\end{tikzpicture} }  \hspace{0.3cm} \subfigure[\tiny{PA efficiency}]{ 			\begin{tikzpicture}[baseline,trim axis left,trim axis right]
			\begin{axis}[myStandard,
								legend cell align=left,			
				width=.2\textwidth,
				legend style={font=\tiny},
				legend style={cells={align=left}},
				legend entries ={SEE-FD, SEE-HD, CS-FD, CS-HD},
				xmin=0, 
				xmax=1,
				ymin=0,
				ymax=175,
				xlabel={$\mu$},
				xlabel near ticks,
				ylabel near ticks,
				xlabel style={at={(0.5,-0.15)}},
				/pgf/number format/1000 sep={},
				clip=false,
				legend image post style={xscale=1},
				/tikz/plot label/.style={black, anchor=west}, 
				legend pos=north west,font=\footnotesize]
				\addplot table[x index=0,y index=1] {data/alpha_data.dat};
				\addplot table[x index=0,y index=2] {data/alpha_data.dat};
				\addplot table[x index=0,y index=3] {data/alpha_data.dat};
				\addplot table[x index=0,y index=4] {data/alpha_data.dat};
			\end{axis}
	\end{tikzpicture} } \hspace{0.3cm} \subfigure[\tiny{Zero-state power consumption}]{\begin{tikzpicture}[baseline,trim axis left,trim axis right]
			\begin{axis}[myStandard,
								legend cell align=left,			
				width=.2\textwidth,
				legend style={font=\tiny},
				legend style={cells={align=left}},
				legend entries ={SEE-FD, SEE-HD, CS-FD, CS-HD},
				xmin=-20, 
				xmax=0,
				ymin=0,
				ymax=90,
				xlabel={$P_0$ [dB]},
				xlabel near ticks,
				ylabel near ticks,
				xlabel style={at={(0.5,-0.15)}},
				/pgf/number format/1000 sep={},
				clip=false,
				legend image post style={xscale=1},
				/tikz/plot label/.style={black, anchor=west}, 
				legend pos=north east,font=\footnotesize]
				\addplot table[x index=0,y index=1] {data/Pc_data.dat};
				\addplot table[x index=0,y index=2] {data/Pc_data.dat};
				\addplot table[x index=0,y index=3] {data/Pc_data.dat};
				\addplot table[x index=0,y index=4] {data/Pc_data.dat};
		\end{axis}
	\end{tikzpicture} } \hspace{0.3cm} \subfigure[\tiny{FD mode power consumption}]{		\begin{tikzpicture}[baseline,trim axis left,trim axis right]
			\begin{axis}[myStandard2,
				width=.2\textwidth,
				legend style={font=\tiny},
				legend style={cells={align=center}},
				legend entries ={
					$\sigma_{\text{n}}^2 = -50$ dB,
					$\sigma_{\text{n}}^2 = -50$ dB,
					$\sigma_{\text{n}}^2 = -40$ dB,
					$\sigma_{\text{n}}^2 = -40$ dB,
				},
				xmin=-40, 
				xmax=0,
				ymin=0,
				ymax=700,
				xlabel={$P_{\text{FD}}$ [dB]},
				xlabel near ticks,
				ylabel near ticks,
				xlabel style={at={(0.5,-0.15)}},
				/pgf/number format/1000 sep={},
				clip=false,
				legend image post style={xscale=1},
				/tikz/plot label/.style={black, anchor=west}, 
				legend pos=north east,
				font=\footnotesize]
				\addplot table[x index=0,y index=1] {data/nvar_pfd_data.dat};
				\addplot table[x index=0,y index=2] {data/nvar_pfd_data.dat};
				\addplot table[x index=0,y index=5] {data/nvar_pfd_data.dat};
				\addplot table[x index=0,y index=6] {data/nvar_pfd_data.dat};
			\end{axis}
	\end{tikzpicture}  } 
\caption{SEE performance of the secure communication system with exact CSI, via the utilization of SUAIP algorithm. } \vspace{0mm} \label{fig_SUAIP}
\end{figure*}
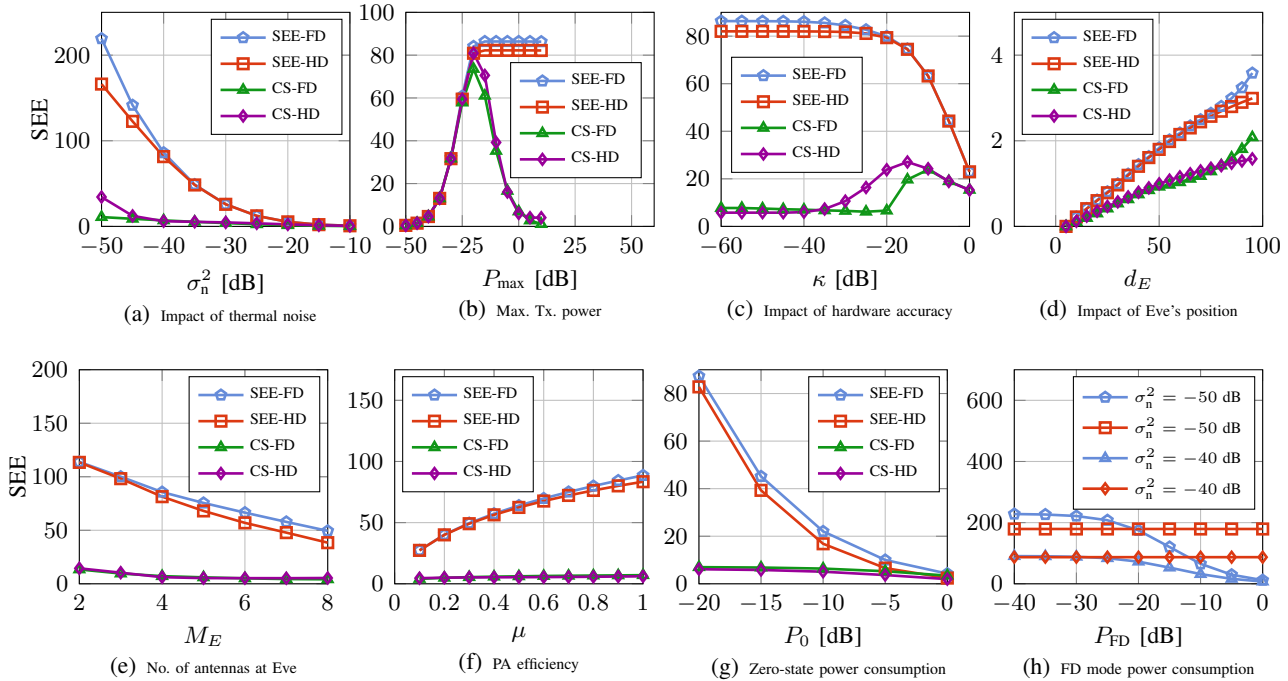 

In Figs.~\ref{fig_SUAIP}~(a)-(h) the average SEE performance of the defined benchmarks are evaluated, assuming availability of perfect CSI and FD operation at Bob. Hence, both Alice and Bob are simultaneously capable of transmitting AN, see Fig.~\ref{fig_model}.

In Fig.~\ref{fig_SUAIP}~(a) the impact of thermal noise variance is depicted. It is observed that a higher $\sigma_{\text{n}}^2$ results in a smaller SEE both for FD and HD setups. Moreover, a marginal gain for FD setup is obtained compared to the HD setup, if the noise variance is low. This is expected, since FD jamming becomes less effective when Eve is already distorted with high thermal noise power.

In Fig.~\ref{fig_SUAIP}~(b) the impact of the available transmit power budget ($P_{\text{max}}$) for each transceiver is depicted. It is observed that for small values of $P_{\text{max}}$, the resulting SEE is monotonically increasing with an increase in $P_{\text{max}}$. Moreover, the performance of the benchmark algorithms essentially converge for small values of $P_{\text{max}}$. This is grounded in the fact that for a system with low SNR condition, the positive impact of FD jamming disappears as observed from Fig.~\ref{fig_SUAIP}~(a). Conversely, for large values of $P_{\text{max}}$, the traditional designs result in a rapid decrease of SEE, where the proposed SUAIP method converges to a constant value. This is expected, since the designs with the target of maximizing the secrecy rate utilize the maximum available power budget, resulting in a sever degradation of SEE. Moreover, a visible gain is observed with the application of an FD jammer for a high $P_{\text{max}}$ region. \revPeter{Due to a high $P_{\text{max}}$, the link from Alice to Eve} also enjoys a higher SNR, which justifies the application \revPeter{of a FD jammer}.

In Fig.~\ref{fig_SUAIP}~(c) the impact of transceiver accuracy is depicted. As expected, a higher value of $\kappa$ results in a smaller achievable SEE, both in HD and FD setups. Moreover, it is observed that FD jamming can be beneficial only for a system with an accurate hardware operation, due to the impact of residual self-interference. However, results show that targeting SEE as the design objective results in a significant energy efficiency gain, compared to the traditional designs which target the maximization of secrecy rate.

In Fig.~\ref{fig_SUAIP}~(d) the impact of Eve's distance to Alice ($d_E$) is depicted. It is assumed that three nodes are positioned in a line with a total Alice-Bob distance of $100$, where Eve is positioned in between. It is observed that the system's SEE increases as $d_E$ increase, and Eve gets closer to \revPeter{Bob}. Moreover, the application of FD jamming becomes beneficial only when Eve is located in a close distance to Bob, and hence the channel between Bob and Eve, i.e., the jamming channel, is strong.    

In Figs.~\ref{fig_SUAIP}~(e) the impact of the number of antenna elements at Eve ($M_E$) on SEE is depicted. As expected, a larger $M_E$ results in a reduced SEE as it results in a stronger Alice-Eve channel. Moreover, the application of an FD jammer becomes gainful for a higher values of $M_E$, in order to counteract the improved Eve reception capability. 
   
In Figs.~\ref{fig_SUAIP}~(f)-(h), the impact of the transceiver's power efficiency is evaluated on the resulting system SEE. In particular, the impact of the zero-state power consumption ($P_{0}$) and PA efficiency ($\mu$) are depicted respectively in Figs.~\ref{fig_SUAIP}~(f) and (g). The impact of the additional power consumption for SIC ($P_{\text{FD}}$) on the system SEE is depicted for different noise regimes in Fig.~\ref{fig_SUAIP}~(h), \revPeterNewNew{where the two constant red lines represent the SEE for the HD setup.} It is observed that higher (lower) values of $\mu$ ($P_{0}, P_{\text{FD}}$) result in a higher SEE. Moreover, it is observed that a marginal gain with the application of an FD jammer is obtained for a high $\mu$, and a small $P_{\text{FD}}$ conditions. This is expected, since a small (large) value of $\mu$ ($P_{\text{FD}}$) results in a bigger waste of power when {using an FD jamming strategy}.     

%




\vspace{-0mm}
\subsubsection{Secure bidirectional communication}
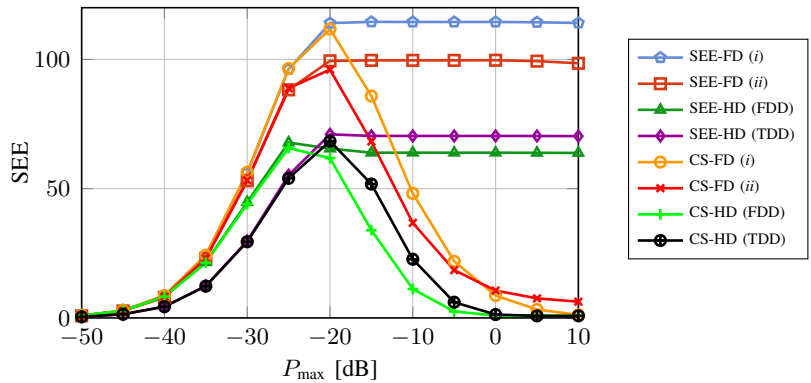
\begin{figure}
\centering
	\begin{tikzpicture}[baseline,trim axis left,trim axis right]
			\begin{axis}[myStandard,
				axis line style=semithick,
				legend cell align=left,						
				width=.4\textwidth,
				height=.25\textwidth,
				legend style={font=\tiny},
				legend style={cells={align=left}},  
				legend entries ={
					SEE-FD (\emph{i}),
					SEE-FD (\emph{ii}),
					SEE-HD (FDD),
					SEE-HD (TDD),
					CS-FD (\emph{i}),
					CS-FD (\emph{ii}),
					CS-HD (FDD),
					CS-HD (TDD),},
				xmin=-50, 
				xmax=10,
				ymin=0,
				ymax=120,
				xtick={0,-50,-40,-30,-20,-10,10},			
				xlabel={$P_{\text{max}}$~[dB]},
				ylabel={SEE},
				xlabel near ticks,
				ylabel near ticks,
				xlabel style={at={(0.5,-0.1)}},
				/pgf/number format/1000 sep={},
				clip=false,
				legend image post style={xscale=1},
				/tikz/plot label/.style={black, anchor=west},
				legend style={at={(1.1,0.9)}, anchor=north west},
				font=\footnotesize]
				\addplot table[x index=0,y index=1] {data/bidirectional_pmax_data.dat};
				\addplot table[x index=0,y index=2] {data/bidirectional_pmax_data.dat};
				\addplot table[x index=0,y index=3] {data/bidirectional_pmax_data.dat};
				\addplot table[x index=0,y index=4] {data/bidirectional_pmax_data.dat};
				\addplot table[x index=0,y index=5] {data/bidirectional_pmax_data.dat};
				\addplot table[x index=0,y index=6] {data/bidirectional_pmax_data.dat};
				\addplot table[x index=0,y index=7] {data/bidirectional_pmax_data.dat};
				\addplot table[x index=0,y index=8] {data/bidirectional_pmax_data.dat};	
			\end{axis}
	\end{tikzpicture}
\caption{SEE performance of FD, FDD and TDD for secure bidirectional communications.}  \vspace{0mm}
\label{PerfCSI_BD_Pmax}
\end{figure}

In Fig.~\ref{PerfCSI_BD_Pmax} a system with a bidirectional secure communication between Alice and Bob is studied. In particular, a joint FD operation at Alice and Bob is considered which enables jamming and communication simultaneously at both directions. \MinorRevOmid{Two scenarios are considered regarding the decoding capability at Eve: \emph{i}) Eve treats interference from the non-intended information path as noise~(corresponding to $\rho=1$), and \emph{ii}) Eve is capable of decoding, and hence reducing, the received signal from the non-intended information link~(corresponding to $\rho=0$).} Moreover, a setup with HD Bob and HD Alice is also evaluated, where time-division-duplexing (TDD) or frequency-division-duplexing (FDD) is employed in order to facilitate a bidirectional communication. 

It is observed that the resulting SEE increases with $P_{\text{max}}$, however, saturates for high values of maximum transmit power. Moreover, it is observed that a joint FD operation is capable of enhancing the system SEE, with a considerable margin, in the studied bidirectional setup. This is since, due to the coexistence of both communication directions on the same channel the jamming power is re-used for both communication directions, leading to a higher SEE compared to the HD setup. Moreover, the Eve's decoding capability is further decreased in the FD setup considering the scenario (\emph{i}), due to the existence of two information links at the same channel.

\vspace{-0mm}
\subsubsection{FD-enabled jamming with statistical CSI}
 
\begin{figure*}[t]  
\centering
\subfigure[Performance comparison SSSLM~vs.~SUIAP]{\includegraphics[scale=0.65]{./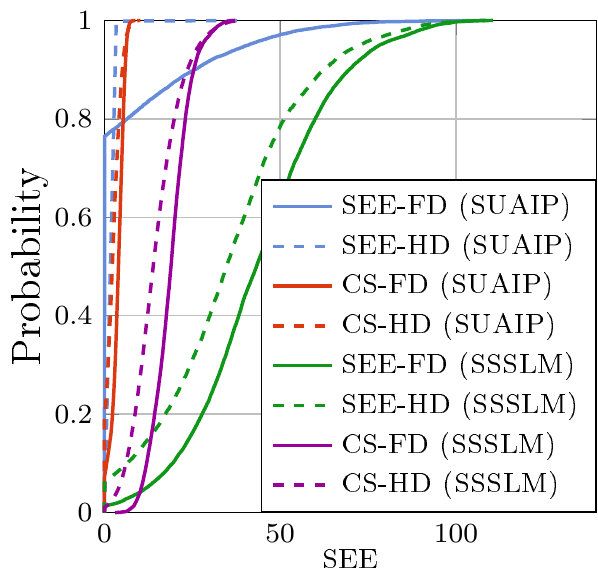}}  
\subfigure[Impact of thermal noise]{\includegraphics[scale=0.65]{./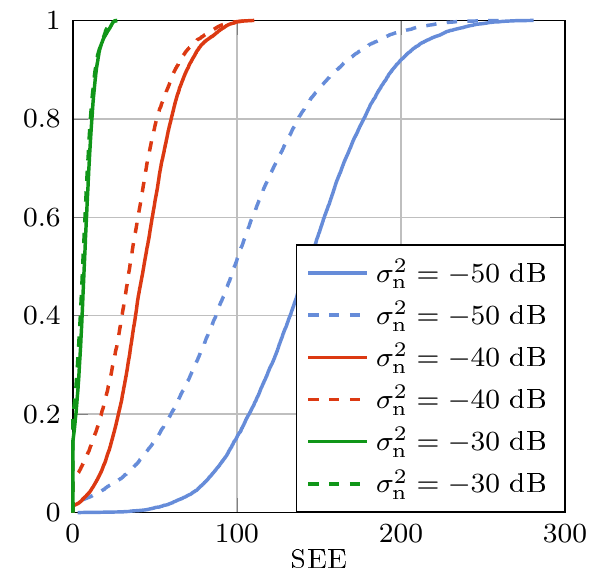}} 
\subfigure[Impact of PA efficiency]{\includegraphics[scale=0.65]{./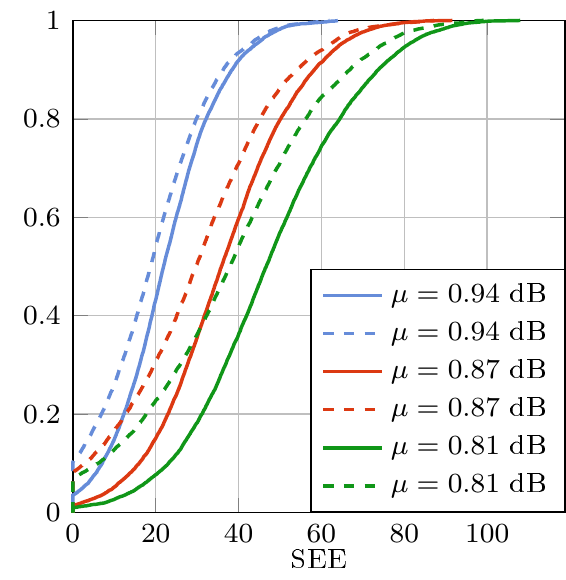}} 
\subfigure[Impact of hardware accuracy]{\includegraphics[scale=0.65]{./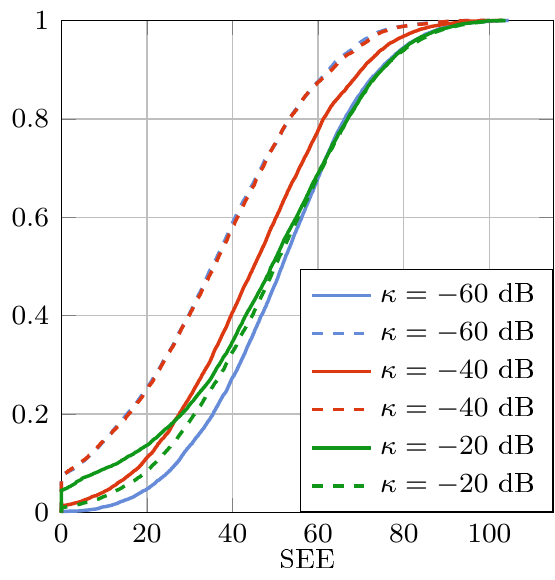}}
\caption{CDF of the resulting SEE under various system conditions utilizing the SSSLM algorithm. The solid (dashed) lines represent the performance of the FD (HD) setups in (a)-(d). }  \label{fig_SSUM_2}
\end{figure*}

In Fig.~\ref{fig_SSUM_2} the cumulative distribution function (CDF) of the resulting SEE is evaluated via the application of SSSLM algorithm \revOmid{on $100$ problem instances\footnote{\revOmid{Each problem instance includes a realization of $\ma{H}_{ab},\ma{H}_{bb}$}.}}, where only a statistical CSI is available for the channels to Eve. We choose $|\mathbb{G}_C| = 100$ in the construction of SAA, see Section~\ref{sec_SSUM}, in order to limit the required computational effort. The CDF of the resulting SEE is then evaluated via the utilization of $10000$ channel realizations for each problem instance, following the statistical distribution defined in the beginning of the current section and choosing $\ma{D}_{\mathcal{X}}$ as a matrix of all-$1$ elements. \\
In Fig.~\ref{fig_SSUM_2}~(a) the performance of the SSSLM algorithm with the consideration of the statistical CSI, is compared to the case where the SUIAP algorithm is applied directly on the channel estimate matrices $\tilde{\ma{H}}_{\mathcal{X}}$. It is observed that a significant gain is obtained by taking into account the full channel statistics, however, at the expense of a higher computational complexity. Moreover, the superior SEE performance of the SEE specific design, compared to the secrecy rate maximizing designs is observable.

In Figs.~\ref{fig_SSUM_2}~(b)-(d) the CDF of the resulting SEE is evaluated for different levels of thermal noise ($\sigma_{\text{n}}^2$), hardware inaccuracy ($\kappa$), and the PA efficiency ($\mu$). Similar to the observed trends for the scenario where exact CSI is available, a marginal gain is observed in the resulting SEE with the application of an optimized FD jamming strategy. In particular, the gain of the FD-enabled system is improved for a system with a high SNR, i.e., a high transmit power budget or a low noise level, and as hardware accuracy increases. 
 \vspace{-6mm}
\section {Conclusion} \label{sec:conclusion}
The utilization of FD jamming transceivers \revOmid{is widely known to enhance} the secrecy capacity of wireless communication systems by transmitting AN while exchanging information. However, this results in a higher power consumption in the system due to \emph{i)} the degrading impact of residual self-interference, \emph{ii)} the implementation of \revPeterNew{an} SIC scheme at the FD transceiver, as well as \emph{iii)} the power consumed for the transmission of AN. In this work, \revOmid{we have observed} that the application of FD transceivers \revOmid{results} {only in} a marginal gain in terms of secrecy energy efficiency, for a wide range of system conditions. However, \revOmid{we have shown} that the aforementioned SEE gain becomes significant for a system with a small distance between the FD node \revPeterNew{and the} eavesdropper or a system operating in high SNR regimes, under the condition that the self-interference is efficiently mitigated. Moreover, a promising SEE gain \revPeterNew{has been observed for a bidirectional FD communication setup}, where jamming power \revOmid{is reused} for both directions. \revOmid{We have observed} that for almost all system conditions, the application of \revPeterNew{an SEE-aware design is essential to increase the system's SEE as compared to the available designs which target the maximization of secrecy capacity.} \vspace{-6mm}

\vspace{-6mm}
\appendices

\revOmid{\section{Dinkelbach's algorithm} \label{appendix_dinkelbach}
Let $f:\mathbb{R}^n \rightarrow \mathbb{R}$ and $g:\mathbb{R}^n \rightarrow \mathbb{R}^+$ be, respectively, a concave differentiable and a convex differentiable function. Moreover, let $\mathcal{X}$ be a convex compact set in $\mathbb{R}^n$. Then, the optimization problem 
\begin{align}                                   
\underset{ \ma{x} }{\text{max}} \;\; {f(\ma{x})}/{g(\ma{x})}\;\; \text{s.t.}\;\; \ma{x} \in \mathcal{X}  \label{P_CCFP} 
\end{align} 
represents the class of concave over convex fractional programs, see \cite{zappone2015energy,rev_1_1, rev_1_3} for a wide range of applications and related methods .

\begin{lemma} \label{Dinkelbach_aux}
\cite[Section~2]{dinkelbach1967nonlinear} Consider the real-valued auxiliary function 
\begin{align}    \label{dinkelbach_auxilliary_func}                                   
\gamma (\lambda) := \underset{ \ma{x} \in \mathcal{X} }{\text{max}} f(\ma{x}) - \lambda g(\ma{x}).
\end{align} 
Then, $\gamma(\lambda)$ is strictly monotonically decreasing and convex over $\lambda$. Moreover, $\ma{x}^\star \in \mathcal{X}$ is a globally optimum solution to (\ref{P_CCFP}) iff $\ma{x}^\star =  \underset{ \ma{x} \in \mathcal{X} }{\text{arg max}} f(\ma{x}) - \lambda^\star g(\ma{x})$ with $\lambda^\star$ as the unique zero of $\gamma(\lambda)$.     
\end{lemma}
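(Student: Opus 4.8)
The plan is to establish the three assertions in order --- strict monotonicity of $\gamma$, convexity of $\gamma$, and the variational characterization of the optimizer --- with the existence and uniqueness of the zero $\lambda^\star$ serving as the bridge between the second and third. Throughout I would use only that $\mathcal{X}$ is compact, that $f$ and $g$ are continuous, and that $g(\ma{x}) > 0$ for all $\ma{x} \in \mathcal{X}$; the concavity of $f$, convexity of $g$, and differentiability are not actually needed for the lemma itself (they only matter later for tractability of the inner update), so I would not invoke them here.

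First I would prove strict monotonicity. Fix $\lambda_1 < \lambda_2$ and let $\ma{x}_2 \in \mathcal{X}$ attain $\gamma(\lambda_2)$ (a maximizer exists since $\mathcal{X}$ is compact and $f, g$ are continuous). Then $\gamma(\lambda_1) \geq f(\ma{x}_2) - \lambda_1 g(\ma{x}_2) > f(\ma{x}_2) - \lambda_2 g(\ma{x}_2) = \gamma(\lambda_2)$, where the strict inequality is exactly where $g(\ma{x}_2) > 0$ is used. For convexity, I would note that for each fixed $\ma{x}$ the map $\lambda \mapsto f(\ma{x}) - \lambda g(\ma{x})$ is affine, so $\gamma$ is a pointwise supremum of affine functions and hence convex; being finite on all of $\mathbb{R}$ (by compactness of $\mathcal{X}$), it is also continuous.

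Next I would locate the zero of $\gamma$. Since $g$ is continuous and strictly positive on the compact set $\mathcal{X}$, it attains a minimum $g_{\min} > 0$ and a maximum, from which $\gamma(\lambda) \to +\infty$ as $\lambda \to -\infty$ and $\gamma(\lambda) \to -\infty$ as $\lambda \to +\infty$; together with continuity and strict monotonicity this yields a unique $\lambda^\star$ with $\gamma(\lambda^\star) = 0$. For the equivalence, let $\ma{x}^\star$ attain the optimum of~(\ref{P_CCFP}) (attained because $\mathcal{X}$ is compact and $\ma{x} \mapsto f(\ma{x})/g(\ma{x})$ is continuous, using $g > 0$) and set $\lambda_0 := f(\ma{x}^\star)/g(\ma{x}^\star)$. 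The key observation is that, because $g(\ma{x}) > 0$, the statement $f(\ma{x})/g(\ma{x}) \leq \lambda_0$ for all $\ma{x} \in \mathcal{X}$ is equivalent to $f(\ma{x}) - \lambda_0 g(\ma{x}) \leq 0$ for all $\ma{x}$, with equality precisely at the optimizers of~(\ref{P_CCFP}); taking the maximum over $\ma{x}$ shows $\gamma(\lambda_0) = 0$ and $\ma{x}^\star \in \arg\max_{\ma{x} \in \mathcal{X}}\, f(\ma{x}) - \lambda_0 g(\ma{x})$, and by uniqueness $\lambda_0 = \lambda^\star$, giving the ``only if'' direction. Conversely, if $\ma{x}^\star \in \arg\max_{\ma{x} \in \mathcal{X}}\, f(\ma{x}) - \lambda^\star g(\ma{x})$ with $\gamma(\lambda^\star) = 0$, then $f(\ma{x}^\star) - \lambda^\star g(\ma{x}^\star) = 0$, so $f(\ma{x}^\star)/g(\ma{x}^\star) = \lambda^\star$, while $f(\ma{x}) - \lambda^\star g(\ma{x}) \leq 0$ for every $\ma{x}$ gives $f(\ma{x})/g(\ma{x}) \leq \lambda^\star$, so $\ma{x}^\star$ is globally optimal.

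I expect the only delicate points to be bookkeeping rather than conceptual: ensuring the relevant maxima are attained (which is where compactness of $\mathcal{X}$ and continuity of $f, g$ enter) and justifying the passage between the ratio form and the parametric form, which rests on $g$ being not merely positive but bounded away from zero on $\mathcal{X}$ --- the same fact that forces the correct limiting behavior of $\gamma$ at $\pm\infty$ and hence the existence of the zero.
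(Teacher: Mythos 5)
Your proof is correct and complete. The paper does not actually prove this lemma --- it is quoted from Section~2 of Dinkelbach's 1967 paper --- and your argument is essentially the classical one from that reference: strict monotonicity of $\gamma$ from $g>0$, convexity as a pointwise supremum of functions affine in $\lambda$, existence and uniqueness of the zero from the limiting behavior forced by $g \geq g_{\min} > 0$ on the compact set $\mathcal{X}$, and the two-way passage between the ratio form and the parametric form; your observation that concavity of $f$, convexity of $g$, and differentiability are not needed for the lemma itself (only for making the inner update a tractable convex problem) is also accurate.
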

The purpose of the Dinkelbach's algorithm \cite{dinkelbach1967nonlinear} is to obtain the unique zero of $\gamma(\lambda)$, and thereby the global optimum of the fractional problem (\ref{P_CCFP}). This is implemented via iteratively evaluating and updating $\gamma(\lambda)$ in the decreasing direction, where $\gamma(\lambda)$ can be evaluated as a standard convex problem, see Algorithm~\ref{alg_Dinkelbach} for a detailed procedure.
\vspace{-3mm}\begin{algorithm}[H] 
{\tiny{ \revOmid{\begin{algorithmic}[1] 
\State{$ \lambda = 0, \epsilon > 0, \eta > \epsilon;$}   \Comment{initialization}
\Repeat  
\State{$ \ma{x}^\star \leftarrow \underset{ \ma{x} \in \mathcal{X} }{\text{arg max}} f(\ma{x}) - \lambda^\star g(\ma{x}) $} 
\State{$ \eta \leftarrow  f(\ma{x}^\star) - \lambda g(\ma{x}^\star) $} 
\State{$ \lambda^\star \leftarrow  f(\ma{x}^\star)/g(\ma{x}^\star) $} 
\Until{$ \eta \geq \epsilon $}
\State{\Return$ \left\{\lambda^\star, \ma{x}^\star\right\} $}
  \end{algorithmic} }}}
 \caption{\scriptsize{Dinkelbach's Algorithm~\cite{dinkelbach1967nonlinear}} } \label{alg_Dinkelbach}
\end{algorithm}  }

\section{SUIAP initialization}
\subsection{The choice of $\ma{F}$ and $\ma{G}$} \label{appendix_init_spatialadjustent}
As mentioned, the role of $\ma{F}$ ($\ma{G}$) is to direct (suppress) the transmission into the desired (undesired) direction. Hence, for the design of $\ma{Q}_a$, i.e., data transmission from Alice, we choose $\ma{F} \leftarrow \ma{H}_{ab}$, and $\ma{G} \leftarrow \ma{H}_{ae}$. Conversely, for the design of $\ma{W}_a$ we choose $\ma{F} \leftarrow \ma{H}_{ae}$, and $\ma{G} \leftarrow \ma{H}_{ab}$. For the design of $\ma{W}_b$ we set $\ma{F} \leftarrow \ma{H}_{be}$. However, the choice of $\ma{G}$ should include the impact of distortion terms on Bob, reflecting the effect of residual self-interference. The distortion power at Bob can be written as 
\begin{align}                                 
& \text{tr}\Big( \kappa \ma{H}_{bb} \text{diag}\left(\ma{W}_b\right) \ma{H}_{bb}^H \Big) + \text{tr}\Big( \beta \text{diag}\left( \ma{H}_{bb} \ma{W}_b \ma{H}_{bb}^H \right)  \Big) \nonumber \\
& =   \text{tr}\bigg(  \Big( \underbrace{   \kappa \text{diag}\left( \ma{H}_{bb}^H\ma{H}_{bb} \right)  +  \beta \ma{H}_{bb}^H\ma{H}_{bb} }_{\tilde{\ma{H}}_{bb}} \Big) \ma{W}_b \bigg), \nonumber 
\end{align}  
which consequently results in the choice of $\ma{G} \leftarrow \left( {\tilde{\ma{H}}_{bb}} \right)^{\frac{1}{2}}$.
  
\subsection{Power adjustment} \label{appendix_init_powAdj}
Via the utilization of the obtained spatial beams, i.e., normalized covariance matrices, the optimal power adjustment on each transmission is seeked to maximize the resulting SEE. In each case, by fixing the power of the other transmission, the resulting $\text{SEE}_p$ is written as  
\begin{align}  \label{appendix_SEE_pp}                                 
\text{SEE}_p \left( p_\mathcal{X} \right) = \frac{\text{log} \left(  \frac{ \alpha_{11}^{\mathcal{X}} p_\mathcal{X}^2 + \alpha_{12}^{\mathcal{X}} p_\mathcal{X} + \alpha_{13}^{\mathcal{X}} }{  \alpha_{21}^{\mathcal{X}} p_\mathcal{X}^2 + \alpha_{22}^{\mathcal{X}} p_\mathcal{X} + \alpha_{23}^{\mathcal{X}} }  \right) }{ \gamma_1^{\mathcal{X}} p_\mathcal{X} + \gamma_2^{\mathcal{X}}}, 
\end{align}  
where $p_\mathcal{X}$, $\mathcal{X} \in \left\{ \ma{Q}_a, \ma{W}_a, \ma{W}_b \right\}$, represent the power associated with different transmissions. It is observed from (\ref{appendix_SEE_pp}) that $\text{SEE}_p \rightarrow 0$ for $p_\mathcal{X} \rightarrow \infty$ and $\text{SEE}_p \left( 0 \right) $ is a finite and non-negative value. Moreover, $\text{SEE}_p \left( p_\mathcal{X} \right)$ is a continuous and differentiable function in the region $p_\mathcal{X} \in [0, \infty)$. This concludes the location of the optimal $p_\mathcal{X}$ at the problem boundaries, or at the points equalizing the derivative of $\text{SEE}_p \left( p_\mathcal{X} \right)$ to zero, see section \ref{appendix_power_efficient_implementation} for an efficient numerical solution.
\subsection{Efficient implementation} \label{appendix_power_efficient_implementation}
For ease of notation, the objective (\ref{appendix_SEE_pp}) is denoted as $f(p_\mathcal{X})$ and its numerator is defined as $g(p_\mathcal{X})$:
\begin{align}
	\label{eq:obj_f}
	f(p_\mathcal{X}) &{}= \frac{ g(p_\mathcal{X}) }{\gamma_1^{\mathcal{X}} \, p_\mathcal{X} + \gamma_2^{\mathcal{X}}}\\
	\label{eq:obj_num_g}
	g(p_\mathcal{X}) &{}= \log \left( \frac{\alpha_{11}^{\mathcal{X}} \, p_\mathcal{X}^2 + \alpha_{12}^{\mathcal{X}} \, p_\mathcal{X} + \alpha_{13}^{\mathcal{X}}}{\alpha_{21}^{\mathcal{X}} \, p_\mathcal{X}^2 + \alpha_{22}^{\mathcal{X}} \, p_\mathcal{X} + \alpha_{23}^{\mathcal{X}}} \right).
\end{align}
The goal is to find the maximum feasible value $\lambda^*$ of the objective function $f(p_\mathcal{X})$ as well as $p_\mathcal{X}^*$ for which $\lambda^*$ is achieved. $\lambda^*$ is defined as follows:
\begin{equation}
	\lambda^* = \max_{0 \leq p_\mathcal{X} \leq c} \frac{ g(p_\mathcal{X}) }{\gamma_1^{\mathcal{X}} \, p_\mathcal{X} + \gamma_2^{\mathcal{X}}}.
\end{equation}
It can be seen that the objective $f(p_\mathcal{X})$ is continuous and differentiable. Because $p_\mathcal{X}$ is bounded on the closed interval $\left[ 0, c \right]$, it follows from the extreme value theorem that the objective has a maximum $\lambda^*$ on this interval. The optimum $p_\mathcal{X}$, i.e. $p_\mathcal{X}^*$, is located either on the boundaries of the closed interval or on a point satisfying
\begin{equation}
	\label{eq:cond_stationary_point}
	\frac{\partial}{\partial p_\mathcal{X}^*} \frac{ g(p_\mathcal{X}^*) }{\gamma_1^{\mathcal{X}} \, p_\mathcal{X}^* + \gamma_2^{\mathcal{X}}} = 0.
\end{equation}

Using the quotient rule, (\ref{eq:cond_stationary_point}) can be rewritten as
\begin{align}
	\label{eq:cond2}
	&\Leftrightarrow \frac{ g'(p_\mathcal{X}^*) \left( \gamma_1^{\mathcal{X}} \, p_\mathcal{X}^* + \gamma_2^{\mathcal{X}} \right) - g(p_\mathcal{X}^*) \, \gamma_1^{\mathcal{X}}}{ \left( \gamma_1^{\mathcal{X}} \, p_\mathcal{X}^* + \gamma_2^{\mathcal{X}} \right)^2 } = 0\\
	&\Leftrightarrow g'(p_\mathcal{X}^*) \left( \gamma_1^{\mathcal{X}} \, p_\mathcal{X}^* + \gamma_2^{\mathcal{X}} \right) - g(p_\mathcal{X}^*) \, \gamma_1^{\mathcal{X}} = 0\\
	\label{eq:optimality_condition}
	&\Leftrightarrow \frac{g'(p_\mathcal{X}^*)}{\gamma_1^{\mathcal{X}}} = \frac{ g(p_\mathcal{X}^*) }{\gamma_1^{\mathcal{X}} \, p_\mathcal{X}^* + \gamma_2^{\mathcal{X}}} = \lambda,
\end{align}
where $g'(p_\mathcal{X})$ denotes the derivative of $g(p_\mathcal{X})$ with respect to $p_\mathcal{X}$, given by
\begin{equation}
	\label{eq:g_diff_def}
	g'(p_\mathcal{X}) {}= \frac{2 \, \alpha_{11}^{\mathcal{X}} \, p_\mathcal{X} + \alpha_{12}^{\mathcal{X}} }{ \alpha_{11}^{\mathcal{X}} \, p_\mathcal{X}^2 + \alpha_{12}^{\mathcal{X}} \, p_\mathcal{X} + \alpha_{13}^{\mathcal{X}}} - \frac{2 \, \alpha_{21}^{\mathcal{X}} \, p_\mathcal{X} + \alpha_{22}^{\mathcal{X}} }{ \alpha_{21}^{\mathcal{X}} \, p_\mathcal{X}^2 + \alpha_{22}^{\mathcal{X}}  \, p_\mathcal{X} + \alpha_{23}^{\mathcal{X}}}.
\end{equation}
Our goal is to convert the maximization problem into a simpler feasibility problem: For a given value of the objective function, denoted as $\lambda$, check if a feasible $p_\mathcal{X}$ exists. Therefore, equation (\ref{eq:optimality_condition}) is rewritten to:
\begin{equation}
	\label{eq:optimality_condition_rewritten}
	\lambda \, \gamma_1^{\mathcal{X}} - \frac{2 \, \alpha_{11}^{\mathcal{X}} \, p_\mathcal{X} + \alpha_{12}^{\mathcal{X}} }{ \alpha_{11}^{\mathcal{X}} \, p_\mathcal{X}^2 + \alpha_{12}^{\mathcal{X}} \, p_\mathcal{X} + \alpha_{13}^{\mathcal{X}}} + \frac{2 \, \alpha_{21}^{\mathcal{X}} \, p_\mathcal{X} + \alpha_{22}^{\mathcal{X}} }{ \alpha_{21}^{\mathcal{X}} \, p_\mathcal{X}^2 + \alpha_{22}^{\mathcal{X}}  \, p_\mathcal{X} + \alpha_{23}^{\mathcal{X}}} = 0.
\end{equation}
Equation (\ref{eq:optimality_condition_rewritten}) is a fourth-order polynomial and hence, it can be written as
\begin{equation}
	\label{eq:polynomial}
	c_4 \, p_\mathcal{X}^4 + c_3 \, p_\mathcal{X}^3 + c_2 \, p_\mathcal{X}^2 + c_1 \, p_\mathcal{X} + c_0 = 0,
\end{equation}
with
\begin{align*}
	c_4 &{}= \alpha_{11}^{\mathcal{X}} \, \alpha_{21}^{\mathcal{X}} \, \gamma_1^{\mathcal{X}} \lambda\\
	c_3 &{}= \left( \alpha_{12}^{\mathcal{X}} \, \alpha_{21}^{\mathcal{X}} + \alpha_{11}^{\mathcal{X}} \, \alpha_{22}^{\mathcal{X}} \right) \gamma_1^{\mathcal{X}} \lambda\\
	c_2 &{}= \left( \alpha_{13}^{\mathcal{X}} \, \alpha_{21}^{\mathcal{X}} + \alpha_{12}^{\mathcal{X}} \, \alpha_{22}^{\mathcal{X}} + \alpha_{11}^{\mathcal{X}} \, \alpha_{23}^{\mathcal{X}} \right) \gamma_1^{\mathcal{X}} \lambda - \alpha_{11}^{\mathcal{X}} \, \alpha_{22}^{\mathcal{X}} + \alpha_{12}^{\mathcal{X}} \, \alpha_{21}^{\mathcal{X}}\\
	c_1 &{}= \left( \alpha_{13}^{\mathcal{X}} \, \alpha_{22}^{\mathcal{X}} + \alpha_{12}^{\mathcal{X}} \, \alpha_{23}^{\mathcal{X}} \right) \gamma_1^{\mathcal{X}} \lambda + 2 \, \alpha_{13}^{\mathcal{X}} \, \alpha_{21}^{\mathcal{X}} - 2 \, \alpha_{11}^{\mathcal{X}} \, \alpha_{23}^{\mathcal{X}}\\
	c_0 &{}= \alpha_{13}^{\mathcal{X}} \, \alpha_{23}^{\mathcal{X}} \, \gamma_1^{\mathcal{X}} \lambda - \alpha_{12}^{\mathcal{X}} \, \alpha_{23}^{\mathcal{X}} + \alpha_{13}^{\mathcal{X}} \, \alpha_{22}^{\mathcal{X}}.
\end{align*}
Let the roots of (\ref{eq:polynomial}) be denoted as $p_\mathcal{X}^i$, $i \in \{1,\ldots,4\}$. If there is any real $p_\mathcal{X}^i \in \left[ 0, c \right]$ for which
\begin{equation}
	\label{eq:check_if_feasible}
	f(p_\mathcal{X}^i) \geq \lambda
\end{equation}
holds, then $\lambda$ is feasible. As a result, it is possible to construct the following bi-section algorithm to find the maximum $\lambda$, denoted as $\lambda^*$.

Firstly, a closed interval for $\lambda^*$ is defined, i.e. $\lambda^* \in \left[ \lambda_{\min} , \lambda_{\max} \right]$, where $\lambda_{\min}$ can usually be defined as
\begin{equation}
	\label{eq:def_lambda_min}
	\lambda_{\min} = \min \left\lbrace f(0), f(c) \right\rbrace.
\end{equation}
Moreover, $\lambda_{\max}$ can be chosen as an upper bound on $\lambda$:
\begin{equation}
	\label{eq:lambda_max}
	\lambda_{\max} = \frac{\log \left(  \max \left\lbrace \alpha_{11}^{\mathcal{X}}/\alpha_{21}^{\mathcal{X}}  , \alpha_{12}^{\mathcal{X}}/\alpha_{22}^{\mathcal{X}},  \alpha_{13}^{\mathcal{X}}/\alpha_{23}^{\mathcal{X}} \right\rbrace \right)  }{ \gamma_2^{\mathcal{X}}}.
\end{equation}
The algorithm finds $\lambda^*$ up to some tolerance $\epsilon > 0$. In the first iteration it is verified if $\lambda^1 = \frac{\lambda_{\max} - \lambda_{\min}}{2}$ is feasible. Therefore, the roots $p_\mathcal{X}^i$, $i \in \{1,\ldots,4\}$ of (\ref{eq:polynomial}) are calculated for $\lambda = \lambda^1$. If any $p_\mathcal{X}^i$ is real, lies in the interval $\left[ 0, c \right]$ and $f(p_\mathcal{X}^i) \geq \lambda^1$ holds, then $\lambda^1$ is feasible and the procedure is repeated for $\lambda^2 = \frac{\lambda_{\max} - \lambda^1}{2}$. Otherwise $\lambda^1$ is infeasible and the procedure is repeated for $\lambda^2 = \frac{\lambda^1 - \lambda_{\min}}{2}$. By construction this algorithm numerically approximates the maximum feasible objective value $\lambda^*$ to arbitrary precision $\epsilon$. This algorithm is formally given by algorithm \ref{alg:bi-section_maximization}.
\begin{algorithm}[H]
 	{\scriptsize{\begin{algorithmic}[1]
 		\State{\textbf{Input: }$\lambda_{\min}$, $\lambda_{\max}$, $c$;}
  		\State{$\epsilon > 0$, $\ell = 0$, $a = \lambda_{\min}$, $b = \lambda_{\max}$;}
  		\Repeat
  		\State{$\ell = \ell+1$;}
  		\State{$\text{isFeasible} =$ \textbf{false};}
  		\State{$\lambda^\ell = \frac{a+b}{2}$;}
  		\State{Calculate roots $p_\mathcal{X}^{i,\ell}$, $i \in \{1,\ldots,4\}$ of (\ref{eq:polynomial}) for $\lambda^\ell$.}
  		\ForAll{$i \in \left\lbrace 1, \ldots, 4 \right\rbrace$}
  			\If{ $\mathrm{Im}\left\lbrace p_\mathcal{X}^{i,\ell} \right\rbrace = 0$ \textbf{and} $p_\mathcal{X}^{i,\ell} \in \left[0, c\right]$ \textbf{and} $f\left(p_\mathcal{X}^{i,\ell}\right) \geq \lambda^\ell$}
  				\State{$\text{isFeasible} =$ \textbf{true};}
  				\State{$p_\mathcal{X} = p_\mathcal{X}^{i,\ell}$;}
  				\State{\textbf{break};}
  			\EndIf
  		\EndFor
  		\If{ $\text{isFeasible}$ }
  			\State{$a = \lambda^\ell$;}
  		\Else
  			\State{$b = \lambda^\ell$;}
  		\EndIf
		\Until{$\text{isFeasible}$ \textbf{and} $\frac{b-a}{2} < \epsilon$}
		\State{\Return{$p_\mathcal{X}^* = p_\mathcal{X}$, $\lambda^* = \lambda^\ell$;}}
		\end{algorithmic}} }
	\caption{{Bi-Section Power Allocation} } \label{alg:bi-section_maximization}
\end{algorithm} 

\MinorRR{
\vspace{-8mm}\section{Proof to Lemma~\ref{lemma_BD_Positive_C}} \label{appendix_lemma_BD_Positive_C}
Let $\mathcal{A}_1:=\left( \ma{Q}_a^{\star},\ma{Q}_b^{\star},\ma{W}_a^{\star}, \ma{W}_b^{\star} \right)$ be a KKT solution for (36). Moreover, let ${\tilde{C}^{\text{BD}}_{ab}} \left( \mathcal{A}_1 \right) < 0$, without loss of generality\footnote{The case assuming ${\tilde{C}^{\text{BD}}_{ba}} \left( \mathcal{A}_1 \right) < 0$ can be argued similarly. Moreover, since KKT conditions are also necessary conditions for any globally optimum solution to (36), due to the differentiable objective with linear constraints, the given proof in this part also subsumes the case when $\mathcal{A}_1$ is a globally optimum solution.}. The proof is obtained via contradiction as follows.
The Lagrangian function, corresponding to the problem (36) is formulated as
\begin{align}    \label{}                                   
\mathcal{L} \Big( \ma{Q}_a,\ma{Q}_b, & \ma{W}_a, \ma{W}_b , \overbar{\ma{Q}_a}, \overbar{\ma{Q}_b},\overbar{\ma{W}_a},\overbar{\ma{W}_b}, \tau_a, \tau_b \Big) = - \text{SEE}_p^{\text{BD}}  + \tau_a \left( P^{\text{BD}}_{A} - P_{A,\text{max}}\right) + \tau_b \left( P^{\text{BD}}_{B} - P_{B,\text{max}}\right) \nonumber \\ 
& - \text{tr}\left( \overbar{\ma{Q}_a} \ma{Q}_a \right) - \text{tr}\left( \overbar{\ma{Q}_b} \ma{Q}_b\right) - \text{tr}\left( \overbar{\ma{W}_a} \ma{W}_a\right) - \text{tr}\left( \overbar{\ma{W}_b} \ma{W}_b\right), \nonumber
\end{align}
where $\tau_a,\tau_b \geq 0$ are slack variables associated with the power constraints, whereas $\overbar{\ma{Q}_a},\overbar{\ma{Q}_b},$ $\overbar{\ma{W}_a},\overbar{\ma{W}_b} \in \mathcal{H}$ are slack variables for dualizing the semidefinite constraints. Since $\mathcal{A}_1$ is a KKT solution, the directional derivative of the Lagrangian function must vanish for any direction at the point $\mathcal{A}_1$. In order to utilize this property, we observe the behavior of the Lagrangian function when $\mathcal{A}_1$ moves over the directions $d \left(\ma{X} \right)$, such that 
{{ \begin{align}                                    
d \left(\ma{X} \right) := \left( - \ma{U}_q \ma{X} \ma{U}_q^H, \ma{0}_{N_B \times N_B},  \ma{U}_q \ma{X} \ma{U}_q^H, \ma{0}_{N_B \times N_B} \right),\;\; \forall \ma{X} \succeq {0}.                                            
\end{align} }}
\hspace{-2mm}In the above definition, $\ma{Q}_a^{\star} = \ma{U}_q\ma{\Lambda}_q\ma{U}_q^H$, with $\ma{\Lambda}_q \in \compl^{r_q \times r_q}$, is the economy-size singular value decomposition\footnote{This choice ensures that the signal space of the movement $\ma{U}_q \ma{X} \ma{U}_q^H$ remains within the space of $\ma{Q}_a^{\star}$, where $r_q$ represents the rank of $\ma{Q}_a^{\star}$. When $\ma{Q}_a^{\star}$ is not rank-deficient, $\ma{U}_q$ can be simply chosen as an identity matrix.}. Now, let $\nabla_{d} f \left(x\right)$ represent the directional derivative of a function $f$ at point $x$ and for the direction $d$. Then, we have
{{ \begin{align}                                    
 \nabla_{d \left(\ma{X} \right)} \mathcal{L}\left(\mathcal{A}_1\right)  & \overset{(a)}{=} {0} , \;\;\;\;\;\; \forall  \ma{X} \succeq {0}, \nonumber \\
{\Rightarrow} \;\;\; \nabla_{d \left(\ma{X} \right)} - \left(\tilde{C}^{\text{BD}}_{ab} \left(\mathcal{A}_1\right) + \tilde{C}^{\text{BD}}_{ba} \left(\mathcal{A}_1\right) \right) & \overset{(b)}{\geq} {0}, \;\;\;\;\;\; \forall  \ma{X} \succeq {0},  \nonumber  \\
{\Rightarrow} \;\;\; \nabla_{d \left(\ma{X} \right)} \text{log}\left| \ma{\Sigma}_{b}^{\text{BD}} \left(\mathcal{A}_1 \right) \right| - \nabla_{d \left(\ma{X} \right)}\text{log}\left| \ma{\Sigma}_{e-a}^{\text{BD}} \left(\mathcal{A}_1 \right)  \right| \;\;\;\;\;\;\;\;\;\;\;\;\;\;\;\;\;\;\;\;\;\;\;\;\;\;\;\;\;\;\;\;\;\;\;\;\;\;\;\;\;\;\;\;\;\;\;\; &                      
 \nonumber \\ - \underbrace{ \nabla_{d \left(\ma{X} \right)} \left( \text{log}\left| \ma{H}_{ab} \ma{Q}_a \ma{H}_{ab}^H + \ma{\Sigma}_{b}^{\text{BD}} \right| -  \text{log}\left| \ma{H}_{ae} \ma{Q}_a \ma{H}_{ae}^H + \ma{\Sigma}_{e-a}^{\text{BD}} \right| \right) }_{= 0} & \overset{(c)}{\geq} {0}, \;\;\;\;\; \forall  \ma{X} \succeq {0},  \nonumber  \\
{\Rightarrow} \;\;\; \nabla_{d \left(\ma{X} \right)}   \text{log}\left| \ma{\Sigma}_{b}^{\text{BD}} \left(\mathcal{A}_1 \right) \right| - \nabla_{d \left(\ma{X} \right)} \text{log}\left| \ma{\Sigma}_{e-a}^{\text{BD}} \left(\mathcal{A}_1 \right) \right| & \; {\geq} \; 0, \;\;\;\;\;\; \forall  \ma{X} \succeq {0}, \nonumber  \\
{\Rightarrow} \;\;\; \text{tr}\left( \ma{U}_q^H \ma{H}_{ab}^H \left( \ma{\Sigma}_b^{\text{BD}} \left(\mathcal{A}_1 \right) \right)^{-1} \ma{H}_{ab} \ma{U}_q  \ma{X}\right) - \text{tr}\left( \ma{U}_q^H \ma{H}_{ae}^H \left( \ma{\Sigma}_{e-a}^{\text{BD}} \left(\mathcal{A}_1 \right)\right)^{-1} \ma{H}_{ae} \ma{U}_q  \ma{X}\right) &\overset{(d)}{\geq} 0, \;\;\;\;\; \forall  \ma{X} \succeq {0}, \nonumber \\
{\Rightarrow} \;\;\; \ma{U}_q^H \ma{H}_{ab}^H \left( \ma{\Sigma}_b^{\text{BD}} \left(\mathcal{A}_1 \right)\right)^{-1} \ma{H}_{ab} \ma{U}_q \;\; {\succeq} \ma{U}_q^H \ma{H}_{ae}^H \hspace{-1mm} \left( \ma{\Sigma}_{e-a}^{\text{BD}} \left(\mathcal{A}_1 \right) \right)^{\hspace{-1mm}-1} \hspace{-1mm}\ma{H}_{ae} \ma{U}_q & .  \label{Lemma_SemiDefiniteInequality}  
\end{align} }} 
\hspace{-2mm}In the above statements, ($a$) holds as $\mathcal{A}_1$ satisfies the KKT conditions. ($b$) follows from the fact that $\nabla_{d \left(\ma{X} \right)} P_{\text{tot}} = 0$, together with the complementary slackness condition, leading to $\nabla_{d \left(\ma{X} \right)} \text{tr}\left( \overbar{\ma{Q}_a} \ma{Q}_a\right) = 0$, and $\nabla_{d \left(\ma{X} \right)} \text{tr}\left( \overbar{\ma{W}_a} \ma{W}_a\right) \geq 0$. ($c$) is obtained by recalling (\ref{C_BD_ab}),~(\ref{C_BD_ba}) and observing the fact that $ \nabla_{d \left(\ma{X} \right)} \tilde{C}^{\text{BD}}_{ba} \left(\mathcal{A}_1\right) =0$, for the case that $\rho = 1$, and $\nabla_{d \left(\ma{X} \right)} \tilde{C}^{\text{BD}}_{ba} \left(\mathcal{A}_1\right) \geq 0$, when $\rho = 0$. Finally, ($d$) follows from the known identities $\text{tr}\left(\ma{A}\ma{B}\right)=\text{tr}\left(\ma{B}\ma{A}\right)$ and $ \partial \text{log}\left| \ma{A} \right| = \text{tr}\left(\ma{A}^{-1} \partial \ma{A} \right)$. 

Now, recalling the initial assumption ${\tilde{C}^{\text{BD}}_{ab}} \left( \mathcal{A}_1 \right) <  0$ yields
{{\begin{align}    \label{lemma_last_contradiction}                                   
\frac{\left| \ma{\Sigma}_{e-a}^{\text{BD}} \left(\mathcal{A}_1 \right)\right|}{\left|  \ma{\Sigma}_b^{\text{BD}} \left(\mathcal{A}_1 \right)  \right|} \overset{(e)}{<}  \frac{\left| \ma{\Sigma}_{e-a}^{\text{BD}} \left(\mathcal{A}_1 \right) + \ma{H}_{ae} \ma{Q}_a^{\star} \ma{H}_{ae}^H \right|}{\left|  \ma{\Sigma}_b^{\text{BD}} \left(\mathcal{A}_1 \right) + \ma{H}_{ab} \ma{Q}_a^{\star} \ma{H}_{ab}^H \right|} & \nonumber \\
& \hspace{-62mm} \overset{(f)}{=}  \frac{\left| \ma{\Sigma}_{e-a}^{\text{BD}} \left(\mathcal{A}_1 \right)\right|}{\left|  \ma{\Sigma}_b^{\text{BD}} \left(\mathcal{A}_1 \right)  \right|} \times \underbrace{ \left( \frac{\left| \ma{I}+ \ma{U}_q^H \ma{H}_{ae}^H\left( \ma{\Sigma}_{e-a}^{\text{BD}} \left(\mathcal{A}_1 \right) \right)^{-1} \ma{H}_{ae} \ma{U}_q \ma{\Lambda}_q  \right|}{\left|  \ma{I} +  \ma{U}_q^H \ma{H}_{ab}^H \left( \ma{\Sigma}_b^{\text{BD}} \left(\mathcal{A}_1 \right)\right)^{-1}\ma{H}_{ab} \ma{U}_q \ma{\Lambda}_q  \right|} \right) }_{\leq 1} \leq \frac{\left| \ma{\Sigma}_{e-a}^{\text{BD}} \left(\mathcal{A}_1 \right)\right|}{\left|  \ma{\Sigma}_b^{\text{BD}} \left(\mathcal{A}_1 \right) \right|},
\end{align} }}
\hspace{-2mm}which leads to a contradiction. In the above inequalities, ($e$) is obtained by incorporating (\ref{C_BD_ab}) in the inequality ${\tilde{C}^{\text{BD}}_{ab}} \left( \mathcal{A}_1 \right) <  0$ and ($f$) is obtained by recalling (\ref{Lemma_SemiDefiniteInequality}) and employing the matrix identity $\left|\ma{I} + \ma{A}\ma{B}\right| = \left|\ma{I} + \ma{B}\ma{A} \right|$, and the fact that $\left|\ma{I} + \ma{A}\right| \geq \left|\ma{I} + \ma{B} \right|$ for any $\ma{A} \succeq \ma{B}$.} 
%
%
%
%
\section{Proof to Lemma~\ref{lemma_SSUM_Convergence}}  \label{appendix_Lemma_SSUM_convergence}
\subsection{Proof of tightness:}
Tightness is obtained by observing the equivalence
{{\begin{subequations} \label{appendix_SSUM_lemma_tigtness}
\begin{align}                                   
& |\mathbb{F}_{{C}} |  P_{\text{tot}}\left( \mathbb{Q}^\star \right) \text{SAA} \left( \mathbb{Q}^\star \right) =   \sum_{i \in \mathbb{F}_C}  \{ \tilde{C}_{\text{s},i} \left( \mathbb{Q}^\star \right)  \}^+   
\overset{(g)}{=} \sum_{i \in \mathbb{G}_{C_1}} \{ \tilde{C}_{\text{s},i} \left( \mathbb{Q}^\star \right) \}^+ + \sum_{i \in \mathbb{G}_{C_2^+}} \tilde{C}_{\text{s},i} \left( \mathbb{Q}^\star \right)    \label{appendix_SSUM_lemma_tigtness_a} \\
& \overset{(h)}{=} \sum_{i \in \mathbb{G}_{C_1}} \{ \hat{C}_{\text{s},i} \left( \mathbb{Q}^\star,  \mathbb{Q}^\star \right) \}^+ + \sum_{i \in \mathbb{G}_{C_2^+}} \hat{C}_{\text{s},i} \left( \mathbb{Q}^\star ,  \mathbb{Q}^\star \right) 
=  |\mathbb{G}_{{C}} |  P_{\text{tot}}\left( \mathbb{Q}^\star \right) \text{SAA}_{LB} \left(\mathbb{Q}^\star,  \mathbb{Q}^\star \right), \nonumber 
\end{align} 
\end{subequations} }}
where ($g$) is obtained by applying (\ref{SSUM_set_def_C_2_+}), and ($h$) from $\tilde{C}_{\text{s},i} \left( \mathbb{Q}^\star\right) = \hat{C}_{\text{s},i} \left( \mathbb{Q}^\star,  \mathbb{Q}^\star \right)$, see (\ref{eq_SEE_max_SEE_Taylor}).  \vspace{-4mm} 
\vspace{-5mm}\subsection{Proof of equal directional derivative:}
Let $C_{\text{s},i} := \{ \tilde{C}_{\text{s},i} \}^+$. The directional derivative of $\text{SAA}$ at $\mathbb{Q}^\star$ is then expressed as 
{{\begin{subequations}\label{directional_derivative_lemma_ssum}
\begin{align}                                   
&  P_{\text{tot}}\left( \mathbb{Q}^\star \right) {\MinorRR{ \nabla_{d}}} \text{SAA} \left( \mathbb{Q}^\star \right)  \nonumber \\
& {=}  \Big( \sum_{i \in \mathbb{G}_{C_1}} {\MinorRR{ \nabla_{d}}} C_{\text{s},i} \left( \mathbb{Q}^\star \right)    + \sum_{i \in \mathbb{G}_{C_2^+}} {\MinorRR{ \nabla_{d}}} \tilde{C}_{\text{s},i} \left( \mathbb{Q}^\star \right) \Big)/|\mathbb{G}_{{C}} |  - {\MinorRR{ \nabla_{d}}} P_{\text{tot}} \left( \mathbb{Q}^\star \right) \text{SAA}\left( \mathbb{Q}^\star \right)  \label{directional_derivative_lemma_ssum_a} \\
& =  \Big( \sum_{i \in \mathbb{G}_{C_1^{(d)}}} {\MinorRR{ \nabla_{d}}} \hat{C}_{\text{s},i} \left(\mathbb{Q}^\star, \mathbb{Q}^\star \right)    + \sum_{i \in \mathbb{G}_{C_2^+}} {\MinorRR{ \nabla_{d}}} \hat{C}_{\text{s},i} \left(\mathbb{Q}^\star, \mathbb{Q}^\star  \right) \Big)/|\mathbb{G}_{{C}} |   - {\MinorRR{ \nabla_{d}}} P_{\text{tot}} \left( \mathbb{Q}^\star \right) \text{SAA}_{LB}\left( \mathbb{Q}^\star, \mathbb{Q}^\star \right) \label{directional_derivative_lemma_ssum_b}   \\
& =  P_{\text{tot}}\left( \mathbb{Q}^\star \right) {\MinorRR{ \nabla_{d}}} \text{SAA}_{LB}\left( \mathbb{Q}^\star, \mathbb{Q}^\star \right),  \label{directional_derivative_lemma_ssum_c}
\end{align} 
\end{subequations} } }
where the set $\mathbb{G}_{C_1^{(d)}}$ is defined as 
\begin{align}                                 
\mathbb{G}_{C_1^{(d)}} := \left\{  \forall i \;\; \vert \;\; i \in {\mathbb{G}_{{C_1}}} \;\;  \text{and}  \;\; {\MinorRR{ \nabla_{d}}} C_{\text{s},i} \left( \mathbb{Q}^\star \right) \neq 0  \right\}.
\end{align} 
In the above arguments, (\ref{directional_derivative_lemma_ssum_a}) is obtained by recalling (\ref{Eq_SSUM_SAA}), and the fact that ${C}_{\text{s},i} \left( \mathbb{Q}^\star\right)$ is positive and differentiable for any $i \in \mathbb{G}_{C_2^+}$. The identity (\ref{directional_derivative_lemma_ssum_b}) is obtained by considering the possible situations for ${\MinorRR{ \nabla_{d}}} C_{\text{s},i} \left( \mathbb{Q}^\star \right)$: 
\begin{itemize}
\item $\tilde{C}_{\text{s},i} \left( \mathbb{Q}^\star \right) < 0 .$ Then, $C_{\text{s},i}$ is differentiable and ${\MinorRR{ \nabla_{d}}} C_{\text{s},i} \left( \mathbb{Q}^\star \right) = 0$ for any direction $d$.
\item $\tilde{C}_{\text{s},i} \left( \mathbb{Q}^\star \right) > 0 .$ Then, $C_{\text{s},i}$ is differentiable and ${\MinorRR{ \nabla_{d}}} C_{\text{s},i} \left( \mathbb{Q}^\star \right) = {\MinorRR{ \nabla_{d}}} \hat{C}_{\text{s},i} \left( \mathbb{Q}^\star, \mathbb{Q}^\star \right)$, $\forall d$.
\item $\tilde{C}_{\text{s},i} \left( \mathbb{Q}^\star \right) = 0 $ and ${\MinorRR{ \nabla_{d}}} \tilde{C}_{\text{s},i} \left( \mathbb{Q}^\star \right) > 0$. Then, $C_{\text{s},i}$ is not differentiable and ${\MinorRR{ \nabla_{d}}} C_{\text{s},i} \left( \mathbb{Q}^\star \right) = {\MinorRR{ \nabla_{d}}} \hat{C}_{\text{s},i} \left( \mathbb{Q}^\star, \mathbb{Q}^\star \right)$. 
\item $\tilde{C}_{\text{s},i} \left( \mathbb{Q}^\star \right) = 0 $ and ${\MinorRR{ \nabla_{d}}} \tilde{C}_{\text{s},i} \left( \mathbb{Q}^\star \right) \leq 0$. Then, $C_{\text{s},i}$ is not differentiable and ${\MinorRR{ \nabla_{d}}} C_{\text{s},i} \left( \mathbb{Q}^\star \right) = 0$. 
\end{itemize}
Finally, the identity (\ref{directional_derivative_lemma_ssum_c}) is obtained by recalling (\ref{SSUM_SAA_LB}), and the tightness property from (\ref{appendix_SSUM_lemma_tigtness}).


%

\vspace{-7mm}


\end{document}